\documentclass[aps,prx,twocolumn, notitlepage,longbibliography,superscriptaddress,nofootinbib]{revtex4-2}
\usepackage{amsmath}
\usepackage{amsthm}
\usepackage{thm-restate}
\usepackage{amsfonts}
\usepackage{amssymb}
\usepackage{dsfont}
\usepackage{graphicx}
\usepackage{physics}
\usepackage{tikz}
\usepackage{hyperref}
\usepackage{mathtools}
\usepackage{comment}
\usepackage{multirow}
\usepackage{ulem}
\usepackage{pst-node}
\usepackage{tikz-cd}
\usepackage[shortlabels]{enumitem}

\usepackage[english]{babel}

\newtheorem{theorem}{Theorem}
\newtheorem{corollary}{Corollary}[theorem]

\newtheorem{lemma}{Lemma}

\usepackage{pst-all}
\usepackage{leftidx}

\def\H{\mathcal{H}}
\def\M{\mathcal{M}}
\def\Z{\mathbb{Z}}

\def\L{\mathcal{L}}
\def\SS{\mathcal{S}}
\DeclareMathOperator*{\Motimes}{\text{\raisebox{0.25ex}{\scalebox{0.8}{$\bigotimes$}}}}

\newcommand{\RR}{\mathbb{R}}

\newcommand{\ZZ}{\mathbb{Z}}

\newcommand{\be}{\begin{equation}}
\newcommand{\ee}{\end{equation}}
\newcommand{\bc}{\begin{center}}
\newcommand{\ec}{\end{center}}
\newcommand{\nin}{\noindent}

\newcommand{\non}{\nonumber}
\newcommand{\lo}{\overline}

\definecolor{dualblue}{RGB}{3,101,192}


\usepackage[bottom]{footmisc}

\begin{document}

\title{Non-Clifford and parallelizable fault-tolerant logical gates on constant and almost-constant rate homological quantum LDPC codes via higher symmetries}


\author{Guanyu Zhu}
\email{guanyu.zhu@ibm.com}
\affiliation{IBM Quantum, T.J. Watson Research Center, Yorktown Heights, NY 10598 USA}

\author{Shehryar Sikander}
\affiliation{New High Energy Theory Center, Rutgers University, New Brunswick, NJ 08854-8019, USA}

\author{Elia Portnoy}
\affiliation{Mathematics Department, Massachusetts Institute of Technology, Cambridge, MA 02139, USA}

\author{Andrew W. Cross}
\affiliation{IBM Quantum, T.J. Watson Research Center, Yorktown Heights, NY 10598 USA}

\author{Benjamin J. Brown}
\affiliation{IBM Quantum, T.J. Watson Research Center, Yorktown Heights, NY 10598 USA}

\begin{abstract}
We study parallel fault-tolerant quantum computing for families of homological quantum low-density parity-check (LDPC) codes defined on 3-manifolds with constant or almost-constant encoding rate.
We derive generic formula for a transversal $T$ gate on color codes defined on general 3-manifolds, which acts as collective non-Clifford logical CCZ gates on any triplet of logical qubits with their logical-$X$ membranes having a $\mathbb{Z}_2$ triple intersection at a single point.  The triple intersection number is a topological invariant, which also arises in the path integral of the emergent higher symmetry operator in a topological quantum field theory (TQFT):  the $\mathbb{Z}_2^3$ gauge theory.   Moreover, the transversal $S$ gate of the color code corresponds to a higher-form symmetry in TQFT supported on a codimension-1 submanifold, giving rise to exponentially many addressable and parallelizable logical CZ gates.  A construction of constant-depth circuits of the above logical gates via cup product cohomology operation is also presented for three copies of identical  toric codes on arbitrary 3-manifolds.  We have developed a generic formalism to compute the triple intersection invariants for 3-manifolds, with the structure encoded into an  interaction hypergraph which determines the logical gate property and also corresponds to the hypergraph magic state that can be injected into the code without distillation (`\textit{magic state fountain}'). We also study the scaling of the Betti number and systoles with volume for various 3-manifolds, which translates to the encoding rate and distance. We further develop three types of LDPC codes supporting such logical gates: (1) A quasi-hyperbolic code from the product of 2D hyperbolic surface and a circle, with almost-constant rate $k/n=O(1/\log(n))$ and $O(\log(n))$ distance; (2) A homological fibre bundle code from twisting the product by an isometry of the surface based on the construction by Freedman-Meyer-Luo, with  $O(1/\log^{\frac{1}{2}}(n))$ rate and $O(\log^{\frac{1}{2}}(n))$ distance; (3) A specific family of 3D hyperbolic codes: the Torelli mapping torus code, constructed from mapping tori of a pseudo-Anosov element in the Torelli subgroup, which has constant rate while the distance scaling is currently unknown.   We then show a generic constant-overhead scheme for applying a parallelizable universal gate set with the aid of logical-$X$ measurements.

\end{abstract}

\maketitle

\tableofcontents


\section{Introduction}

Rapid progress in quantum low-density parity-check (qLDPC) codes has been made in recent years to achieve asymptotically optimal quantum information storage \cite{fiberbundlecode21,9567703,9490244,hastingswr21,pkldpc22,9996782,lh22,guefficient22,dhlv23,lzdecoding23,gusingleshot23}, including reaching constant encoding rate and linear code distance. On the other hand, resource-efficient fault-tolerant quantum computation should use  not only an optimal quantum memory, but also fast schemes to implement logical gates with low overhead. In addition to being fast, logic gates should also be applicable in parallel.  Although there exists a scheme with constant space overhead \cite{gottesman14}, it only allows the sequential application of logic gates and is therefore relatively slow in practice.  Therefore, parallel schemes for implementing universal fault-tolerant logic gates are especially desirable.

Indeed, the quest of parallel schemes is quite challenging  since qLDPC codes typically are composed of a single or constant number of code blocks in order for the encoding rate to be constant. This means that it can be difficult to address logical qubits individually. { For instance, 
conventional schemes using global or fold  transversal logical gates \cite{Kubica:2015br, Moussa:2016, Zhu:2017tr, breuckmann2022fold, Quintavalle:2022fold, Bravyi:2024wc} tend to induce collective logic gates to many logical qubits in a code block.  With the exception of topological codes that encode only a small number of logical qubits \cite{Kubica:2015br, Bombin:2015jk, Moussa:2016, Zhu:2017tr, bombin2018transversal}, few, or no examples are known for transversal gates can address individual logical qubits for codes that encode a large number of logical qubits in a single code block. } Measurement-based schemes with individual addressability based on logical measurements have been proposed, but these schemes use additional ancilla qubits with $O(d^2)$ overhead over a long time, where $d$ is the code distance \cite{cohen22}. A parallelizable scheme has recently been proposed using constant-rate concatenated codes~\cite{yamasaki2022timeefficient}. However, this scheme uses codes with high-weight stabilizers and therefore goes beyond the scope of qLDPC codes. It remains an open question how to do individually addressable and parallelizable logical gates with qLDPC codes, without introducing additional overhead. 

Another fundamental challenge is that most of the recently proposed qLDPC codes are based on a 3-term chain complex which can be considered as a generalization of 2D surface codes, and are hence only expected to perform logical Clifford gates.  In particular, a no-go theorem for logical non-Clifford gates has been given for the family of hypergraph-product codes achieving constant rate \cite{burtonbrowne21}.   Therefore, one has to perform a magic state distillation process \cite{bravyi2005} in order to achieve universality which will lead to significant resource overhead. It is hence desirable to acquire a native logical non-Clifford gate on qLDPC codes with $O(1)$ space-time overhead, such that one is able to directly implement these logical gates or efficiently inject magic states without extensive rounds of distillation.  Note that there exists a previous proposal for applying a parallelizable universal logical gate set (including non-Clifford gates) on constant-rate quantum code with constant-depth circuits corresponding to Dehn twists \cite{Lavasani2019universal} on 2D hyperbolic Fibonacci-Turaev-Viro codes \cite{levin2005, Koenig:2010do}.  However, these codes correspond to non-Abelian topological orders beyond the stabilizer models and hence need more sophisticated schemes for error correction and decoding \cite{schotte2020quantum, Schotte:2023Fault} which anticipates further optimization.  Instead, our current work focuses on fault-tolerant computation schemes with the CSS qLDPC codes belonging to the stabilizer models for which the error correction schemes are more well-developed.

In this paper, we develop a generic scheme for homological qLDPC codes defined on closed manifolds in three dimensions (abbreviated as 3-manifolds from now on).    Such codes are naturally associated with a 4-term chain complexes that evade the no-go theorem in Ref.~\cite{burtonbrowne21}.    We use a \textit{fattening} procedure by Bombin and Delgado in Ref.~\cite{Bombin2007} to convert any triangulation of a 3-manifold into a 4-valent and 4-colorable lattice (see also Ref.~\cite{Vuillot2022}).  Therefore, we are able to define a 3D color code on any 3-manifold with this procedure.  We then derive that a transversal T gate on an arbitrary 3-manifold color code implements a collective logical CCZ gates acting on any triplet of logical qubits if and only if their logical-$X$ membranes supported on 2-cycles of the manifold have $\ZZ_2$ triple intersection number 1 (mod 2).   In addition, a transversal $S$ gate in the color code supported on a codimension-1 (2D) submanifold (2-cycle) corresponds to Clifford logical CZ gate acting on two logical qubits of which the logical-$X$ membranes have a $\ZZ_2$ triple intersection with the submanifold where the transversal gate is supported.   Therefore, the logical gate structure is completely determined by the $\ZZ_2$ triple intersection numbers  (either 0 or 1), which are  topological invariants and hence independent of geometric details.   Note that since the transversal $S$ gate is only applied on a submanifold, instead of the entire 3-manifold, the corresponding logical CZ gates have individual addressability and can also be applied in parallel due to their mutual commutativity. 

We note that the number of addressable logical Clifford gates scales exponentially as $|H_2(\M^3;\ZZ_2)| = 2^{b_2(\M^3;\ZZ_2)}=2^k$, where $|H_2(\M^3;\ZZ_2)|$ is the size of the second $\ZZ_2$-homology group of the 3-manifold $\M^3$ and counts the total number of equivalent classes of non-contractible membranes (2-cycles) in $\M^3$.  The independent generator of this gate set is proportional to the second Betti number $b_2(\M^3;\ZZ_2)$ equaling the number of logical qubit $k$. This is in sharp contrast and a significant improvement compared to the situation of the fold-transversal gate acting on the entire system \cite{Kubica:2015br, Moussa:2016, Zhu:2017tr, breuckmann2022fold, Quintavalle:2022fold}, which has been recently applied to the context of qLDPC code \cite{breuckmann2022fold, Quintavalle:2022fold}. Since the fold-transversal gate always involves an isometry of the underlying lattice, the total number of addressable logical gates can be applied is propotional to the size of the isometry group.  In the context of the hyperbolic codes,  the size of the isometry group is at most $O(n)$, where $n$ is the number of qubits equaling the number of edges on the hyperbolic lattice.   Due to the constant rate,  one expects at most $O(n)=O(k)$ addressable logical gates which is much smaller than the exponential scaling of the size of the Clifford group and is hence not scalable. Similar scaling is expected for fold-transversal gates on other qLDPC codes.

\begin{figure}[t] \includegraphics[width=1\columnwidth]{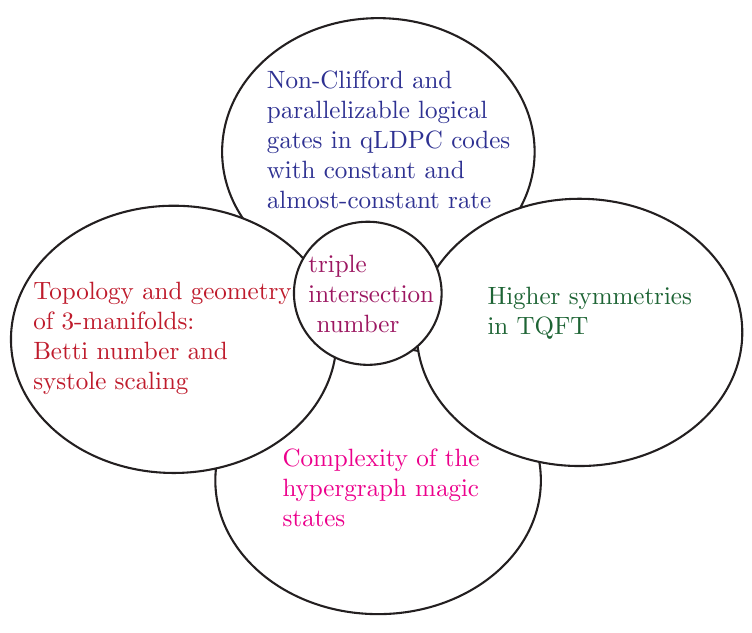}
  \caption{Summary of the connection between various subjects in this paper. }
  \label{fig:van-diagram}
\end{figure}

Since the logical gate structure is completely topological, we further explore its deep origin from (3+1)D topological quantum field theory (TQFT), more specifically a $\ZZ_2 \times \ZZ_2 \times \ZZ_2$ gauge theory equivalent to three copies of 3D toric codes following the formalism developed in Ref.~\cite{10.21468/SciPostPhys.14.4.065}.  We then see that the logical CCZ and CZ operators correspond to emergent \textit{higher symmetries}  \cite{gaiotto2014,benini2019,cordova2022,mcgreevy2022, bhardwaj2022universal, 10.21468/SciPostPhys.14.4.065} in TQFT, which has been a very active direction in the fields of high-energy and condensed matter physics in recent years. In a $(D+1)$-dimensional TQFT, one can think of applying a non-trivial symmetry operation on a $(D-k+1)$-dimensional submanifold  by sweeping a codimension-$k$ defect along some closed trajectory in time. A special class of defects that are invertible define a higher group symmetry. Invertible codimension-$k$ defects define a $(k-1)$-form symmetry \cite{gaiotto2014}.  In a $(D+1)$-dimensional TQFT, this leads to a series of higher groups $K_k$, for $k = 1,\dots, D$.  This higher group structure has a correspondence with the Clifford hierarchy in quantum information in various cases \cite{Bravyi:2013dx, Bombin_2015, Yoshida_global_symmetry_2016, Yoshida2017387, bombin2018transversal, 10.21468/SciPostPhys.14.4.065, barkeshli2022higher, Vasmer:2022morphing}.

In the (3+1)D case,  the invertible defects are expected to define a categorical 3-group symmetry \cite{Hsin2020liquid,kapustinThorngren2017FermionSPT}.  The invertible codimension-$k$ defects corresponding to logical CCZ and CZ operators \cite{Bombin:2015jk,   bombin2018transversal, bombin20182d, Vasmer2019} are so-called gauged symmetry-protected topological (SPT) defects \cite{Yoshida_gate_SPT_2015, Yoshida_global_symmetry_2016, Yoshida2017387,   WebsterBartlett18, 10.21468/SciPostPhys.14.4.065, barkeshli2022higher}, which can be obtained from gauging a trivial phase with a SPT phase decorated on a codimension-$(k-1)$ submanifold. Specifically, the logical CCZ operator is a $0$-form symmetry corresponding to sweeping a codimension-1 gauged $\ZZ_2 \times \ZZ_2 \times \ZZ_2$ SPT defect, while the logical CZ operator is a $1$-form symmetry corresponding to sweeping a codimension-2 gauged $\ZZ_2 \times \ZZ_2$ SPT defect.  When these symmetries are applied on a generic 3-manifold, we see that the triple intersection topological invariants   naturally arise in the TQFT path integral of these symmetry operators which completely determines the logical gate structure independent of the microscopic details of the lattice models, i.e., whether they are a single 3D color code or 3 copies of 3D toric codes or some other variants. While the 0-form symmetry acts on the entire system just like the usual transversal logical gate and hence induces a collective non-Clifford gate, the 1-form (higher-form) symmetry only acts on a codimension-1 submanifold, giving rise to individually addressable and parallelizable logical Clifford gates.  We note that although the idea of higher-group and higher-form symmetries in TQFT has been connected to quantum information in previous works \cite{Yoshida_gate_SPT_2015, Yoshida_global_symmetry_2016, Yoshida2017387} and has also been applied to fault-tolerant quantum computing with fractal topological codes \cite{PRXQuantum.3.030338}, this is,  as far as the authors know, the first work which finds significant advantage of applying higher symmetries in the context of  { high-rate} quantum LDPC codes.

{
Moreover, we use an operator-valued cochain formalism orignated from the TQFT to explicitly construct a constant-depth circuit composed of overlapping CCZ gates acting on three identical copies of toric codes defined on arbitrary triangulation on arbitrary 3-manifolds from a specific type of cohomology operation in algebraic topology: the triple cup product.   We then use the operator commutation relations, the Leibniz rule of cup product, and the cocycle condition to show that such a circuit is indeed a logical gate which preserves the code space.   We further show that this circuit induces a logical CCZ gate coupling a triple of logical qubits in three copies of identical toric codes  when the corresponding logical-X membranes have a non-trivial triple intersection.   Similarly, we derive the addressable logical CZ gate between two copies of toric codes.  More elaborated study and application of this formalism on a much wider class of logical gates are discussed in the follow-up papers \cite{zhu2025topological, Hsin2024:classifying}.
}

Although topology determines the logical gate structure, it is the geometry of the 3-manifold which determines code parameters in the corresponding homological qLDPC code.   More specifically, the 1st Betti number (equaling the 2nd Betti number in 3D due to Poincaré duality) determines the number of logical qubits, and its scaling with the volume determines the encoding rate. On the other hand, the 1-systole and 2-systole (the length of the shortest non-contractible 1-cycles and 2-cycles) determine the code distance.  While for 2D hyperbolic manifolds, the Gauss-Bonnet theorem gives rise to a linear scaling of the Betti number with the area (volume) and hence constant encoding rate, such constraint does not hold generically for hyperbolic manifolds in higher dimension.   Instead, there is only an upper bound for the Betti number / volume ratio due to Gromov's inequality \cite{gromov1982volume}.   We say 3-manifolds that saturate Gromov's upper bound while having systoles growing with volume reach the optimal quantum information storage per volume.  Finding such 3-manifolds or those close to them is of importance for fault-tolerant quantum computation with LDPC codes.   

We then construct various 3-manifolds and the corresponding homological qLDPC codes including their color code variants.   We also compute or bound their Betti number and systoles, and  develop a formalism to compute the triple intersection numbers.   In particular, the $\ZZ_2$ triple intersection invariants of a 3-manifold  determine its underlying interaction hypergraph and graph encoding the structure of the collective logical CCZ gates and parallelizable logical CZ gates respectively.   More specifically, each hyperedge of the interaction hypergraph connects to three vertices representing logical qubits of which the logical-$X$ membranes have a $\ZZ_2$ triple intersection.  Meanwhile, edges with the same color type in the colored interaction graph represents logical CZ gates which are simultaneously applied.     

The interaction hypergraph also has a one-to-one correspondence with the hypergraph magic state \cite{Rossi_2013, chen2023magic, Takeuchi:2019_hypergraph} injected to the code, which can be considered as generalization of graph states.  Such logical states are valuable computational resources for magic state distillation and injection protocol either in the context of gate-based or measurement-based computation scheme at the logical level  \cite{bravyi1998, chen2023magic, Takeuchi:2019_hypergraph}.  In particular, the transversal T gate applied on these homological LDPC codes provides a high-fidelity magic state injection with the fidelity directly corresponding to the code distance.  We also characterize the complexity of such hypergraph magic states by the number of logical CCZ gates acted on the code and show that it is proportional to the total number of $\ZZ_2$ triple intersection points of the underlying 3-manifold.  We hence observe a deep connection between the complexity of quantum states and the topology/geometry of manifolds.  

In this paper, we have constructed three different families of 3-manifolds and homological qLDPC codes.   The first type is the quasi-hyperbolic code defined on a product of a 2D arithmetic hyperbolic genus-$g$ surface $\Sigma_g$ \cite{Guth:2014cj} with a circle $S^1$, i.e., $\Sigma_g \times S^1$.  The circle is chosen to have $O(\log V)$ length where $V$ is the  volume.   Due to the Gauss-Bonnet theorem for the 2D hyperbolic surface, one can obtain the 1st Betti number - volume scaling as $b_1 / V = O(\log V)$, equivalent to an encoding rate $k/n= O(1/\log n)$, while the systoles have a lower bound $\Omega(\log V)$ corresponding to a distance lower bound $d =\Omega(\log n)$.   The $O(\log n)$ qubit overhead comes from the length of the circle.  There are $g=O(k)$  triple intersection points. The interaction hypergraph and graph in this case can be completely determined.  

As an attempt to resolve the $\log(n)$ overhead, we consider the second type of codes based on a fibre bundle construction (also a mapping torus in this special case) where we choose the genus-$g$ hyperbolic surface as the fibre and a circle as the base, following the construction by Freedman-Meyer-Luo \cite{Freedman_systole_2002} \footnote{It is based on the 3-manifold they constructed before doing the Dehn surgery to kill most of the homology cycles.}. We call these codes homological fibre bundle codes. In order to avoid space overhead, we choose the base circle to be unit length.  In order to avoid reducing the 1-systole to a constant, we apply a twist on the fibre surface, which is an isometry with order $O(\log^\frac{1}{2}(g)) = O(\log^\frac{1}{2}(V))$, such that a 1-cycle has to travel $O(\log^\frac{1}{2}(V))$ times through the base to come back to itself.   We can then prove a lower bound on both the 1-systole and 2-systole, being $\Omega(\log^\frac{1}{2}(V))$ and $\Omega(\log(V))$ respectively, which leads to the code distance being  $d=\Omega(\log^\frac{1}{2}(n))$.  Nevertheless, the twist also kills some homologies and reduces the 1st Betti number from $O(g)$ to $O(g/\log^\frac{1}{2}(g))=O(V/\log^\frac{1}{2}(V))$, namely with a square root logarithmic reduction.  This leads to the encoding rate reducing to $k/n = O(1/\log^\frac{1}{2}(n))$.   There exist  $O(g/\log^\frac{1}{2}(g))=O(k/\log^\frac{1}{2}(k))$ triple intersection points and the corresponding interaction hypergraph is isomorphic to that of the quasi-hyperbolic code up to a reduction in size by $O(1/\log^\frac{1}{2}(k))$.  Although also being almost-constant rate, we think this code could be more useful since its lattice can be embedded in constant number layers connected via twisted vertical links without any intra-layer crossing.  

In order to avoid the problem of Betti number reduction, we consider a more general mapping torus construction such that the twist is a general mapping class instead of a finite-order isometry.  In particular, we obtain a family of 3D hyperbolic codes based on mapping tori of a pseudo-Anosov element in the Torelli subgroup, following the 3-manifold construction in Ref.~\cite{AgolLeiningerMargalit}, which we name the Torelli mapping-torus code.    The underlying 3-manifold has the 1st Betti number scaling linear with the volume \cite{KojimaMcShane} and hence saturates the Gromov bound, which leads to a  constant encoding rate.   However, the scaling of the systoles and code distance remain unknown.  We conjecture that the systoles/distance should also grow with the volume due to the twist of the fibre surface, in analogy with the case of the homological fibre-bundle code mentioned above.  There exist $g=O(k)$ triple intersection points and the corresponding interaction graph is isomorphic to that of the quasi-hyperbolic codes.  

We then provide generic schemes for performing parallel fault-tolerant universal computation with  homological qLDPC codes based on 3-manifolds and demonstrate the scheme using the interaction hypergraph corresponding to all the three codes mentioned above. In addition to the logical CZ and CCZ gates, we also implement parallel measurement of the logical-$X$ operator based on lattice surgery \cite{lavasani2018} to perform parallel logical Hadamard, CNOT and SWAP gates. With the logical SWAP, one can shuffle an arbitrary triplet of logical qubits to an ancilla qLDPC code copy in order to perform addressable logical CCZ gates.     The combination of logical CCZ gates and logical Hadamard gates can form a universal logical gate set.   In the specific case of the quasi-hyperbolic code, we further enhance the parallelizability by thickened Dehn twists corresponding to logical CNOT gates. 

Finally, we notice that the parallelizability of the logical gates is limited by the star-like interaction hypergraph of the three types of codes constructed in this work, while constructing a code and equivalently a 3-manifold corresponding to a sparse interaction hypergraph can significantly improve the parallelizability.   We hence provide a general recipe to back-engineer a 3-manifold with a desired interaction hypergraph following Sullivan's construction \cite{Sullivan},  while the geometry and code parameters anticipate further optimization in future.  
The connections of various subjects are summarized in Fig.~\ref{fig:van-diagram}.

The structure of the paper is the following:  Sec.~\ref{sec:non-Clifford} shows the construction of color codes and surface codes on arbitrary 3-manifolds and derive the transversal T and S gates as the collective logical CCZ and parallelizable logical CZ gates respectively, as well as the connection to the triple-intersection number.  One also introduces the interaction hypergraph which encodes the logical gate structure and the connection to hypergraph magic state injection. Sec.~\ref{sec:TQFT} connects the logical gates to emergent $k$-form symmetries in the topological quantum field theory and reveals the origin of the triple intersection invariant from the path integral expression using an operator-valued cochain formalism. Mathematically, this just corresponds to cohomology operation in algebraic topology. We also use this formalism to explicitly construct a constant-depth circuit composed of CCZ or CZ gates in three identical copies of toric codes and show that they lead to collective logical CCZ gates and addressable logical CZ gates.  Sec.~\ref{sec:scaling} connects the scaling of the code parameters of the homological qLDPC codes such as the encoding rate and codes distance to the geometry of the underlying manifolds, i.e., the Betti number and systole scaling with the volume. Sec.~\ref{sec:code_construction} gives the explicit construction of the three types of homological qLDPC codes with constant or almost-constant encoding rate.  Sec~\ref{sec:application} shows how to perform parallel universal fault-tolerant quantum computation with the non-Clifford and parallelizable logical gates acting on the homological qLDPC codes constructed in the previous sections.  Additional logical operations including logical measurements and thickened Dehn twists are needed to achieve the universality and enhance the parallelizability.    The paper is concluded by Sec.~\ref{sec:outlook} with the discussion of the main results and outlook of future directions.     { Finally, we also discuss how to back-engineer the 3-manifolds from desired logical gate structure in  Appendix \ref{sec:back-engineering}, where the discussion focuses on the topology rather than the geometry of the 3-manifolds.}

\section{Non-Clifford and parallelizable  logical gates of color codes  on 3D manifolds}\label{sec:non-Clifford}

{
In this section,  we develop the theory of non-Clifford and parallelizable logical gates in the context of 3D color codes defined on 3-manifolds, while we reserve the equivalent description of three copies of 3D toric codes to Sec.~\ref{sec:TQFT}.  

We first introduce the construction of color codes and toric codes on general 3-manifolds in Sec.~\ref{sec:color_and_toric}, and connect the code space to the homology group and Betti numbers.   We then derive a generic formula for the collective logical CCZ gates implemented by the transversal $T$ gate and its connection to the triple-intersection invariant in the underlying 3-manifold in Sec.~\ref{sec:non-Clifford_triple}.   Next, we derive the generic formula for addressable and parallelizable logical CZ gates implemented by transversal S gates supported on a submanifold in Sec.~\ref{sec:parallelizable}, where we have also provided the formula connecting the total number addressable logical CZ gates to the 1st Betti number in the underlying 3-manifold.  Finally, in Sec.~\ref{sec:interaction_hypergraph},  we introduce the interaction hypergraph to help understanding the structure of the logical non-Clifford gates.  We further introduce the single-shot injection protocol of hypergraph magic states, which can also be represented by the interaction hypergraph.   
}

\subsection{Color codes and toric codes on 3D manifold}
\label{sec:color_and_toric}
We can define a $D$-dimensional color code on some arbitrary $D$-manifold $\mathcal{M}^D$~\cite{Bombin_2015} that is equivalent to $D$ copies of the $D$-dimensional toric-code model on the same manifold up to a constant-depth local circuit~\cite{Kubica:2015br}.  Let us review { how to obtain} the color-code model from an arbitrary cellulation of $\mathcal{M}^D$.

First, we define a color code \cite{Bombin:2007eh}. We follow the definition due to Ref.~\cite{Kubica2015} where we are only concerned with closed manifolds with no boundaries. We will state the model for general $D$-dimensional manifolds but we note that we are mainly concerned with $D=3$.

Here it will be convenient to define the color code on its dual lattice with a triangulation  composed of tetrahedra $\Delta$ where a tetrahedron is specified by a list of $j+1$ vertices lying on the extremities of its convex hull, i.e., $\Delta = \{ v_1, v_2,\dots, v_j \}$. The lattice is $D+1$-colourable such that no two vertices of the same colour share an edge. We show how to construct such a lattice shortly, but for the present definition we will assume that such a lattice can exist. Qubits are assigned to the $D$-cells of the $D+1$-colourable lattice, and Pauli-X (Pauli-Z)-type stabilizers are associated to the 0- ($(D-2)$-) cells of the color-code lattice. Let us denote the 0-cells ($(D-2)$-cells) as $v$ ($\delta$), such that we have stabilizers
\begin{equation}
S^X_v = \prod_ {\Delta \supset v} X_\Delta, 
\quad
S^Z_\delta = \prod_{\Delta \supset \delta} Z_\Delta, 
\end{equation}
where $X_\Delta$ ($Z_\Delta$) denotes a Pauli-X (Pauli-Z) operator acting on the qubit at $D$-cell $\Delta$, and we have used the subscript notation $\Delta \supset v$ ($\Delta \supset \delta$) to denote the product over all qubits $\Delta = \{u_1, u_2,\dots, u_{D+1}\}$ that include all the vertices of simplex $v = \{v\}$ ($\delta = \{ v_1, \dots, v_{D-1}\} $). We note that in general we can define Pauli-X and Pauli-Z stabilizers with respect to $x$- and $z$-cells, $\delta_x$ and $\delta_z$, respectively, with $x+ z = D-2$~\cite{Bombin_2015, Kubica2015}, but for our purposes we can concentrate on $x = 0$ and $z = 1 $ for the $D=3$ case.

We can produce a color-code lattice $\L_c$ given some arbitrary triangulation  of a 3-manifold using a \textit{fattening} procedure. This was shown first by Bombin and Delgado, see Appendix~A of Ref.~\cite{Bombin2007}, but we follow the description on the dual lattice $\L_c^*$ due to Breuckmann and Vuillot, see Ref.~\cite{Vuillot2022}. We take some triangulation  of a $D$ manifold, consisting of $j$-cells $\sigma_j$ and show we can obtain a color-code lattice as defined above. Once again, we denote the cells $\sigma_j$ of the $D$-dimensional manifold by a list of $D+1$ 0-cells $\sigma_0$. To distinguish the initial manifold that we map onto a $D+1$ colourable cellulation of the same manifold that we use to define the color code, we denote $j$-cells of the initial triangulation with the notation $\sigma_j$. In contrast we have denoted the $D$-cells, $(D-2)$-cells and $0$-cells with symbols $\Delta$, $\delta$ and $v$, respectively, for the case of the color code lattice.

We assign a vertex of the color-code lattice $v$ to all of the cells of $\mathcal{M}^D$, that is, a single vertex for each cell $\sigma_j$ for all $0\le j \le D$. We assign a different color to each vertex depending on the dimensionality of the cell $\sigma_j$ to which it is assigned. Two vertices of the color code lattice share an edge if and only if $\sigma_k \subset \sigma_j$ for all $k < j$. We then specify a tetrahedra $\Delta$ for every $(D+1)$-tuple of vertices $v$ associated to cells $\sigma_j$ if and only if $\sigma_0 \subset \sigma_1 \subset \dots \subset \sigma_D$. By definition we have $(D+1)$-colourable lattice since every tetrahedra $\Delta$ has a vertex of each colour, since $\Delta$ includes vertices $v$ associated to exactly one $j$-cell $\sigma_j$ for each value of $j$. We show this procedure for a $3$-cell $\sigma_3$ in Fig.~\ref{fig:ColorCodeLattice}.

In the following, we focus on discussing the properties of 3D color codes and switch back to the standard color-code lattice  where qubits are defined on the vertices. The 3D color code is then defined on a 3-dimensional lattice $\L_c$ which is 4-valent and 4-colorable.    The 3D color code is constant-depth equivalent to three copies of 3D toric (surface) codes, i.e., $CC(\L_c) \cong \Motimes_{i=1}^3 TC(\L_i) $. 
More concretely, there exists a constant-depth disentangling local unitary $V$, which can disentangle a single 3D color code into three toric codes \cite{Kubica:2015br}:
\be\label{eq:disentangler}
V[CC(\L_c)\otimes \SS]V^\dag = \Motimes_{i=1}^3 TC(\L_i),
\ee
where $\SS$  represents  the stabilizer groups of decoupled ancilla qubits and  $TC(\L_i)$ represents the stabilizer group of the toric code defined on the shrunk lattice $\mathcal{L}_i$ obtained from $\L_c$ by shrinking $3$-cells of color $\text{C}_i$.

The code space of a single toric code defined on a closed 3D manifold $\mathcal{M}^3$ (such that the lattice $\mathcal{L}_i$ forms a tessellation of $\mathcal{M}^3$) is
\be
\H_{TC(\M^3)} = \mathbb{C}^{|H_1(\mathcal{M}^3; \ZZ_2)|};
\ee
where $H_1(\mathcal{M}^3; \ZZ_2)$ represents the first $\ZZ_2$-homology group of $\M^3$, which correspond to the non-contractible 1-cycles where the logical-$Z$ strings are supported.   
On the other hand, the logical-$X$ membranes are supported on the non-contractible 2-cycles corresponding to the second $\ZZ_2$-homology group $H_2(\mathcal{M}^3; \ZZ_2)$. Due to Poincaré duality we have the following isomorphism:
\be
H_1(\mathcal{M}^3; \ZZ_2)\cong H^2(\mathcal{M}^3; \ZZ_2) \cong H_2(\mathcal{M}^3; \ZZ_2).
\ee
Now we define the $i^\text{th}$ $\ZZ_2$-Betti number of $\M^3$ as $b_i(\M^3;\ZZ_2)$$=$$Rank(H_i(\mathcal{M}^3; \ZZ_2))$ \footnote{For conciseness, we drop $\ZZ_2$ in the followed texts and when we say Betti number we will always refer to the $\ZZ_2$ Betti number}.  The number of encoded logical qubits $k$ equal to both the 1st and 2nd Betti number in this case, i.e., $k=b_1(\M^3; \ZZ_2)=b_2(\M^3; \ZZ_2)$, where the second equality is a manifestation of the Poincaré duality.   The code space of the 3D color code defined on manifold $\M^3$ is hence
\be
\H_{CC(\M^3)} =  \H_{TC(\M^3)}^{\otimes 3} = \mathbb{C}^{|H_1(\mathcal{M}^3; \ZZ_2)\oplus H_1(\mathcal{M}^3; \ZZ_2)\oplus H_1(\mathcal{M}^3; \ZZ_2)|}.
\ee

\begin{figure}
\includegraphics{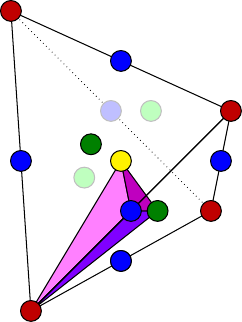}
\caption{A color-code lattice from a triangulation of manifold $\M^3$ via a fattening procedure. We show a single tetrahedron of $\mathcal{M}^3$.  We construct a color-code lattice with four-colourable vertices from an arbitrary cellulation by placing a vertex of colour $j$ onto each of the $j$-cells $\sigma_j$ of manifold $\mathcal{M}^3$. The $\sigma_j$ cells are each assigned a color-code vertex $v$ that is coloured red, blue, green, and yellow for $j = 0, 1, 2, 3$, respectively. We show a single qubit, associated to a tetrahedron $\Delta$ in purple. The figure shows it is adjacent to exactly one vertex of each colour. \label{fig:ColorCodeLattice}}
\end{figure}

\subsection{Logical non-Clifford gates and the triple-intersection invariant}\label{sec:non-Clifford_triple}

For each non-contractible 1-cycle $\alpha_1$ (string) belonging to the homological class $[\alpha_1]$$\in$$H_1(\mathcal{M}^3; \ZZ_2)$, there are three logical-$Z$ operators supported on it in the 3D color code,  which are denoted by $\overline{Z_{\alpha_1;1}}$, $\overline{Z_{\alpha_1;2}}$ and $\overline{Z_{\alpha_1;3}}$.  Similarly, for each non-contractible 2-cycle $\beta_2$ belonging to the homological class $[\alpha_2]$$\in$$H_2(\mathcal{M}^3; \ZZ_2)$,  there are three logical-$X$ operators supported on them, which are denoted by $\overline{X_{\beta_2;1}}$, $\overline{X_{\beta_2;2}}$ and $\overline{X_{\beta_2;3}}$. According to Ref.~\cite{Kubica:2015br}, the disentangling unitary $V$ can map the triplet of logical-$Z$ ($X$) operators in the 3D color code to the triplet of logical-$Z$ ($X$) operators in the corresponding 3D toric codes:       
\be
V: \overline{Z_{\alpha_1;i}} \rightarrow V \overline{Z_{\alpha_1;i}}  V^\dag = \overline{Z}_{\alpha_1}^{(i)} , \quad \overline{X_{\beta_2;i}} \rightarrow V \overline{X_{\beta_2;i}}  V^\dag = \overline{X}_{\beta_2}^{(i)}.
\ee
Here, $\overline{Z}_{\alpha_1}^{(i)}$ and $\overline{X}_{\beta_2}^{(i)}$ represent the logical $Z$ and $X$ operators supported on the cycles $\alpha_1$ and $\beta_2$ respectively in the $i$-th toric codes.

In order to label the logical qubits, we choose a particular 2nd homology basis $B_2 =\{\alpha_2\}$. Therefore, we can label the logical qubits with the label of the logical-$X$ membranes, i.e., $(\alpha_2; i)$ with $\alpha_2 \in B_2$.  We note that the dual logical-$Z$ string of the logical-$X$ membrane $\overline{X_{\alpha_2;i}}$ ($\overline{X}_{\alpha_2}^{(i)}$)  is $\overline{Z_{\alpha_1;i}}$  ($\overline{Z}_{\alpha_1}^{(i)}$) satisfying $|\alpha_1 \cap \alpha_2|=1$ and $|\alpha_1 \cap \alpha^{(2)}_2|=0$ for any $\alpha^{(2)}_2 \in B_2$ such that $\alpha^{(2)}_2 \neq \alpha_2$, where `$|\cdot\cap\cdot|\in \ZZ_2$' represents the $\ZZ_2$ intersection number, i.e., the number of intersection points mod~2 \footnote{A more general quantity to be considered is the algebraic intersection number $|\alpha_1 \cap \beta_2|_\ZZ \in \ZZ$, which has a `$\pm$' sign for each intersection point depending the orientation of the cycles and is a topological invariant. In contrast, geometric intersection number only associate a `$+$' sign for each intersection point, and is hence not topological but depending the choice of the representative of each cycle. In the $\ZZ_2$ case, $`+'$ and $`-'$ signs are identified and it turns out the algebraic and geometric  intersection numbers mod 2 are the same.  For more general case like the $\ZZ_N$ toric code, one should consider algebraic intersection number mod $N$.}. We see that $\alpha_1$ is just the Poincaré dual cycle of $\alpha_2$ \footnote{Strictly speaking Poincaré dual is between the cycle $\alpha_2$ and the cocycle $\alpha^1$.  In this paper, we sometimes relax the use of this terminology by calling $\alpha^1$ the Poincaré dual cycle of $\alpha_2$.}.  From now on, we use the same Greek letter to denote the pair of Poincaré dual cycles. The dual pairs obey the anti-commutation relations $\overline{X_{\alpha_2;i}} \ \overline{Z_{\alpha_1;i}}$$=$$-\overline{Z_{\alpha_1;i}} \  \overline{X_{\alpha_2;i}}$ and $\overline{X}_{\alpha_2}^{(i)} \overline{Z}_{\alpha_1}^{(i)} = - \overline{Z}_{\alpha_1}^{(i)} \overline{X}_{\alpha_2}^{(i)}$. We hence also obtain the corresponding 1st homology basis $B_1 =\{\alpha_1\}$.  Therefore, both  labels $(\alpha_2; i)$ and $(\alpha_1; i)$ correspond to the same logical qubit.  

Now we discuss the transversal non-Clifford gate~\cite{Bombin:2007eh, Kubica:2015br}. It is known that the corresponding graph of the color code lattice $G=(\mathcal{V}, \mathcal{E})$ containing the vertices and edges of $\L$ is a bipartite graph.  Therefore, the vertices of $\L$ can be divided into two groups, $\mathcal{V}^a$ and $\mathcal{V}^b$, i.e.,  $\mathcal{V}=\mathcal{V}^a \cup  \mathcal{V}^b$.  In particular, vertices in $\mathcal{V}^a$ are only adjacent to vertices in $\mathcal{V}^b$ and vice versa. We then define the transversal $T$ gate as:
\be\label{eq:defineT}
\widetilde{T} = \Motimes_{j\in \mathcal{V}^a} T(j) \Motimes_{j\in \mathcal{V}^b} T^\dag(j). 
\ee
It has been previously shown that the transversal $T$ gate maps the code space of the color code back to itself, i.e., $\widetilde{T}:\H_{CC(\M^3)} \rightarrow \H_{CC(\M^3)}$ \cite{Kubica:2015br}, and is hence a logical gate. In the following, we try to identify what type of logical gate $\widetilde{T}$ is. 

We first consider the action of $\widetilde{T}$ on a logical-$X$ membrane in the 3D color code denoted by $\lo{X_{\alpha_2; 1}}$, which is supported on the non-contractible 2-cycle $\alpha_2$. The action induces the following map:
\be
\widetilde{T} \lo{X_{\alpha_2; 1}} \widetilde{T}^\dag  = \lo{X_{\alpha_2; 1}} \widetilde{S}_{\alpha_2; 2,3}.
\ee
Here, $\widetilde{S}_{\alpha_2; 2,3}$ is the corresponding transversal $S$ gate with exactly the same support of the logical-X membrane $\lo{X_{\alpha_2; 1}}$ in the color code due to the conjugation of $\widetilde{T}$.  Therefore, $\widetilde{S}_{\alpha_2; 2,3}$ is also supported on the 2-cycle (membrane) $\alpha_2$.  According to Ref.~\cite{Kubica:2015br}, when conjugated by the disentangling unitary $V$, the operator is mapped to
\be
V \widetilde{S}_{\alpha_2; 2,3} V^\dag = \widetilde{\text{CZ}}^{(2,3)}_{\alpha_2},
\ee
where $\widetilde{\text{CZ}}^{(2,3)}_{\alpha_2}$ represents the transversal CZ operator acting on the 2-cycle (membrane) $\alpha_2$ between toric code copies 2 and 3. 
Therefore, when further applying the disentangling unitary $V$, the action induces the following map: 
\be\label{eq:T_gate_map}
V  \widetilde{T} \lo{X_{\alpha_2; 1}} \widetilde{T}^\dag V^\dag = \overline{X}_{\alpha_2}^{(1)} \widetilde{\text{CZ}}^{(2,3)}_{\alpha_2}.
\ee

  Now we consider a triplet of non-contractible 2-cycles belonging to the homology basis:  $\alpha_2$, $\beta_2$, $\gamma_2 \in B_2$.    One can see that only when the three 2-cycles have $\ZZ_2$ intersection at a single point, i.e.,  $|\alpha_2 \cap \beta_2 \cap \gamma_2|=1$, the transversal CZ operator $\widetilde{\text{CZ}}^{(2,3)}_{\alpha_2}$ can induce a logical CZ operator on the logical qubit labeled by $(\beta_2; 2)$ and $(\gamma_2; 3)$ with the corresponding logical-X operators being $\overline{X}_{\beta_2}^{(2)} $ and $\overline{X}_{\gamma_2}^{(3)}$.     This fact is illustrated in Fig.~\ref{fig:CCZ_arrangement}, and can be explained as follows.

\begin{figure}[t]
  \includegraphics[width=1\columnwidth]{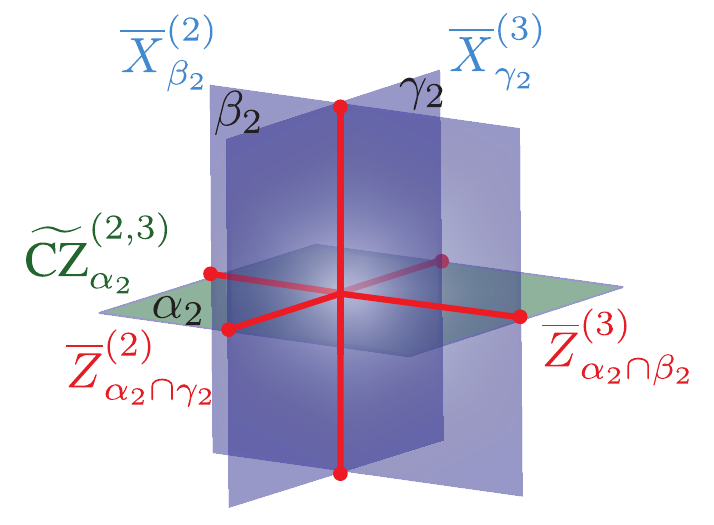}
  \caption{The intersection property which guarantees a non-trivial logical CCZ action.}
  \label{fig:CCZ_arrangement}
\end{figure}

The action of transversal CZ operator $\widetilde{\text{CZ}}^{(2,3)}_{\alpha_2}$ induces the following map on the logical operator $\overline{X}_{\beta_2}^{(2)}$ in the second copy of the toric code:  
\be\label{eq:CZ_mapping}
\widetilde{\text{CZ}}^{(2,3)}_{\alpha_2}:  \overline{X}_{\beta_2}^{(2)}  \rightarrow  \overline{X}_{\beta_2}^{(2)} \overline{Z}_{\alpha_2 \cap \beta_2}^{(3)},
\ee
where the logical-$Z$ string is supported on $\alpha_2 \cap \beta_2$. Now if the logical CZ gate acting on logical qubits associated with $\overline{X}_{\beta_2}^{(2)}$ and $\overline{X}_{\gamma_2}^{(3)}$, then one must satisfy the following anti-commutation relation:
\be
\overline{X}_{\gamma_2}^{(3)} \overline{Z}_{\alpha_2 \cap \beta_2}^{(3)} = - \overline{Z}_{\alpha_2 \cap \beta_2}^{(3)} \overline{X}_{\gamma_2}^{(3)},
\ee
which requires the following intersection condition:
\be
|(\alpha_2 \cap \beta_2) \cap \gamma_2| = 1.
\ee


Now when we do the similar investigation on the action of transversal gate $\widetilde{T}$ on all the logical-$X$ membranes  $\overline{X}_{\beta_2}^{(2)} $ and $\overline{X}_{\gamma_2}^{(3)}$, we can conclude that a logical CCZ gates is applied on the triplet of logical qubits labeled by $(\alpha_1; 1)$, $(\beta_2; 2)$ and $(\gamma_2; 3)$, which are associated with the logical-$X$ membranes  $\overline{X}_{\alpha_2}^{(1)} $ , $\overline{X}_{\beta_2}^{(2)} $ and $\overline{X}_{\gamma_2}^{(3)}$, for both the 3D color codes and toric codes, if and only if the triple intersection condition $|\alpha_2 \cap \beta_2 \cap \gamma_2| = 1$ is satisfied.

Now we can derive the following generic formula for the transversal gate on an arbitrary closed 3-manifold $\M_3$, including the special cases of hyperbolic 3-manifolds:
\be\label{eq:CCZ_formula}
\widetilde{T} = \prod_{\alpha_2, \beta_2, \gamma_2 \in B_2 }  [\lo{\text{CCZ}}((\alpha_2; 1),(\beta_2; 2),(\gamma_2; 3))]^{|\alpha_2 \cap \beta_2 \cap \gamma_2|},
\ee
where $B_2$ represents a 2nd homology basis.   We note that $V \widetilde{T} V^\dag$ implements the same collective logical gate on the three copies of toric code supported on the same 3-manifold $\M_3$.  

We emphasize that the logical gate structure is completely determined by the $\ZZ_2$ triple intersection number $|\alpha_2 \cap \beta_2 \cap \gamma_2|\in \ZZ_2$, i.e., the number of transversal intersections of the three surfaces mod 2. This number is a topological invariant, i.e., its value is independent of the re-triangulation of the 3-manifold $\M^3$ and is also independent of the choice of the representatives $\alpha_2, \beta_2, \gamma_2$ in their own homology class.  We can also re-express the $\ZZ_2$ triple intersection number as a skew-symmetric 3-form on $H_2(\mathcal{M}^3; \ZZ_2)$ with values in $\ZZ_2$. This 3-form is the Poincaré dual of the cup product (`$\cup$') on $H^1(\M^3; \ZZ_2)$, i.e. if $\alpha^1$, $\beta^1$, and $\gamma^1$ are the Poincaré dual cocycles in $H^1(\M^3; \ZZ_2)$ then 
\begin{align}
    | \alpha_2 \cap \beta_2 \cap \gamma_2 |:=\int_{\M^3} \alpha^1 \cup \beta^1 \cup \gamma^1   \in \ZZ_2. 
\end{align}
Note that this topological invariant naturally arises in the path integral of the corresponding topological quantum field theory (the $\ZZ_2^3$ gauge theory) as will be derived in Sec.~\ref{sec:TQFT}.

\subsection{Parallelizable logical Clifford gates}\label{sec:parallelizable}

From the discussion in { Sec.~\ref{sec:non-Clifford_triple} }, we can see that transversal CZ gate can be applied only on a particular codimension-1 (2-dimensional) submanifold, i.e., a particular membrane (2-cycle) $\alpha_2$, { as shown in Fig.~\ref{fig:CCZ_arrangement}.}

The transversal CZ gates acting on toric code copy $i$ and $j$ ($i\neq j$) and supported on the membrane $\alpha_2$, denoted by $\widetilde{\text{CZ}}^{(i,j)}_{\alpha_2}$ (or equivalently the transversal $S$  gate in the color code $\widetilde{S}_{\alpha_2; i,j}$), is equivalent to the following logical gate:
\be\label{eq:CZ_formula}
\widetilde{\text{CZ}}^{(i,j)}_{\alpha_2}\sim \widetilde{S}_{\alpha_2; i,j} = \prod_{\beta_2, \gamma_2 \in B_2 } [\lo{\text{CZ}}((\beta_2; i),(\gamma_2; j))]^{|\alpha_2 \cap \beta_2 \cap \gamma_2|}.
\ee

Note that these transversal gates are only applied on a submanifold $\alpha_2$ instead of the entire 3-manifold, and are hence a 1-form symmetry operators as will be discussed in Sec.~\ref{sec:TQFT} in details.   It can hence be individually addressed, in contrast to the collective logical CCZ gate in Eq.~\eqref{eq:CCZ_formula}.   More precisely, there are in total $|H_2(\M^3;\ZZ_2)| = 2^{b_2(\M^3;\ZZ_2)}=2^k$ such addressable logical gates.   This exponential scaling is in contrast to the expected $O(k)$ scaling in the fold-transversal gates in qLDPC codes \cite{breuckmann2022fold, Quintavalle:2022fold} as explained in the introduction, which is a 0-form symmetry acting on the entire system. Note that the number of independent 1-form symmetry  generators, or equivalently the generators for the corresponding logical gate set, is equal to the 2nd and 1st Betti number $b_2(\M^3;\ZZ_2)=b_1(\M^3;\ZZ_2)= k$. 

Since all the $\widetilde{\text{CZ}}^{(i,j)}_{\alpha_2}$ (or equivalently $\widetilde{S}_{\alpha_2; i,j}$) are composed with diagonal gates and hence commute with each other, they can be parallelized, as will be explained in more details in Sec.~\ref{sec:quasi-hyperbolic}.

\subsection{Interaction hypergraph and hypergraph magic state injection}\label{sec:interaction_hypergraph}

\begin{figure}[t]
  \includegraphics[width=1\columnwidth]{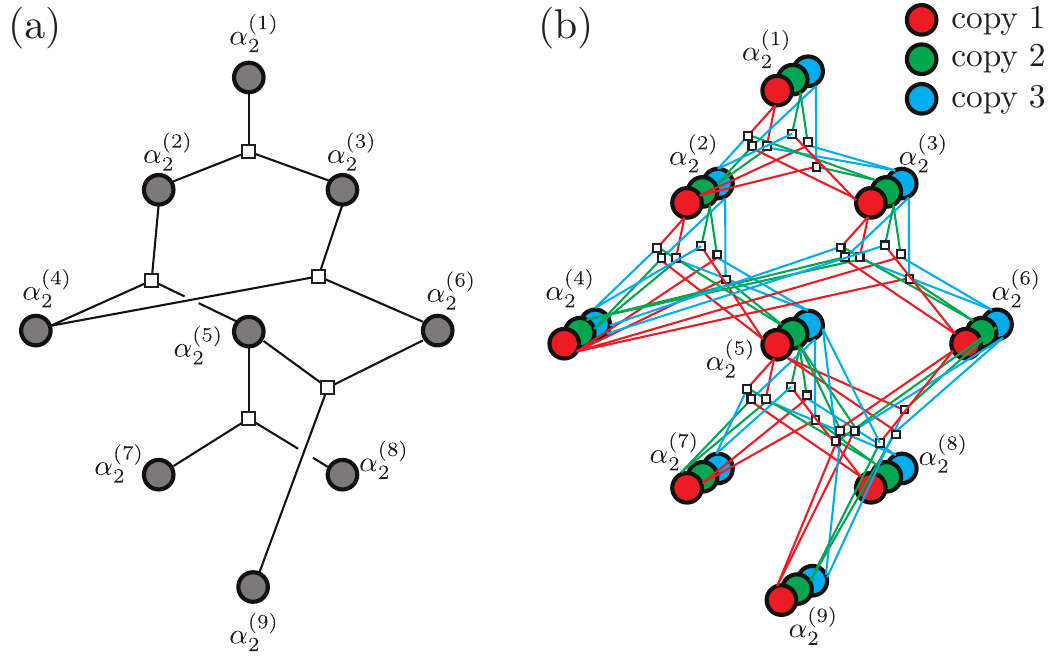}
  \caption{(a) The base interaction hypergraph (intersection hypergraph) of a particular 3-manifold. (b) The corresponding (full) interaction hypergraph of the color codes defined on the same 3-manifold. Red, green and blue vertices represent the logical-$X$ operators belonging to the three copies of toric codes respectively.}
  \label{fig:base_hypergraph}
\end{figure}

The collective non-Clifford logical CCZ gate structure in Eq.~\eqref{eq:CCZ_formula} can be completely encoded in an \textit{interaction hypergraph} of logical qubits as described below. 

We start with a 3-manifold $\M^3$ and with particular choice of the 2nd homology basis $B_2$. Its $\ZZ_2$ triple intersection number $|\alpha_2^{(i)} \cap \alpha_2^{(j)} \cap \alpha_2^{(k)}| \in \ZZ_2$ with $\alpha_2^{(i)}, \alpha_2^{(j)},\alpha_2^{(k)} \in B_2$ can be encoded into a \textit{base interaction hypergraph}, or synonymously an \textit{intersection hypergraph}, as shown in Fig.~\ref{fig:base_hypergraph}(a).   The hypergraph $G_h$$=$$(V, E_h)$ is composed of a set of vertices $V$ (represented by grey circles) and a set of hyperedge $E_h$ (represented by a tri-junction with three edges meeting at a square node) connecting triplets of vertices. Each vertex $v\in V$ corresponds to a 2-cycle $\alpha_2^{(i)} \in B_2$.  Each hyperedge $e_h \in E_h$ corresponds to a $\ZZ_2$ triple intersection point, which is connected to three vertices with their represented 2-cycles satisfying the condition $|\alpha_2^{(i)} \cap \alpha_2^{(j)} \cap \alpha_2^{(k)}|=1$. Note that as illustrated in Fig.~\ref{fig:base_hypergraph}(a), the base interaction hypergraph does not have to be a planar hypergraph since the hyperedges can have crossings.  

We then lift the base interaction hypergraph to the \textit{(full) interaction hypergraph} which encodes the full information of Eq.~\eqref{eq:CCZ_formula}, as illustrated in Fig.~\ref{fig:base_hypergraph}(b).   Since a single 3D color code is equivalent to three copies of 3D toric codes, each single vertex in the base interaction hypergraph is lift to three vertices corresponding to three logical qubits, represented by red, green, blue circles in Fig.~\ref{fig:base_hypergraph}(b) respectively.  A single hyperedge connecting a triplet of vertices in the base interaction hypergraph is now lifted to $3!=6$ hyperedges according to all possible permutations of the three copy labels for a given triplet of 2-cycles as suggested by Eq.~\eqref{eq:CCZ_formula},  where the edge connecting a particular vertex in the hyperedge (tri-junction) is chosen to have the same color as the vertex in  Fig.~\ref{fig:base_hypergraph}(b).  Each hyperedge corresponds to a logical CCZ gate acting on the three connected logical qubits (vertices). More concrete examples of 3-manifolds and their corresponding interaction hypergraphs will be given in Sec.~\ref{sec:code_construction}.   

Quite interestingly, the full-interaction hypergraph has a one-to-one correspondence with the \textit{quantum hypergraph states} \cite{Rossi_2013}, which can be considered as the generalization of graph states. One can define a \textit{$q$-uniform hypergraph state} on a hypergraph $G_h=(V, E_h)$ as follows.   One assigns to each vertex a qubit initialized in the $\ket{+}$ state. For each $q$-hyperedge, one perform a $\text{C}^{q} \text{Z}$ gate between the $q$ connected qubits (vertices) labeled by $v_1, v_2, ..., v_q$. We hence reach the following $q$-uniform hypergraph state:
\be
\ket{g_q} = \prod_{\{v_1, v_2, ..., v_q\} \in E_h} \text{C}^{q} \text{Z}(v_1, v_2, ..., v_q)\ket{+}^{\otimes m},
\ee
where $m=|V|$ represents the total number of qubits (vertices).

\begin{figure}[t]
\includegraphics[width=1\columnwidth]{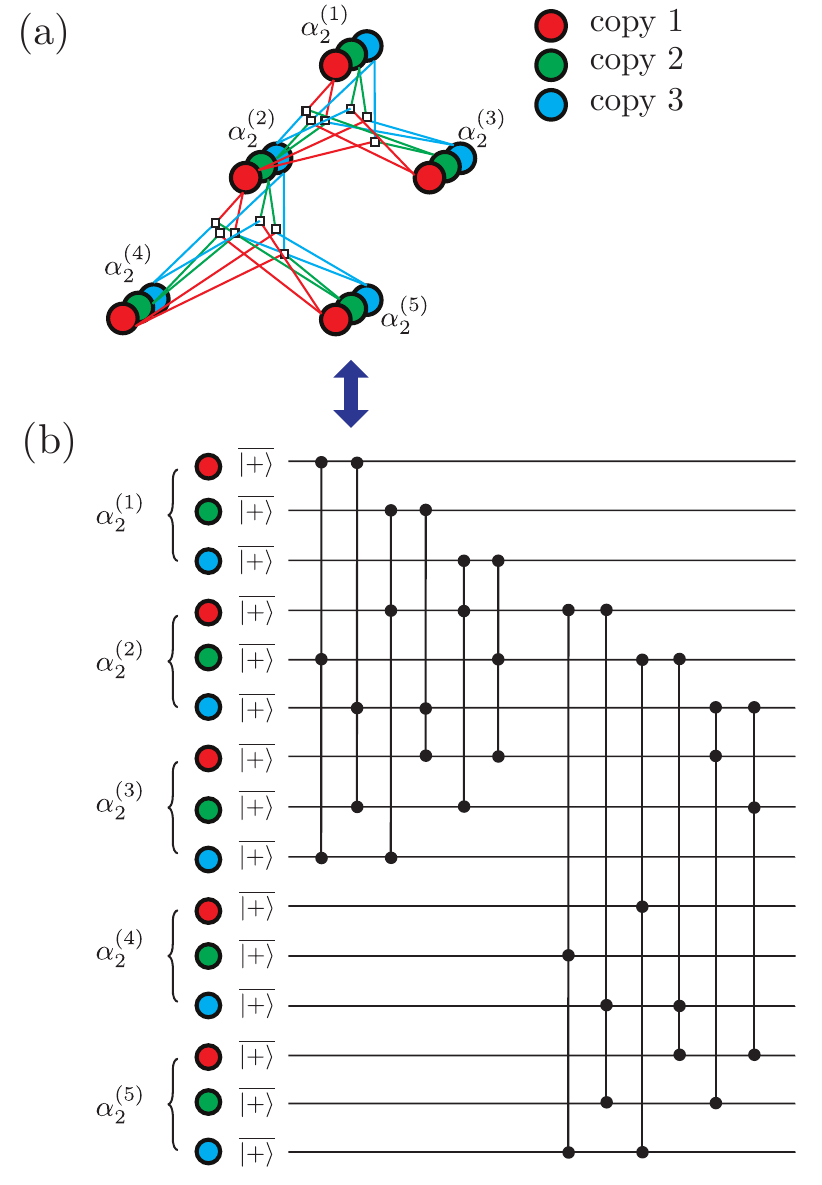}
  \caption{Correspondence between an interaction hypergraph of a color code on a 3-manifold in (a) with a logical 3-uniform hypergraph magic  state in (b) injected via a transversal $T$ gate acted on the entire code with each logical qubit initialized in state $\lo{\ket{+}}$.  The number of logical CCZ gates in the quantum circuit in (b) equaling the number hyperedges in the corresponding hypergraph shown in (a) defines the complexity of the hypergraph state.}
  \label{fig:hypergrah_state}
\end{figure}

In the case of the color code defined on 3-manifold, we have $q=3$.  If we initialize each logical qubit in the logical state $\lo{\ket{+}}$ and then apply the transversal $T$ gate, we arrive the following 3-uniform logical hypergraph state according to Eq.~\eqref{eq:CCZ_formula}: 
\begin{align}\label{eq:logical_hypergraph_state}
\non \lo{\ket{g_3}} =& \prod_{\alpha_2, \beta_2, \gamma_2 \in B_2 }  [\lo{\text{CCZ}}((\alpha_2; 1),(\beta_2; 2),(\gamma_2; 3))]^{|\alpha_2 \cap \beta_2 \cap \gamma_2|} \\
\non & \qquad \qquad \cdot \lo{\ket{+}}^{\otimes k} \\
=& \prod_{\{v_1, v_2, v_3 \}\in E_h} \lo{\text{CCZ}}(v_1, v_2, v_3)  \lo{\ket{+}}^{\otimes k}.
\end{align}
Here, $k=3\cdot b_1(\M^3; \ZZ_2)$ denote the number of logical qubits equaling three times the Betti number of the 3-manifold $\M^3$.  The quantum circuit of the hypergraph state corresponding to its interaction hypergraph is illustrated in Fig.~\ref{fig:hypergrah_state}.

Note that the hypergraph state contains so-called `\textit{magic}' \cite{chen2023magic}, since it is a non-stabilizer state beyond the description of the  stabilizer formalism. In our context, the logical hypergraph state in Eq.~\eqref{eq:logical_hypergraph_state} is a non-stabilizer state at the logical level, even though the microscopic wavefunction is in the code space a stabilizer code.  Therefore, this hypergraph state is a specific type of magic state which is a valuable computational resource for the magic state injection and distillation protocols \cite{bravyi1998}. We note that instead of slowly doing extensive rounds of magic state distillation starting from a low-fidelity magic state at $O(1)$ distance, here we can directly inject a high-fidelity hypergraph magic state at $O(d)$ distance without distillation. We hence call this scheme a `\textit{magic state fountain}'.  The injected hypergraph magic state can either be used as resources for gate-based or measurement-based quantum computation schemes \cite{Takeuchi:2019_hypergraph} at the logical level.  

In order to characterize this computation resource, we also define the \textit{complexity} of the 3-uniform hypergraph state as the number of CCZ gates acted upon  the initial product state, or equivalently the number of vertices $|V(G_h)|$ in the underlying hypergraph.  Therefore, the complexity of the logical 3-uniform hypergraph state $\kappa(g_3)$ is equal to the total number of $\ZZ_2$ triple intersection points $N_p(\M^3)$ in the underlying 3-manifold $\M^3$ times a factor of six:
\be
\kappa(g_3)=6 N_p(\M^3)=|V(G_h)|.
\ee
We hence observe a profound connection between the complexity of quantum states and the topology/geometry of manifolds.  We note that this complexity can be used to characterize the computation power of this resource state either in the gate-based or measurement-based computation scheme (at the  logical level), and can also be related to the complexity of classical simulation of quantum circuits involving such hypergraph states, which will be explored more in future works. 

Finally, we note that the parallelizable  logical CZ gate structure is completely encoded in an interaction graph, which will be introduced in Sec.~\ref{sec:triple_quasi-hyperbolic}.

{

\section{Logical gates for three identical copies of toric codes on 3D manifold via constant-depth circuits and the connection to emergent $k$-form symmetries in TQFT}\label{sec:TQFT}

In this section, we derive the logical gate expressions Eq.~\eqref{eq:CCZ_formula} and  \eqref{eq:CZ_formula}  in a more fundamental way using an operator-valued cochain formalism originated  from (3+1)D topological quantum field theory (TQFT) (see also Ref.~\cite{Hsin2024_non-Abelian} for the application of this formalism). This derivation is independent of the detailed microscopic lattice models one chooses such as a color code or multiple copies of toric codes, or any lattice variation of these models.   In particular, the logical gate operators originate from $k$-form emergent symmetries expressed as TQFT path integrals and mathematically correspond to cohomology operation via cup products. In this way, the triple intersection numbers naturally arise as  topological invariants in the TQFT path integral.  

Even more interestingly, this operator-valued cochain formalism naturally adapts to a discrete lattice or trianulation.   This allows us to explicitly construct constant-depth circuits acting on three identical copies of toric codes defined on an arbitrary triangulation of arbitrary closed 3-manifolds.   This is different form the transversal CCZ \cite{Vasmer2019} acted on three different toric codes, with two of them having larger stabilizer weights than the standard cubic lattice 3D toric-code.  Our construction hence allows implementing logical CCZ gate with lower-weight stabilizer.  More elaborated study of this formalism including the logical gates on arbitrary lattices such as cubical lattices and its application to a broader class of cohomology operation will be presented in the follow-up papers \cite{zhu2025topological, Hsin2024:classifying}.


\subsection{Operator-valued cochain formalism }

We now introduce an \textit{operator-valued cochain formalism} to describe the operators in the code, which originates from a (3+1)D topological quantum field theory (TQFT): the (3+1)D $\ZZ_2\times \ZZ_2 \times \ZZ_2$ gauge theory 
\footnote{The action of this TQFT in the continuum is
\begin{align}
   S_{\ZZ_2^3}= \pi \int_{\M^4} a^{(1)} db^{(1)} + a^{(2)} db^{(2)} + a^{(3)} db^{(3)}.
\end{align}
Here,  $a^{(i)}$ and $b^{(i)}$ represent the $\ZZ_2$ 1-form electric gauge fields and 2-form magnetic gauge fields defined  on the triangulated (3+1)D space-time manifold $\M^4 = \M^3 \times S^1_t$  ($\M^3$ being the space manifold, and the circle $S^1_t$ being the time part).} equivalent to a 3D color code $CC(\L_c)$ or three identical copies of 3D $\ZZ_2$ toric codes $\Motimes_{i=1}^3 TC(\L)$ supported on an arbitrary  triangulation $\L$ of an arbitrary 3-manifold \cite{10.21468/SciPostPhys.14.4.065}.
   We define the operator valued 1-cochain $a^{(i)} \in C^1(\mathcal{L}; \ZZ_2)$ with its eigenvalues belonging to $\{0,1\}$ and $i$ denotes the $i^\text{th}$ copy of the toric code. The coefficient of each edge (1-cell) $e$ in the 1-cochain corresponds to a Pauli-$Z$ operator via 
\be\label{eq:a_Pauli}
(-1)^{a^{(i)} (e)}= Z^{(i)}(e) \in \{-1,1\}.
\ee
Note that the Pauli operator has eigenvalues $\pm 1$ instead.
Physically, $a^{(i)}$ corresponds to the $\ZZ_2$ electric gauge field. Similarly, there also exists a 1-cochain $b^{(i)}
$ corresponding to the magnetic $\ZZ_2$  gauge field, which corresponds to a Pauli-X operator on each edge $e$ as 
\be\label{eq:b_Pauli}
(-1)^{b^{(i)} (e)}= X^{(i)}(e) \in \{-1,1\}.
\ee

The coboundary operator $d$, a lattice analog of the exterior derivative in the continuum case, acts on the $\ZZ_2$-valued 1-cochain as 
\be
(da^{(i)})(f) = \sum_{e \subset f}a^{(i)}(e).
\ee
The $Z$-stabilizer of the 2D toric code localed on face $f$ can be expressed in terms of the operator-valued cochain as
\be\label{eq:Z_stabilizer}
{S^{(i)}_{Z; f} = \prod_{e \subset f} Z^{(i)}(e) = (-1)^{da^{(i)}(f)}},
\ee
which is also considered as a flux term in the gauge theory. The $Z$-stabilizer condition $S^{(i)}_{Z; f}=1$ becomes the 0-flux condition 
\be
da^{(i)}(f)=0.  
\ee

The $X$-stabilizers on the vertex $v$ can be expressed as
\be
S^{(i)}_{X; v} = \prod_{e \supset v} X^{(i)}(e) = (-1)^{\sum_{e \supset v} b^{(i)}(e)},
\ee
The X-stabilizer condition $S^{(i)}_{X; v}=+1$ corresponds to Gauss's law (with zero charge) in the $\ZZ_2$ gauge theory and can be expressed as 
\be
\sum_{e \supset v} b^{(i)}(e)  = 0.
\ee
Furthermore, the anticommutation relation between the Pauli-$X$ and -$Z$  operator $X^{(i)}_eZ^{(i)}_{e'}=(-1)^{\delta_{e,e'}}Z^{(i)}_eX^{(i)}_{e'}$ leads to the following anticommutation relation:
\be\label{eq:anti-commutation}
  X^{(i)}_e (-1)^{\mathsf{f}(a^{(i)})}X^{(i)}_e=(-1)^{\mathsf{f}(a^{(i)}+\tilde e)},
\ee
where $\mathsf{f}$ is an aribtrary function of $a^{(i)}$ and $\tilde e$ is the indicator $1$-cochain that takes value $1$ on edge $e$ and 0 otherwise.

\begin{figure}[]
\includegraphics[width=0.7\columnwidth]{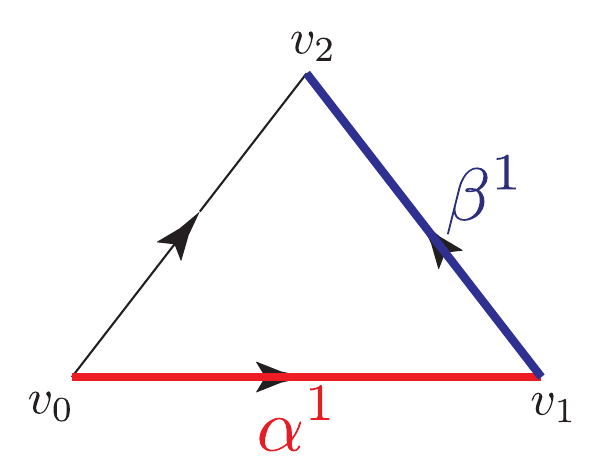}
  \caption{Evaluating the cup product on a 2-simplex via the formula $(\alpha^1 \cup \beta^1)([v_0v_1v_2])
=\alpha^1([v_0v_1])\beta^1([v_1 v_2])$. The three vertices have the order  $v_0 \prec v_1 \prec v_2$, and the branching structure is evaluated by the arrow pointing from lower-order to higher-order vertices.}
  \label{fig:cup_example}
\end{figure}

\subsection{Logical CCZ as a 0-form symmetry}
We now introduce a logical gate as a 0-form symmetry operator in three identical copies of toric codes supported on the triangulation $\mathcal{L}$ of a 3-manifold via a cohomology operation.   One can consider the following unitary originated from the TQFT path integral:
\be\label{eq:logical_CZ}
U = (-1)^{\int_{\mathcal{L}} a^{(1)} \cup a^{(2)} \cup a^{(3)}},
\ee
where $a^{(i)}$ correspond to the electric gauge field on the three copies, and $\int_{\mathcal{L}}$ is a discrete sum over all the simplices in  $\mathcal{L}$. Note that $\mathcal{L}$ should be consider as a 3-chain, and this sum should be viewed as a chain-cochain paring.  

For a triangulated manifold, one can define the cup product `$\cup$' between a $p$-cochain $\alpha^p$ and $q$-cochain $\beta^q$ evaluated on a $(p+q)$-simplex $[v_0v_1\cdots v_{p+q}]$ as \cite{Hatcher:2001ut} 
\begin{align}
\non & (\alpha^p \cup \beta^q)([v_0v_1\cdots v_{p+q}]) \\
=&\alpha^p([v_0v_1\cdots v_{p+q}])\beta^q([v_p v_{p+1},\cdots,v_{p+q}]), 
\end{align}
where the arguments are composed of the labels of ordered vertices $v_i$ obtained from a branching structure of the triangulation (satisfying $v_0 \prec v_1 \prec v_2, \cdots \prec v_{p+q}$). An illustration of $p=q=1$ is illustrated in Fig.~\ref{fig:cup_example}.
  For references of TQFT path integrals with cup products, we refer the readers to Refs.~\cite{chen2021higher, lan2019fermion, chen2023loops, tata2023anomalies} for more details.

The cup-product evaluated on each 3-simplex $[v_0v_1v_2v_3]$ ($v_i$ represents the four vertices in the simplex) 
\begin{align}
\non & (a^{(1)} \cup a^{(2)} \cup a^{(3)}) ([v_0v_1v_2v_3]) \\
=& a^{(1)}([v_0v_1])a^{(2)}([v_1v_2])a^{(3)}([v_2v_3]).
\end{align}
We can hence re-express the unitary $U$ as
\begin{align}\label{eq:logical_gate_3D}
\non U=& (-1)^{\int_{[v_0v_1v_2v_3] \in \L}  a^{(1)}([v_0v_1])a^{(2)}([v_1v_2])a^{(3)}([v_2v_3])} \\
=&\prod_{[v_0v_1v_2v_3]\in \mathcal{L}} \text{CCZ}([v_0v_1],[v_1v_2],[v_2v_3]),
\end{align}
which explicitly shows the corresponding constant-depth circuit composed of many CCZ gates coupling the qubits in three copies of toric codes located on edge $[v_0v_1], [v_1v_2]$ and $[v_2v_3]$ in each simplex $[v_0v_1v_2v_3]$, as illustrated in Fig.~\ref{fig:cup_3d}.

\begin{figure}[]
\includegraphics[width=0.7\columnwidth]{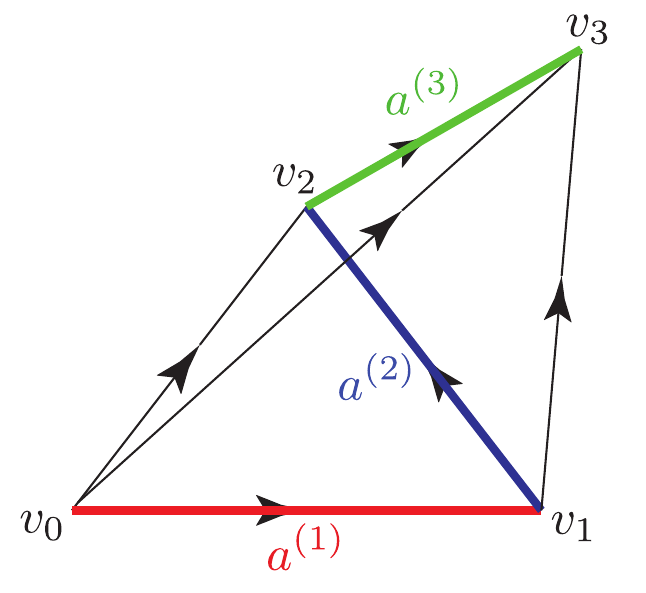}
  \caption{Illustration of the evaluation of the cup product in Eq.~\eqref{eq:logical_gate_3D}.  A CCZ gate is coupling the qubits located on red, blue and green edges, i.e., $[v_0v_1], [v_1v_2], [v_2v_3]$ in each simplex labeled by $[v_0v_1v_2v_3]$.  The CCZ gates in neighboring simplices may have overlap and they hence compose a constant-depth local circuit instead of a strictly transversal gate. }
  \label{fig:cup_3d}
\end{figure}

Now we need to check whether $U$ is a logical gate, meaning its action should preserve the code space, i.e., 
\be
U:\mathcal{C} \rightarrow \mathcal{C}.
\ee
An equivalent condition for the stabilizer code is that 
\be
P_\mathcal{C}[U, \mathcal{S}] P_\mathcal{C} = 0,
\ee
meaning that $U$ commute with the stabilizer group $\mathcal{S}$ when projected to the code space $\mathcal{C}$.  Since $U$ is a diagonal gate, it clearly commutes with all the $Z$-stabilizers.   Now we only need to check the commutation relation for the $X$-stabilizers. 

We consider an $X$-stabilizer $S^{(1)}_{X; v}$$=$$\prod_{e \supset v} X^{(1)}(e)$ located at vertex $v$ in the 1st copy of the toric code.   We then consider conjugating $U$ with $S^{(1)}_{X; v}$:
\begin{align}
\non & S^{(1)}_{X; v} U S^{(1)}_{X; v} \\
\non =& \prod_{e \supset v} X^{(1)}(e) \left[(-1)^{\int_{\Sigma_2} a^{(1)} \cup a^{(2)} \cup a^{(3)}}\right]   \prod_{e' \supset v} X^{(1)}(e')  \\
\non =& (-1)^{\int_{\Sigma_2} (a^{(1)} +\sum_{e' \supset v}\tilde{e}')  \cup a^{(2)} \cup a^{(3)}} \\
=& (-1)^{\int_{\Sigma_2} (a^{(1)} + d \tilde{v}) \cup a^{(2)} \cup a^{(3)}},
\end{align}
where we have used Eq.~\eqref{eq:anti-commutation} in the second equality.  In the last line, $d\tilde{v}=\sum_{e' \supset v}\tilde{e}'$ is an indicator 1-cochain taking value 1 on all the edges $e'$ connected to the vertex $v$ and 0 elsewhere, while $\tilde{v}$ is an indicator cochain taking value 1 on the vertex $v$ and 0 elsewhere. 

We now require the following condition for $U$ to be a logical gate:
\be\label{eq:commuting_ground}
P_\mathcal{C} S^{(1)}_{X; v} U S^{(1)}_{X; v} P_\mathcal{C}= U,
\ee
where $P_\mathcal{C}$ is the projector to the code space $\mathcal{C}$. This condition means $U$ commutes with the stabilizer $S^{(1)}_{X; v}$ in the code space, which is then equivalent to the following condition:
\be\label{eq:gauge-invariance}
P_\mathcal{C} (-1)^{\int_{\mathcal{L}} (a^{(1)} + d \tilde{v}) \cup a^{(2)} \cup a^{(3)}} P_\mathcal{C} = (-1)^{\int_{\mathcal{L}} a^{(1)} \cup a^{(2)} \cup a^{(3)}}. 
\ee
The condition in Eq.~\eqref{eq:gauge-invariance} should be considered as the gauge-invariance condition in the code space, since $a^{(1)} \rightarrow a^{(1)}+d \tilde{v}$ is just a gauge transformation in the $\ZZ_2$ gauge theory \footnote{This is in analogy to the famous gauge transformation $\vec{a} \rightarrow \vec{a} +  \nabla \chi $ in Maxwell's theory or the corresponding gauge theory in the continuum, where $\vec{a}$ is the quantum vector potential. The gauge invariance of the magnetic field $B=\nabla \times \vec{a}=0$ is satisfied due to the Stokes theorem: $\nabla \times \nabla \chi =0$}.

In order to show the gauge invariance, we need to use the Leibniz rule for the cup product:
\be\label{eq:Leibniz_rule}
d\tilde{v}\cup a^{(2)}= d(\tilde{v} \cup a^{(2)}) + \tilde{v} \cup d a^{(2)}.
\ee
note that we have ignored the minus sign and replace it with the plus sign since these cochains are all $\Z_2$-valued (binary variables).   We hence have 
\begin{align}
\non P_\mathcal{C}  \int_{_{\mathcal{L}}} d\tilde{v}\cup a^{(2)} P_\mathcal{C} =& P_\mathcal{C} \left[\int_{_{\mathcal{L}}} d(\tilde{v} \cup a^{(2)}) + \int_{_{\mathcal{L}}} \tilde{v} \cup d a^{(2)} \right] P_\mathcal{C}  \\
=& P_\mathcal{C} \int_{\partial{_{\mathcal{L}}}}(\tilde{v}\cup a^{(2)}) P_\mathcal{C}+ 0 =0,
\end{align}
where we have used the Stokes theorem in the second equality and the fact that we are considering the triangulation of a closed manifold or equivalently a 3-cycle ${\L}$ without boundaries, namely $\partial \L=0$, to show that the first term is zero.  The second is zero due to the cocycle condition or physically the zeror-flux condition in the code space, i.e., $da^{(2)}(f)=0$ on any face $f$, which corresponds to the stabilizer condition $S^{(2)}_{Z; f} = \prod_{e \subset f} Z^{(2)}(e) = (-1)^{da^{(2)}(f)}=1$ according to Eq.~\eqref{eq:Z_stabilizer}. We have hence proved Eq.~\eqref{eq:commuting_ground} and \eqref{eq:gauge-invariance}. By symmetry, we can also verify the commutation relation with $S^{(2)}_{X; v}$ and $S^{(3)}_{X; v}$. Therefore, we can conclude $U$ is indeed gauge-invariant in the code space, i.e.,
\begin{align}
\non & P_\mathcal{C} (-1)^{\int_{\mathcal{L}} (a^{(1)} + d \chi) \cup (a^{(2)} + d\lambda ) \cup (a^{(3)}+d\eta)} P_\mathcal{C} \\
=& (-1)^{\int_{\mathcal{L}} a^{(1)} \cup a^{(2)} \cup a^{(3)}},  
\end{align}
where $d\chi$, $d\lambda$ and $d\eta$ are arbitrary coboundaries. The unitary operator $U$ is hence a logical gate. 

Mathematically, the gauge invariance condition is associated with the topological invariance of $U$ with respect to arbitrary deformation of the cocycles by coboundaries.

Now we show what type of logical gate $U$ corresponds to. Note the cochain $a^{(i)}$ in Eq.~\eqref{eq:logical_CZ} becomes cocycles in the code space due to the zero-flux condition $da^{(i)}=0$. Therefore, we can re-express them using the 1-cocycle basis $\{\alpha^1\}$, $\{\beta^1\}$ and $\{\gamma^1\}$ for both copies of codes:
\be
a^{(1)}=\sum_{\alpha^1}\hat{n}_\alpha \alpha^1,  \quad a^{(2)}=\sum_{\beta^1} \hat{m}_\beta\beta^1, \quad a^{(3)}=\sum_{\gamma^1}\hat{l}_\gamma \gamma^1,
\ee
where the quantum variable $\hat{n}_\alpha$, $\hat{m}_\beta$ and $\hat{l}_\gamma$ with eigenvalues $\{0,1\}$ are the winding numbers, i.e., coefficients of the cocycle basis.  
We can hence re-express $U$ as
\begin{align}
 U =& \prod_{\alpha^1, \beta^1, \gamma^1}  (-1)^{\int_{\mathcal{L}} (\hat{n}_\alpha \alpha^1 )\cup (\hat{m}_\beta  \beta^1) \cup (\hat{l}_\gamma  \gamma^1)} \cr
   =& \prod_{\alpha^1, \beta^1, \gamma^1} \left[(-1)^{ \hat{n}_\alpha \hat{m}_\beta \hat{l}_\gamma}\right]^{\int_{\mathcal{L}} \alpha^1 \cup \beta^1 \cup \gamma^1}\cr
   =&\prod_{\alpha^1, \beta^1, \gamma^1} \overline{\text{CCZ}}[(\alpha^1; 1), (\beta^1; 2),(\gamma^1; 3)]^{\int_{\mathcal{L}} {\alpha^1 \cup \beta^1 \cup \gamma^1}} \cr
    =& \prod_{\alpha_2, \beta_2, \gamma_2} \overline{\text{CCZ}}[(\alpha_2; 1), (\beta_2; 2),(\gamma_2; 3)]^{ {|\alpha_2 \cup \beta_2 \cup \gamma_2|}},
\end{align}
where the third equality has use the relation $(-1)^{\hat{n}_\alpha\hat{m}_\beta\hat{l}_\gamma}$$=$$\overline{\text{CCZ}}$.  Here, $(\alpha^1; 1)$, $(\beta^1; 2)$ and $(\gamma^1; 3)$ are the labels of the logical qubit using the cocycle basis and the copy number.  The logical CCZ gate between logical qubits $(\alpha^1; 1)$, $(\beta^1; 2)$ and $(\gamma^1; 3)$ is only non-trivial (not logical identity) if and only if $\int_{\mathcal{L}} {\alpha^1 \cup \beta^1 \cup \gamma^1}$ evaluates non-trivially, i.e., have a non-trivial triple intersection of the Poincare dual:  $|\alpha_2 \cup \beta_2 \cup \gamma_2|=1$.

\subsection{Logical CZ gates as 1-form symmetries}

We now consider the following 1-form symmetry operator supported on a 2-cycle $\alpha_2$ (a codimension-1 submanifold) and 
\begin{align}\label{eq:twist_string_operator}
   &U'_{i,j}(\alpha_2) = (-1)^{\int_{\alpha_2} a^{(i)}\cup a^{(j)} },   \\
\non    &\quad (i\neq j=1,2,3),
\end{align}
which corresponds to a constant-depth circuit composed of CZ gates between the $i^\text{th}$ and $j^\text{th}$ copies of identical 3D toric codes.   
We can explicitly show the constant-depth circuit by evaluating the cup product as 
\begin{align}
\non U'_{i,j}(\alpha_2)=& (-1)^{\int_{[v_0v_1v_2] \in \alpha_2}  a^{(1)}([v_0v_1])a^{(2)}([v_1v_2])} \\
=&\prod_{[v_0v_1v_2]\in \alpha_2} \text{CZ}([v_0v_1],[v_1v_2]).
\end{align}

We can also re-express the above equation by doing the path integral over the entire 3-manifold triangulation $\L$ as
\be\label{eq:twist_string_operator_v2}
U'_{i,j}(\alpha_2)= (-1)^{\int_{\L} a^{(i)} \cup a^{(j)} \cup \alpha^1 },
\ee
where the 1-cocyle $\alpha^1$ is the Poincaré dual of the 2-cycle  $\alpha_2$.   It is then straightforward to derive the gauge-invariance property of $U'_{i,j}(\alpha_2)$ and show that it is indeed a logical gate, following the discussion in the previous subsection.

Finally, we can again derive the logical gate type by using the cocycle basis to rewrite Eq.~\eqref{eq:twist_string_operator_v2} as before:
\begin{align}
\non U'_{i,j}(\alpha_2) =&\prod_{\beta^1, \gamma^1}   (-1)^{\int_{\mathcal{L}}  (\hat{m}_\beta  \beta^1) \cup (\hat{l}_\gamma  \gamma^1) \cup \alpha^1} \cr
=& \prod_{\beta^1, \gamma^1}  \left[(-1)^{\hat{m}_\beta \hat{l}_\gamma}\right]^{\int_{\mathcal{L}} \alpha^1 \cup \beta^1 \cup \gamma^1}\\
\non   =& \prod_{\beta^1, \gamma^1} 
\overline{\text{CZ}}[(\beta^1; i),(\gamma^1; j)]^{\int_{\mathcal{L}} {\alpha^1 \cup \beta^1 \cup \gamma^1}} \\
    =& \prod_{\beta_2, \gamma_2} \overline{\text{CZ}}[(\beta_2; i),(\gamma_2; j)]^{ {|\alpha_2 \cup \beta_2 \cup \gamma_2|}},
\end{align}
which is an addressable logical CZ gate due to the 1-form symmetryn property as explained in Eq.~\ref{sec:parallelizable}.   The triple intersection condition is also explicitly derived from the cup product of the symmetry operator.  
 
}

\section{Scaling of the code parameters for homological quantum codes defined on manifolds}\label{sec:scaling}

{
In this section, we analyze the connection between the   systolic geometry of a manifold and the code parameters of the corresponding homological qLDPC code.  In particular, we show that the encoding rate and code distance corresponds to the Betti number and systoles of the underlying manifolds. 
}

\subsection{Review of hyperbolic geometry and codes}\label{sec:review_hyperbolic}
We first review the hyperbolic geometry in two dimensions, as well as the code parameters of the corresponding 2D hyperbolic codes.

For 2-manifolds, one has the Gauss-Bonnet formula:
\be\label{eq:Gauss1}
\frac{1}{2\pi}\int \kappa dA = \chi(\M^2) = 2-2g.
\ee
In the case of hyperbolic manifolds, we have the curvature $\kappa = -1$, which gives rise to 
\be\label{eq:Gauss2}
A=4\pi(g-1).
\ee
Note that the 1st Betti number is related to genus by $b_1(\M^2) =2g$.
Since in 2D, the volume  equals the area $A$, we have the following equality for the hyperbolic 2-manifold
\be
b_1(\M^2) = O(vol(\M^2)),
\ee
according to Eq.~\eqref{eq:Gauss2}.
Since the total number of qubits $n$ is proportional to the area $A$, and the number of logical qubits also scales as $k=2g$, we obtain the constant encoding rate for 2D hyperbolic codes:
\be
k/n = \text{const}.
\ee

The distance of the 2D hyperbolic code and more generally homological quantum code is determined by the \textit{systoles} of the $D$-manifold $\M^D$.   We define the $i$-systole, denoted by $sys_i(\M^D; \ZZ_2)$, as the length of the  shortest non-contractible $i$-cycle (over $\ZZ_2$ coefficients) $c_i \in H_i(\M^D; \ZZ_2)$.    For $D$$=$$2$, both the $Z$ and $X$ distance is proportional to the $\ZZ_2$ 1-systole i.e., $d_Z$$\sim$$d_X$$\sim$$sys_1(\M^2; \ZZ_2)$.
For $D$$=$$3$,   the $Z$ and $X$ distance equals to the 1-systole and 2-systole respectively, i.e., $d_Z=sys_1(\M^3; \ZZ_2)$ and $d_X=sys_2(\M^3; \ZZ_2)$.  The overall distance is the smaller one of them: $d=\min (d_Z, d_X)$.

The lower bound of the distance and 1-systole can be obtained from the \textit{injectivity radius}. The injectivity radius $R$ of a $D$-dimensional manifold $\M^D$ is the greatest length such that any geodisic $D$-ball $B^D$ of raidus $R$ can be embedded anywhere in $\M^D$.  It has been shown that for $D$-dimensional arithmetic hyperbolic manifold $\M^D_h$ \footnote{For the definition and construction of the $D$-dimensional arithematic manifold, see Ref.~\cite{Guth:2014cj} for details.}, one has $R \ge c \log vol(\M^D_h)$ with some constant $c>0$ \cite{Guth:2014cj}.    Intuitively, the logarithmic injectivity  radius can also be understood with the fact that the tessellation of the hyperbolic manifolds has a tree-like expansion structure and hence it only takes $O(\log vol(\M^D_h))$ steps to wrap around the manifolds.

For the 1-systole of $\M^D_h$, we can obtain the following lemma: 
\begin{lemma}$sys_1(\M^D_h; \ZZ_2) \ge 2R \ge c' \log vol(\M^D_h)$ for some constant $c'>0$. 
\label{lem:1-systole}
\end{lemma}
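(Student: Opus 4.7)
The plan is to establish the two inequalities separately. The second, $2R \ge c' \log vol(\M^D_h)$, is immediate from the cited bound $R \ge c \log vol(\M^D_h)$ of Guth--Lubotzky for arithmetic hyperbolic manifolds \cite{Guth:2014cj}, just by setting $c' = 2c$. So the real content is the geometric inequality $sys_1(\M^D_h; \ZZ_2) \ge 2R$, which actually holds for any Riemannian manifold and only uses the defining property of the injectivity radius.

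For that step, I would argue by contradiction. Suppose $\gamma$ is a non-contractible closed loop with length $L < 2R$ representing a nontrivial class in $H_1(\M^D_h; \ZZ_2)$. Pick any basepoint $p \in \gamma$. For any other point $q \in \gamma$, the shorter of the two arcs of $\gamma$ from $p$ to $q$ has length at most $L/2 < R$, so the intrinsic distance satisfies $d(p,q) < R$. Hence the entire loop lies in the open geodesic ball $B_R(p)$.

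By definition of the injectivity radius $R$, the exponential map $\exp_p$ restricted to the ball of radius $R$ in the tangent space $T_p \M^D_h$ is a diffeomorphism onto $B_R(p)$. Equivalently, $B_R(p)$ lifts isometrically to a geodesic ball in the universal cover $\mathbb{H}^D$, which is contractible. Therefore any loop contained in $B_R(p)$, including $\gamma$, is null-homotopic in $B_R(p) \subset \M^D_h$, and hence also null-homologous over $\ZZ_2$. This contradicts the assumption that $[\gamma] \neq 0$ in $H_1(\M^D_h; \ZZ_2)$, proving $sys_1(\M^D_h; \ZZ_2) \ge 2R$. Combining with the arithmetic-manifold bound on $R$ gives the lemma.

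I do not expect a serious obstacle here; the argument is standard systolic geometry. The only point worth stating carefully is that the injectivity radius guarantees $B_R(p)$ is a \emph{topological} ball (simply connected), which is what actually kills the nontrivial homology class. The nontrivial input is external, namely the logarithmic lower bound on $R$ for arithmetic hyperbolic manifolds, which the excerpt already cites.
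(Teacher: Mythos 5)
Your proof is correct and follows essentially the same route as the paper's: a contradiction argument showing that a non-contractible loop of length less than $2R$ would be contained in an embedded geodesic ball of radius $R$, which is contractible, combined with the cited arithmetic bound $R \ge c\log vol(\M^D_h)$. Your write-up is in fact more careful than the paper's (the shorter-arc estimate and the $\exp_p$ argument make explicit why the loop fits in the ball and why the ball kills the class), but the underlying idea is identical.
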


\begin{proof}
One can prove this lemma by contradiction.   If $sys_1(\M^D_h; \ZZ_2)<2R$, then there exists a hyperbolic $D$-ball with radius $R$ such that a non-contractible 1-cycle with length less then the diameter ($2R$) of the $D$-ball  can wrap around $\M^D_h$ while it can still be completely embedded  inside this $D$-ball.   This is impossible since a $D$-ball cannot have a non-contractible 1-cycle and hence such $D$-ball cannot be embedded in $\M^D_h$.   Since we have $R \ge c \log vol(\M^D_h)$, we can obtain the second inequality in the lemma. 
\end{proof}
For $D$$=$$2$, i.e., 2D arithemetic hyperbolic manifold $\M^2_h$, we can obtain the lower bound of the 1-systole 
\be
 sys_1(\M^2_h; \ZZ_2) \ge c' \log vol(\M^2_h). 
\ee 
 This is equivalent to the bound on the code distance of the 2D hyperbolic code based on arithmetic manifold
 \be
 d \ge c'\log n,
 \ee
 where $n$ is the total number of qubits.

\subsection{Betti number and encoding rate scaling  in three and higher dimensions}

We now consider Betti number scaling and encoding rate in higher dimensions. 
In $2m$ (even) dimensions, one has the Chern-Gauss-Bonett theorem: 
\be
\chi(\M^{2m})= \frac{1}{(2\pi)^{D/2}} \int Pf(\Omega),
\ee
where $\chi(\M^{2m})$ is the Euler characteristics and $Pf(\Omega)$ is the Pfaffian of the curvature form of the Levi-Civita connection.  Since the Euler characteristics can be expressed as an alternating sum of the $i^\text{th}$ Betti number as  $\chi(\M^{2m})=\sum_{i=0}^{2m}(-1)^i b_i$, it is in general difficult to obtain a simple scaling between the $i^\text{th}$ Betti number $b_i$ and the hyperbolic volume in dimension higher than two from this formula.     The only exception is in 4D, where the 2nd Betti number is lower bounded by the volume in Ref.\cite{Guth:2014cj}, leading to the corresponding $(2,2)$-4D hyperbolic code \footnote{(2,2) means both the logical-X and logical-Z operators are supported on 2-cycles.} having constant encoding rate similar to the case of 2D hyperbolic code. Furthermore, in odd dimensions, the Chern-Gauss-Bonnet theorem simply does not apply.

On the other hand, independent of even or odd dimensions, there exists an upper bound for the sum of $i^\text{th}$ Betti numbers according to Gromov's theorem for a $D$-dimensional manifold \cite{gromov1982volume}:
\be\label{eq:Gromov}
\sum_{i=0}^{D} b_i(\M^D) \le C_D \cdot vol(\M^D),
\ee
where $b_i(\M^D)$ is the $i^\text{th}$ Betti number,  $C_D$ is some constant only depending on the dimension $D$, and $vol(\M^D)$ represents the volume of the $D$-dimensional manifold.  

Although the proof of the inequality Eq.~\eqref{eq:Gromov} in Gromov's original work \cite{gromov1982volume} is quite involved, we can show the asymptotic scaling (in the limit $vol(\M^D) \rightarrow  \infty$)
\be\label{eq:Gromov_asymptotic}
 \sum_{i=0}^{D} b_i(\M^D) \le O(vol(\M^D)), 
\ee
in a quite simple way from an information-theoretic perspective.  For a given $D$-manifold, one can define $D-1$ copies of homological quantum codes (with $\lfloor D/2 \rfloor$ distinct types due to Poincaré duality), where the physical qubits are assigned to the $i$-cells  and the code spaces are associated with $\mathcal{H}_{(i)}= \mathbb{C}^{|H_i(\mathcal{M}^D; \ZZ_2)|}$ repsectively ($i=1,2, \cdots, D-1$).   Consider the case that $\M^D$ only has one connected component, we have $b_0(\M^D)=b_D(\M^D)=1$.   Therefore $\sum_{i=0}^{D} b_i(\M^D)-2$ is just the total number of logical qubits in these $D-1$ copies of homological quantum codes.   On the other hand,  $vol(\M^D)$ is just proportional to the total number of physical qubits.  Therefore, we see that the asymptotic scaling of the Betti number has to be at most linear with the volume as suggested in Eq.~\eqref{eq:Gromov_asymptotic}, simply because one cannot encode more logical qubits than the number of physical qubits. Namely, a Betti number scaling like $O(vol(\M^D)^{1+\epsilon})$ for any positive $\epsilon$ is impossible.  

We hence see a natural and profound connection between the Betti number scaling and the quantum information  storage capacity per volume, which gives the information-theoretic interpretation of Gromov's bound Eq.~\eqref{eq:Gromov}.

In the $D=3$ case which is the focus of this paper, the number of logical qubit equals to the 1st/2nd Betti number  $k=b_1(\M^3)=b_2(\M^3)$ (the second equality is due to Poincaré duality).  We also have $b_0(\M^3)=1$ if $\M^3$ is simply connected.  We then have the following statement:
\\

\nin \textbf{Definition 1.} \textit{If the upper bound $b_1(\M^3) \le O(vol(\M^3))$ is saturated and the corresponding  manifold $\M^3$ has 1-systole and 2-systole growing with its volume, we say the corresponding 3-manifold $\M^3$ has the maximal asymptotic quantum information storage capacity per volume.}

Note that we require the systole to grow with the volume since we need the code distance $d$ to diverge in the asymptotic limit such that the logical error rate $p_L \propto (p/p_\text{th})^{\lfloor (d+1)/2 \rfloor}$ is exponentially suppressed to zero when the physical error rate $p$ is below the threshold $p_\text{th}$ and quantum memory remains fault-tolerant.   

In fact it is extremely difficult to find 3-manifolds with such optimal Betti number scaling and hence quantum information storage capacity.   Nevertheless, in Sec.~\ref{sec:generic_3-manifold} we will present a specific family of 3D hyperbolic manifolds based on Trorelli mapping torus construction which reaches such linear Betti number scaling.  Although we are not currently able to give a rigorous lower bound on the systoles of such 3-manifold, we believe the systoles will grow with the volume poly-logarithmically .

\begin{figure*}[hbt]
  \includegraphics[width=1.4\columnwidth]{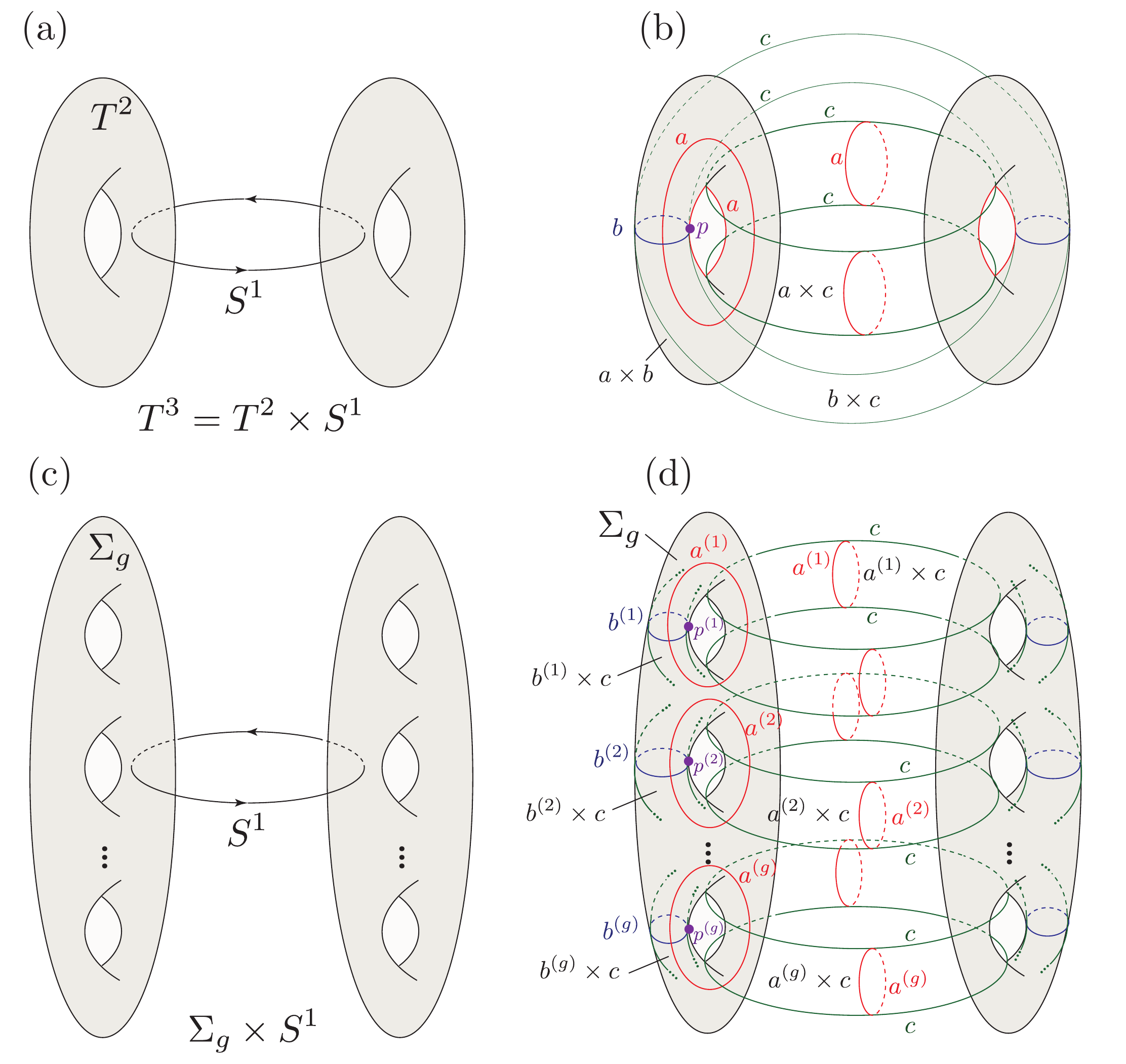}
  \caption{(a) Illustration of the 3-torus $T^3$ as a product of a 2-torus $T^2$ and a circle $S^1$.  (2) Three torus submanifold (2-cycles) $a \times c$, $b \times c$ and $a \times b$ have a triple intersection point at $p$.  (c)  Illustration of the product manifold $\Sigma_g \times S^1$.  (d) Submanifolds (2-cycles) $\Sigma_g$, $a^{(i)}\times c$ and $b^{(i)}\times c$ have a triple-intersection at point $p^{(i)}$.  The total triple intersection number is $g$.  }
  \label{fig:three-manifold_examples}
\end{figure*}

\section{Code construction and logical gate structure for 3-manifolds}\label{sec:code_construction}

{
In this section, we investigate three classes of homological codes defined on 3-manifolds, which can all be viewed as a fibre-bundle construction or more specifically a mapping torus construction. We further investigate the triple intersection invariants and the corresponding logical CCZ and CZ gates in these codes. 

In Sec.~\ref{sec:quasi-hyperbolic},  we first introduce the simplest construction: the quasi-hyperbolic code, which is formed by the product of a 2D hyperbolic code and a circle with logarithmic length. The code has almost-constant rate (up to logarithmic reduction) and logarithmic distance. This can be viewed as a fibre-bundle with a trivial (identity) twist.  
In Sec.~\ref{sec:fibre_bundle}, we consider a fibre-bundle construction first proposed by Freedman, Meyer and Luo \cite{Freedman_systole_2002}, where we take a twisted product of the 2D hyperbolic code and a circle with unit length, aiming to reduce the logarithmic overhead. We also study the triple intersection property in the presence of the twist.  Nevertheless, we notice that the presence of the twist can reduce the homology which still leads to an almost-constant rate with a logarithmic reduction, while the distance also has a logarithmic lower bound.  In Sec.~\ref{sec:generic_3-manifold}, we introduce a specific family of 3D hyperbolic codes called the Torelli mapping torus code, which are constructed from mapping tori (fibre bundle) of a pseudo-Anosov element in the Torelli subgroup. The pseudo-Anosov twist in this case does not lead to the reduction of homology and one hence maintain a constant rate, while the lower bound of the distance remains open.     
}

\subsection{Quasi-hyperbolic codes}\label{sec:quasi-hyperbolic}
\subsubsection{Code construction and code parameters}

We first construct a family of simple 3D quasi-hyperbolic codes whose code parameters are easy to bound and almost approach the desired optimal parameters for 3D homological LDPC codes up to logarithmic correction and already quite useful for practical purposes.

We start with a family of 2D arithmetic hyperbolic genus-$g$  surfaces $\Sigma_g$ ($g\ge 2$).  Now we take a product of $\Sigma_g$ with a circle $S^1$ and obtain a 3-manifold $\Sigma_g \times S^1$, as illustrated in Fig.~\ref{fig:three-manifold_examples}.  One can view this 3-manifold as a trivial fibre bundle, where the base is $S^1$ and the fibre is the genus-$g$ surface $\Sigma_g$.  We note that $\Sigma_g \times S^1$ is not hyperbolic, but the fibre bundle can be generalized to the mapping torus construction which can produce hyperbolic manifolds as will be discussed in the next section.  We call $\Sigma_g \times S^1$ a quasi-hyperbolic manifold, and the color code or surface code defined on that a quasi-hyperbolic code.  

Now we investigate the homologies of the $\Sigma_g \times S^1$.  For illustrative purpose, we start with the $g=1$ case, i.e., $\Sigma_{g=1} \times S^1=T^2\times S^1=T^3$ is a 3-torus, which is Euclidean (zero curvature).  The homology group of the 3-torus is $H_1(T^3)=H_2(T^3)=\ZZ_2 \oplus \ZZ_2 \oplus \ZZ_2 \equiv \ZZ_2^3$.  One choice of the 1st homology basis is $B_1=\{a,b,c\}$ as illustrated by Fig.~\ref{fig:three-manifold_examples}(b), where $a,b$ are homology cycles on the 2-torus $T^2$ and $c$ is the homology cycle going through the circle $S^1$.   The dual  2nd homology basis is $B_2=\{b\times c, a\times c, a\times b \}$ as illustrated in Fig.~\ref{fig:three-manifold_examples}(b), where we have the $i^\text{th}$ entry in the 1st homology basis being the Poincaré dual of the $i^\text{th}$ entry in the 2nd homology basis, due to the intersection condition $|a \cap (b\times c)|=|b \cap (a\times c)|=|c \cap (a\times b)|=1$ meaning that the dual pairs intersect at a single point.  One can see that each 2-cycle in $B_2$ is a 2-torus.

\begin{figure*}[hbt]
  \includegraphics[width=1.6\columnwidth]{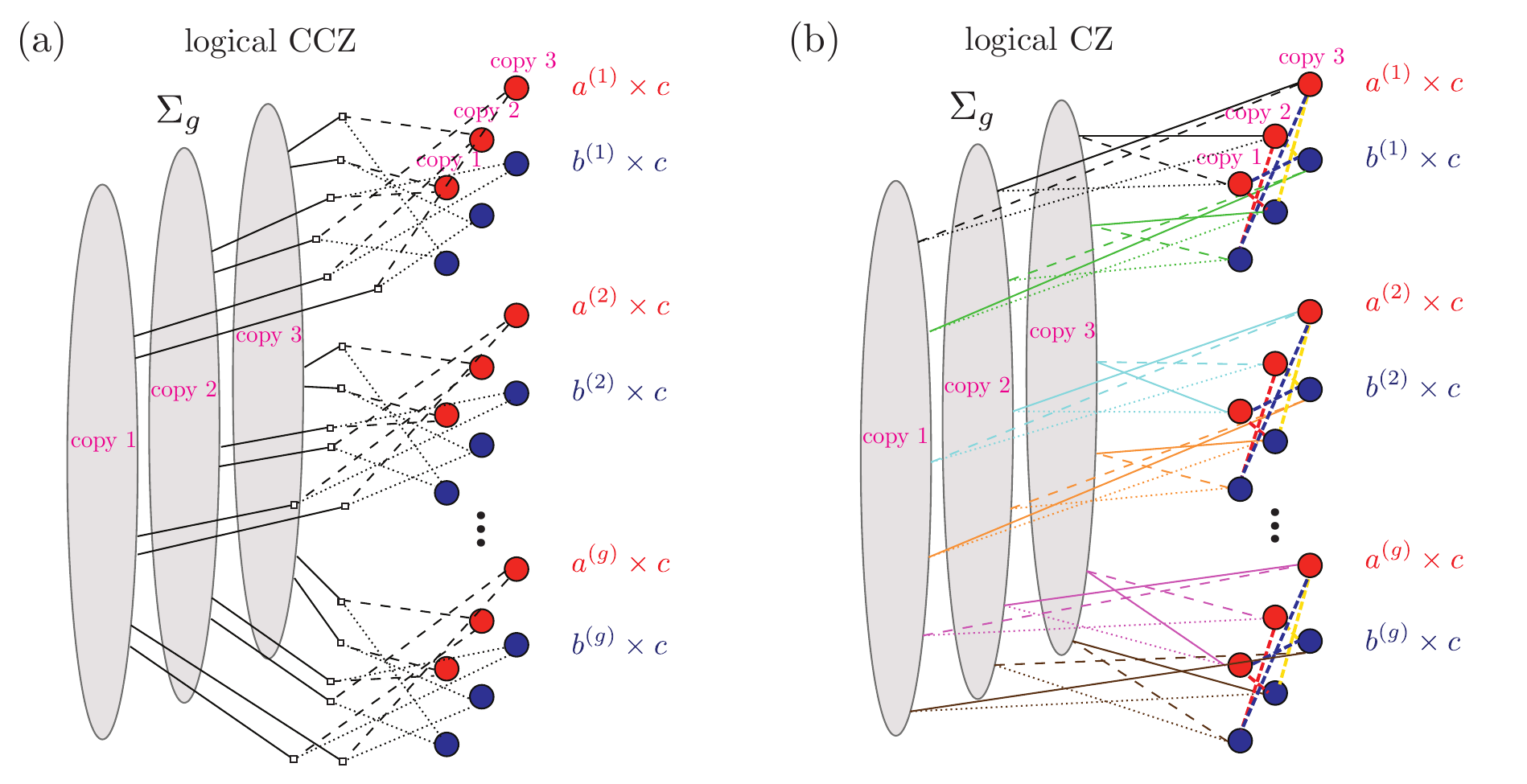}
  \caption{(a) The interaction hypergraph of the quasi-hyperbolic code. Each hyperedge (shown as a tri-junction) is connected to three vertices (represented by circles or ellipses), which corresponds to a logical CCZ gate.   (b) The colored interaction graph of the logical CZ gate.  Colored edge with the same type (represented by color and line type) corresponds to logical CZ gates being simultaneously applied.  For all edges connected to the logical qubits on $\Sigma_g$, each type only has two edges.  }
\label{fig:interaction_hyper-graph}
\end{figure*}

For $g \ge 2$, the  genus-$g$ surface is topologically equivalent to a connected sum of $g$ tori, therefore the  homology group of the surface can be decomposed as
\begin{align}
\non H_1(\Sigma_g; \ZZ_2)=& H_1(T^2 \# T^2...\#T^2; \ZZ_2)\\
\non =&H_1(T^2;\ZZ_2) \oplus H_1(T^2;\ZZ_2) ... \oplus H_1(T^2;\ZZ_2) \\
 =& \ZZ_2^2 \oplus \ZZ_2^2  \dots \oplus \ZZ_2^2 =\ZZ_2^{2g}.
 \end{align}
 We pick the following 1st homology basis on the genus-$g$ surface: $\{a^{(i)},b^{(i)} |  i=1,2, \dots, g\}$, where $a^{(i)}$ and $b^{(i)}$ are homology cycles from each torus participating in the connected sum.   The $g=2$ case is illustrated in Fig.~\ref{fig:three-manifold_examples}(d).  The non-contractible 1-cycle along the direction of $S^1$ is again called $c$.   The 1st-homology of the 3-manifold can be obtained from the Künneth formula: 
\begin{align}\label{eq:Künneth}
\non &H_1(\Sigma_g\times S^1; \ZZ_2) \\
\non =&[H_1(\Sigma_g; \ZZ_2)\otimes H_0(S^1; \ZZ_2)] \oplus [H_0(\Sigma_g; \ZZ_2)\otimes H_1(S^1; \ZZ_2)] \\
=& \ZZ_2^{2g} \oplus \ZZ_2 = \ZZ_2^{2g+1}
\end{align}
We hence obtain the Betti number $b_1$ and number lof logical qubit $k$ as 
\be
k=b_1(\Sigma_g\times S^1)=Rank(H_1(\Sigma_g\times S^1; \ZZ_2))=2g+1.
\ee

We choose the following 1st homology basis for the 3-manifold: 
$B_1=\{a^{(i)},b^{(i)},c | \  i=1,2, \dots, g\}$. The dual 2nd homology basis is $B_2=\{b^{(i)}\times c,a^{(i)} \times c, \Sigma_g | \  i=1,2, \dots, g\}$, which satisfies the following intersection condition with the 1-cycles in $B_1$:   
$|a^{(i)} \cap (b^{(i)}\times c)|=|b^{(i)} \cap (a^{(i)}\times c)|=|c \cap \Sigma_g|=1$.  Note that the 2-cycles $b^{(i)}\times c$ and $a^{(i)}\times c$ are all 2-tori, while the 2-cycle labeled by $\Sigma_g$ is the genus-$g$ fibre surface.  

Now we discuss the code parameter scaling of this family of quasi-hyperbolic codes.   The area $A$ of the genus-$g$ surface $\Sigma_g$ is determined by the Gauss-Bonnet formula Eq.~\eqref{eq:Gauss2}, i.e., $A=4\pi(g-1)$.   Based on the injectivity radius scaling and Lemma~\ref{lem:1-systole}, we can obtain the lower bound of the $\ZZ_2$ 1-systole of $\Sigma_g$ as 
\be
sys_1(\Sigma_g; \ZZ_2) \ge c' \cdot \log A.
\ee
Now when taking the product of $S^1$, the increase of the length of $S^1$ will introduce volume/qubit overhead. On the other hand,  its increase also leads to the growth of the non-contractible 1-cycle $c$ along $S^1$ and hence the code distance.  The increase of the code distance stops when the length of $S^1$ reaches the systole of the genus-$g$ surface $sys_1(\Sigma_g; \ZZ_2)$.  Therefore, we choose the length of $S^1$ to be exactly the systole of the genus-$g$ surface, i.e., $l= O( \log A)$ in order to optimally balance the loss of the encoding rate and gain in the distance.   The code distance hence scales as $d \propto l=O(\log A)$.   The volume $V$ of the quasi-hyperbolic 3-manifold is
\be\label{scaling_quasi_1}
V = A \cdot l = 4\pi(g-1)\cdot O( \log A)=(g-1) \cdot O(\log A)
\ee
Meanwhile, we have the following identity
\be\label{scaling_quasi_2}
V=A\cdot O(\log A).
\ee
By combining Eq.~\eqref{scaling_quasi_1} and Eq.~\eqref{scaling_quasi_2}, we obtain the Betti number scaling and encoding rate as 
\be
b_1/V=O(g/V) = O\left(1/\log (V)\right) \ \Longrightarrow \  k/n=O(1/\log (n)). 
\ee
Meanwhile, the code distance scaling as
\be
l=O(\log (V)) \ \Longrightarrow \ d = O(\log (n)).
\ee

\subsubsection{Triple intersection and logical gates}\label{sec:triple_quasi-hyperbolic}

We first look at the simplest example of 3-torus illustrated in Fig.~\ref{fig:three-manifold_examples}(a,b).  The three non-contractible 2-cycles intersect as
\be
(a\times b)\cap (b\times c)\cap (c\times a)
= (a\times b) \cap c = p,
\ee
where the first equality comes from the fact that intersection of $b\times c$ and $c \times a$ is the 1-cycle $c$. Here, $p$ represents a single point where the 1-cycle $c$ intersects with the 2-cycle (torus) $a\times b$. Therefore we obtain the $\ZZ_2$ intersection number $|(a\times b)\cap (b\times c)\cap (c\times a)| = 1$.  According to Eq.~\eqref{eq:CCZ_formula}, a transversal $T$ gate leads to the following logical gate:
\be
\widetilde{T} =  \prod_{(r,s,l)} \lo{\text{CCZ}}((a\times b; r),(b\times c; s),(c\times a; l)),
\ee
 implying a global logical CCZ gates  applied among the 6 triplets of logical qubits, where $(r,s,l)$ represent labels of three different toric-code copies with arbitrary possible permutations (6 permutations in total).   According to Eq.~\eqref{eq:CZ_formula}, parallelizable transversal CZ gate can be applied along the 2-cycle $a\times b$ between toric-code copy $r$ and $s$ ($r \neq s$),  which effectively gives rise to the following logical gate
\be
\widetilde{\text{CZ}}^{(r,s)}_{a\times b} =  \lo{\text{CZ}}((b \times c; r),(c \times a; s)) \lo{\text{CZ}}((b \times c; s),(c \times a; r)).
\ee
One can apply transversal CZ gate along the other 2-cycles, and the results can be obtained similar by simple permutations.  

We then look at the general genus-$g$ case.  We check the intersection between the triplet:
\be
\Sigma_g \cap (a^{(i)} \times c) \cap (b^{(i)} \times c)= \Sigma_g \cap c = p^{(i)},
\ee 
where they intersect at a single point $p^{(i)}$. There are in total $g$ such intersecting triplets, while there is no other triplet which has non-trivial intersection.  
 A transversal $T$ gate leads to the following logical gate:
\be
\widetilde{T} =  \prod_{(r,s,l)} \lo{\text{CCZ}}((\Sigma_g; r),(a^{(i)} \times c; s),(b^{(i)} \times c; l)),
\ee
where $(r,s,l)$ represent labels of three different toric-code copies with arbitrary possible permutations.

The structure of the logical CCZ gates is encoded in the interaction hypergraph introduced in Sec.~\ref{sec:interaction_hypergraph}, as illustrated in Fig.~\ref{fig:interaction_hyper-graph}(a).  In this figure, each circle or ellipse corresponds to a vertex in the interaction hypergraph and represents a logical qubit.  In particular, the three long ellipses represent logical qubits labeled by the genus-$g$ surface $\Sigma_g$ on the three toric-code copies, while the triplet of red (blue) circles represent logical qubits labeled by  $a^{(i)}\times c$ ($b^{(i)}\times c$) on the three copies respectively.  The vertices are connected by hyperedges composed of lines with three types (solid, dashed, dotted) meeting at a tri-junction, which represent the triple intersection between the three logical-$X$  membranes (2-cycles).    Each hyperedge corresponds to a logical CCZ gate acting on the three connected logical qubits.

Now we consider applying parallelizable logical CZ gates. The logical structure is again encoded in the interaction hypergraph shown in Fig.~\ref{fig:interaction_hyper-graph}(a).  Each hyperedge connects three logical qubits. If one applies the transversal gate 
$\widetilde{\text{CZ}}^{(r,s)}_{\alpha_2}$ between copy $r$ and $s$ on the membrane $\alpha_2$ labeling one logical qubit, the other pairs of logical qubits connected by the same hyperedge will be acted by the logical CZ gate at the same time.

For the quasi-hyperbolic code, it is wiser to apply transversal CZ gate only along the sub-manifold (2-torus) $a^{(i)}\times c$ (or $b^{(i)}\times c$) since it only has triple intersection with $b^{(i)}\times c$ (or $a^{(i)}\times c$) and $\Sigma_g$. According to Eq.~\eqref{eq:CZ_formula}, when applying transversal CZ along the sub-manifold $a_i \times c$ between toric code copy $r$ and $s$, it gives rise to the following logical gate:   
\be\label{eq:CZ_gate_quasi_hyperbolic_n1}
\widetilde{\text{CZ}}^{(r,s)}_{a^{(i)}\times c} =  \lo{\text{CZ}}((b^{(i)} \times c; r),(\Sigma_g; s)) \lo{\text{CZ}}((b^{(i)} \times c; s),(\Sigma_g; r)).
\ee
Similarly, one can obtain
\be\label{eq:CZ_gate_quasi_hyperbolic_n2}
\widetilde{\text{CZ}}^{(r,s)}_{b^{(i)}\times c} =  \lo{\text{CZ}}((a^{(i)} \times c; r),(\Sigma_g; s)) \lo{\text{CZ}}((b^{(i)} \times c; s),(\Sigma_g; r)).
\ee
The above expressions suggest that the logical qubit labeled by the genus-$g$ surface $\Sigma_g$ (or equivalently the dual 1-cycle $c$) can serve as a global entangling bus, such that the logical qubits labeled by the 2-tori $b^{(i)}\times c$  and $a^{(i)}\times c$ (equivalently labeled by the dual 1-cycle $a^{(i)}$ and $b^{(i)}$) can all be separately entangled with this global entangling bus.  Besides, there are also collective logical $CZ$ gates when applying the transversal $CZ$ operator on the $\Sigma_g$ surface and between toric-code copy $r$ and $s$:
\be
\widetilde{\text{CZ}}^{(r,s)}_{\Sigma_g} = \prod_i \lo{\text{CZ}}((a^{(i)} \times c; r),(b^{(i)} \times c; s)).
\ee

The corresponding logical CZ gate structure is also encoded in the \textit{interaction graph} shown in Fig.~\ref{fig:interaction_hyper-graph}(b).  This interaction graph is composed of vertices (circles and ellipses) corresponding to the logical qubits and colored edges corresponding to the logical CZ gate between the two connected two vertices (logical qubits).  The logical CZ gates on the edges with same color type will have to be applied simultaneously instead of independently.  In Fig.~\ref{fig:interaction_hyper-graph}(b), we use the combination of actual color and lines types (due to the limit of usable colors) to represent a particular type of colored edge.    Note that for all the shown logical CZ gates in the quasi-hyperbolic code, there is always a pair of colored edges of the same type which corresponds to logical CZ gates (coupling the same two toric-code copies) applied simultaneously, as can also be seen clearly in Eq.~\eqref{eq:CZ_gate_quasi_hyperbolic_n1} and Eq.~\eqref{eq:CZ_gate_quasi_hyperbolic_n2}.
Since there are only constant number of colored edges with the same type connected to logical qubits on $\Sigma_g$, the corresponding logical CZ gates are parallelizable and can be individually addressed, which is again due to the fact that they correspond to higher-form symmetries.

We conclude this section with the following theorem:
\begin{theorem}\label{theorem:quasi-hyperbolic}
There exists a family of quasi-hyperbolic color code with encoding rate $O(1/\log(n))$ and distance  $d$$=$$O(\log(n))$,  which supports fault-tolerant logical non-Clifford gates and parallelizable logical Clifford gates.
\end{theorem}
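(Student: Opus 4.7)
The plan is to build the code on the product 3-manifold $\mathcal{M}^3 = \Sigma_g \times S^1$, where $\Sigma_g$ is a 2D arithmetic hyperbolic genus-$g$ surface (for instance from the Guth--Lubotzky family \cite{Guth:2014cj}) equipped with constant negative curvature, and the circle $S^1$ is discretized to have length $l = \Theta(\log g)$, chosen to match the lower bound provided by Lemma~\ref{lem:1-systole}. A triangulation of $\Sigma_g$ with bounded local combinatorics will be extended to $\mathcal{M}^3$ by stacking $\Theta(\log g)$ slices along the circle direction; the color-code lattice is then obtained by the Bombin--Delgado fattening procedure recalled in Sec.~\ref{sec:color_and_toric}. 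Bounded combinatorics guarantees the LDPC property with bounded stabilizer weights.

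Next I would verify the rate and distance. Gauss--Bonnet Eq.~\eqref{eq:Gauss2} gives $A = 4\pi(g-1)$; hence $V = A\cdot l = \Theta(g\log g)$ and $n = \Theta(V)$. The K\"unneth formula Eq.~\eqref{eq:Künneth} gives $k = b_1(\mathcal{M}^3) = 2g+1 = \Theta(g)$, so $k/n = \Theta(1/\log n)$. For the distance, the 1-systole of $\Sigma_g$ is bounded below by $c'\log A$ by Lemma~\ref{lem:1-systole}, and every non-trivial 1-cycle of $\mathcal{M}^3$ either lies entirely in a $\Sigma_g$ slice (length $\ge c'\log A$) or wraps $S^1$ at least once (length $\ge l = \Theta(\log A)$), hence $sys_1(\mathcal{M}^3;\mathbb{Z}_2) = \Omega(\log n)$. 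The 2-systole, whose shortest representatives are either the fibre $\Sigma_g$ (area $\Theta(g)$) or a product $\gamma\times S^1$ with $\gamma$ a shortest 1-cycle in $\Sigma_g$ (area $\Theta(\log^2 g)$), is also $\Omega(\log n)$. Therefore $d = \min(sys_1, sys_2) = \Omega(\log n)$, yielding the claimed distance.

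Finally I would deduce the logical gate structure from Sec.~\ref{sec:non-Clifford}. With the 2-cycle basis $B_2 = \{\Sigma_g,\ a^{(i)}\times c,\ b^{(i)}\times c\}_{i=1}^g$, a direct intersection computation gives $|\Sigma_g \cap (a^{(i)}\times c) \cap (b^{(i)}\times c)| = 1$ for each handle (the triple meets at a single point $p^{(i)}$), while all other triples of basis elements have vanishing $\mathbb{Z}_2$ triple intersection. Equation~\eqref{eq:CCZ_formula} then implements a collective logical CCZ on the predicted triplets, providing the non-Clifford gate. For the Clifford side, the transversal $S$ (equivalently transversal CZ between two toric-code copies) applied on the submanifold $a^{(i)}\times c$ or $b^{(i)}\times c$ yields the addressable logical CZ gates of Eq.~\eqref{eq:CZ_gate_quasi_hyperbolic_n1}--\eqref{eq:CZ_gate_quasi_hyperbolic_n2} via Eq.~\eqref{eq:CZ_formula}, and their mutual commutativity (all diagonal) delivers parallelizability.

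The main obstacle I anticipate is the careful choice of triangulation and circle discretization that simultaneously (i) preserves the arithmetic property of $\Sigma_g$ needed to invoke Lemma~\ref{lem:1-systole}, (ii) produces a 4-valent, 4-colorable color-code lattice after fattening with bounded check weights, and (iii) calibrates the length $l$ of the discretized $S^1$ to match $\Theta(\log g)$ so as to optimally balance volume overhead against distance. Once such a triangulation is fixed, all remaining steps are direct applications of the homological formulas already proven in Sec.~\ref{sec:non-Clifford} and the systolic estimates of Sec.~\ref{sec:scaling}.
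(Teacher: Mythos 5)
Your proposal follows essentially the same route as the paper: the product manifold $\Sigma_g\times S^1$ with circle length calibrated to the surface systole, Gauss--Bonnet and K\"unneth for the rate, Lemma~\ref{lem:1-systole} for the distance, and the triple-intersection computation $|\Sigma_g\cap(a^{(i)}\times c)\cap(b^{(i)}\times c)|=1$ feeding Eqs.~\eqref{eq:CCZ_formula} and~\eqref{eq:CZ_formula} for the logical gates. The only difference is that you spell out the 1- and 2-systole lower bounds for the product manifold slightly more explicitly than the paper does, which is a harmless (indeed welcome) addition of detail rather than a different argument.
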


\subsection{Homological fibre bundle code}\label{sec:fibre_bundle}

\subsubsection{Code construction and code parameters}
The construction of the quasi-hyperbolic codes mentioned above can be viewed as a trivial case of a fibre bundle, where the genus-$g$ surface $\Sigma_g$ is the fibre and the circle $S^1$ is the base.  The length of the base circle scales as $\log(V)$, which increases the volume and qubit overhead by a factor of $\log(V)$.  One could make the base circle having unit length and hence not increasing the volume, but the systole and hence code distance will also become constant.  Now we consider implementing a twist along the fibre following the construction of the Freedman-Meyer-Luo 3-manifolds \cite{Freedman_systole_2002}, such that the systole (distance) will be greatly increased and scales with the volume even though the base circle has unit length.   We call the codes defined on such twisted manifold the homological fibre bundle codes, and those defined on the 4-colourable cellulation of the manifold the fibre bundle color codes.   As we will see, the twist also slightly reduces the Betti number of the manifold, and hence makes the encoding rate non-constant also.   This issue will be resolved in the more general mapping torus construction discussed in Sec.~\ref{sec:generic_3-manifold}.   Nevertheless, the homological fibre bundle code is still an interesting and inspiring example for stuyding the homology, systole, triple intersection and logical gates in more general 3-manifold.  It could also be easier for experimental implementation since its lattice can be embedded in constant number of planes (without crossing) connected with twisted vertical links. 

We first review the construction of such 3-manifolds. We start with the choice of the fibre being a particular family of arithmetic hyperbolic 2-manifolds $\Sigma_g$ which are the quotient of the infinite hyperbolic plane $\mathbb{H}^2$ by a Fuchian group. We consider the co-compact torsion free Fuchscian group $\Gamma_{(-1,p)}$ derived from the type $\frac{-1,p}{\mathbb{Q}}$ quaternion algebra with the prime number $p= 3 \ \text{mod} \ 4$.  One can explicitly write out the group $\Gamma_{(-1,p)}$ as 
\begin{align}
\non \Gamma_{(-1,p)} =& \bigg\{ 
A=\begin{pmatrix}
a+b\sqrt{p}, &  -c + d\sqrt{p} \\
c + d\sqrt{p}, & a-b\sqrt{p} \\
\end{pmatrix}:  a,b,c,d \in \mathbb{Z}, \\
 & \ det(A)=1
\bigg\} \bigg{/} \pm id.
\end{align}
For integers $N > 2$, we have the normal subgroup of $\Gamma_{(-1,p)}$:
\begin{align}
\non \Gamma_{(-1,p)}(N) =& \bigg\{ 
A=\begin{pmatrix}
1+N(a+b\sqrt{p}), &  N(-c + d\sqrt{p}) \\
N(c + d\sqrt{p}), & 1+N(a-b\sqrt{p}) \\
\end{pmatrix}  \\
&: a,b,c,d \in \mathbb{Z},
\ det(A)=1
\bigg\} \bigg{/} \pm id,
\end{align}
with bounded index 
\be\label{eq:subgroup_index}
O(N^2) \le [\Gamma_{(-1,p)}: \Gamma_{(-1,p)}(N)] \le O(N^3).
\ee
We hence obtain a family of closed arithmetic hyperbolic surfaces $\Sigma_g=\mathbb{H}^2/\Gamma_{(-1,p)}(N)$.  

As has been shown in Ref.~\cite{Freedman_systole_2002}, the class of closed hyperbolic surfaces $\Sigma_g=\mathbb{H}^2/\Gamma_{(-1,p)}(N)$ satisfies the following three properties:
\begin{enumerate}[(1).]
    \item The smallest eigenvalue of the Laplacian $\lambda_1(\Sigma_g) \ge c_1$,
    \item There exits an isometry $\tau : \Sigma_g \rightarrow \Sigma_g$, with order $|\tau| \ge c_2 (\log g)^{\frac{1}{2}}$,
    \item
    The map $\Sigma_g \rightarrow \Sigma_g / \langle\tau\rangle $ is a covering projection to the base surface $_gS =: \Sigma_g / \langle \tau \rangle$ (of genus  $< g$), which has injectivity radius $R(_gS) \ge c_3 (\log g)^{\frac{1}{2}}$.
\end{enumerate}
where $c_i$ ($i$=1,2,3) are positive constants independent of $g$.   We now demonstrate the above properties in the following paragraphs.

Using the property of the Fuchscian group, it has been proved in Ref.~\cite{Schaller1995ExtremalRS} that the injectivity radius of $\Sigma_g=\mathbb{H}^2/\Gamma_{(-1,p)}(N)$ is:
\be\label{eq:raidus_Fuchscian}
R(\mathbb{H}^2/\Gamma_{(-1,p)}(N)) = O(\log N).
\ee
The area of $\Sigma_g=\mathbb{H}^2/\Gamma_{(-1,p)}(N)$ is related to the area of $\mathbb{H}^2/\Gamma_{(-1,p)}$ via
\be
A(\mathbb{H}^2/\Gamma_{(-1,p)}(N)) = A(\mathbb{H}^2/\Gamma_{(-1,p)})\cdot [\Gamma_{(-1,p)}: \Gamma_{(-1,p)}(N)],
\ee
since both surfaces are obtained from compactifying $\mathbb{H}^2$ by quotienting out the corresponding isometry groups. Combined with the Gauss-Bonnet theorem  Eq.~\eqref{eq:Gauss2}, we know that the genus $g(N) \equiv genus(\mathbb{H}^2/\Gamma_{(-1,p)}(N))$ is proportional to the index $[\Gamma_{(-1,p)}: \Gamma_{(-1,p)}(N)]$. Using Eq.~\eqref{eq:subgroup_index}, we obtain 
\be\label{eq:genus_scaling}
O(N^2) \le g(N) \le O(N^3).   
\ee
Combined with Eq.~\eqref{eq:raidus_Fuchscian}, one can obtain the injectivity radius of $\Sigma_g$ as
\be\label{eq:injectivity_radius_vs_g}
R(\Sigma_g) = O(\log g).
\ee

We now define two sequence of natural numbers $J_k \equiv k$ and $N_k \equiv k^{\lfloor\log (k) \rfloor}$ with $k$$=$$1,2, \cdots, \infty$, which leads to $J_k|N_k$ ($J_k$ divides $N_k$). We further define $g_k$$\equiv$$g(N_k)$$\equiv $$genus(\mathbb{H}^2/\Gamma_{(-1,p)}(N_k))$ and $h_k$$\equiv$$h(J_k)$$\equiv $$genus(\mathbb{H}^2/\Gamma_{(-1,p)}(J_k))$.
According to Eq.~\eqref{eq:genus_scaling}, we can obtain 
\be
\log(g_k) = O(\log(N_k)) \ \text{and} \ \log(h_k) = O(\log(J_k)),
\ee
which leads to 
\begin{align}\label{eq:relation_g_h}
\non \log(g_k) &= O(\log (N_k)) = O\left(\log(k^{\lfloor \log (k) \rfloor})\right) \\
&=O(\log(k)^2) = O(\log (h_k)^2).
\end{align}
We can then use $h_k$ to construct an isometry with large order and translation length, which satisfies property (2) and (3) respectively.  

Since $J_k|N_k$, we have $\Gamma_{(-1,p)}(N_k) \lhd \Gamma_{(-1,p)}(J_k)$.  This leads to the fact that $\Sigma_{g_k} = \mathbb{H}^2/\Gamma_{(-1,p)}(N_k)$ is a \textit{covering space} of  $\Sigma_{h_k} = \mathbb{H}^2/\Gamma_{(-1,p)}(J_k)$, which defines the covering projection map:
\be\label{eq:projection_map}
\Sigma_{g_k} \rightarrow \Sigma_{h_k}.
\ee
We denote the shortest non-contractible closed simple curve in $\Sigma_{h_k}$ as $\alpha$, we hence have 
\be\label{eq:length_alpha}
length(\alpha) = \Omega(\log (h_k)), 
\ee
according to Eq.~\eqref{eq:injectivity_radius_vs_g}. Choose a base point $q \in \Sigma_{g_k}$ for $\alpha$, and then lift $\alpha$ to $\tilde{\alpha}$ in $\Sigma_{g_k}$.
We now introduce the isometry $\tau_{k}: \Sigma_{g_k} \rightarrow \Sigma_{g_k}$, satisfying $\tau_{k} \in \Gamma_{(-1,p)}(J_k)$.  The isometry $\tau_{k}$ connects both ends of $\tilde{\alpha}$.  The concatenated curve $\tilde{\alpha} (\tau_{k}\tilde{\alpha})(\tau_{k}^2\tilde{\alpha})\cdots (\tau_{k}^{n-1}\tilde{\alpha})$ is a closed \textit{geodisic loop}, which is non-contractible and hence to be longer than the injectivity radius of $\Sigma_{g_k}$.  This implies that 
\be
|\tau_{k}|\cdot length(\tilde{\alpha}) \ge R(\Sigma_{g_k})=O(\log g_k),
\ee
where the equality is due to Eq.~\eqref{eq:injectivity_radius_vs_g} and $|\tau_{k}|$ represents the order of $\tau_{k}$.
According to Eq.~\eqref{eq:relation_g_h} and  Eq.~\eqref{eq:length_alpha}, one hence obtains 
\be
length(\tilde{\alpha}) = length (\alpha) = O(\log (h_k)) = O(\log^{\frac{1}{2}}(g_k)), 
\ee
which givies rise to property (2): 
\be\label{eq:isometry_order_bound}
|\tau_{k}|= \Omega(\log^{\frac{1}{2}}(g_k)).
\ee
Intuitively one can understand this fact in the following way: since the length of the curve $\alpha$ is $O(\log^{\frac{1}{2}}(g_k))$, one needs to travel at least $\Omega(\log^{\frac{1}{2}}(g_k))$ times before reaching the non-contractible geodisc loop with length $\Omega(\log  g_k)$. Note that when comparing with properties (2) and (3), we can suppress the index $k$ since the sequence of $k$ generates the sequence of genus $g$. We should also keep in mind that only a subsequence of $\Sigma_g$ generated by the sequence of $k$ is under our consideration rather than choosing arbitrary integer $g$.

We now factor the previous covering Eq.~\eqref{eq:projection_map} as
\be
\Sigma_{g_k} \rightarrow  \Sigma_{g_k}/\langle \tau_k \rangle  \rightarrow \Sigma_{h_k}.
\ee
We define $\Sigma_{g_k}/\langle \tau_k \rangle=: {_gS}$.  Since $_gS$ covers $\Sigma_{h_k}$, we have obtained property (3), i.e.,
\be\label{eq:injectivity_radius_gS}
R(_gS) \ge R(\Sigma_{h_k}) \ge O(\log h_k) = O(\log^\frac{1}{2} (g_k)).
\ee
From now on, we drop all the index $k$ for conciseness.

In the above, we have demonstrated the properties of the hyperbolic surfaces $\Sigma_g$. We now construct a fibre bundle with $\Sigma_g$$=$$\mathbb{H}^2/\Gamma_{(-1,p)}(N)$ as the fibre and a unit-length circle $S^1 \equiv [0,1]$ as the basis, which is also called a \textit{mapping torus} construction in this special case (see Sec.~\ref{sec:generic_3-manifold} for the general construction).    We can introduce the new manifold via a twisted product: 
\begin{align}
\non \mathcal{M}^3_g =& (\Sigma_g \times \mathbb{R})/\big((x,t) \sim (\tau x, t+1)\big) \\
\equiv &  (\Sigma_g \times [0,1])/\big((x,0) \sim (\tau x, 1)\big),
\end{align}
which is a mapping torus of the isometry $\tau$.  One can interpret this construction more intuitively as follows: one starts with the product 3-manifold $\Sigma_g \times S^1$ where the circle $S^1$ is parametrized by $t \in [0,1]$.  One then cuts the 3-manifold at $t=1^- \sim 0^-$ and twists the hyperbolic surface $\Sigma_g$ by the isometry $\tau$ described by the action $\tau x$. One then glues the twisted surface parameterized by $\tau x$ at $t=1^- $ with the original surface parameterized by $x$ at $t= 0$ which completes the construction of $\M^3_g$. Note that the final glueing procedure is just the geometric interpretation of the identification  $(x,0) \equiv (\tau x, 1)$ in the mapping torus construction.  The twisted manifold is illustrated in Fig.~\ref{fig:twisted_manifold}.

\begin{figure*}[hbt]
  \includegraphics[width=1.6\columnwidth]{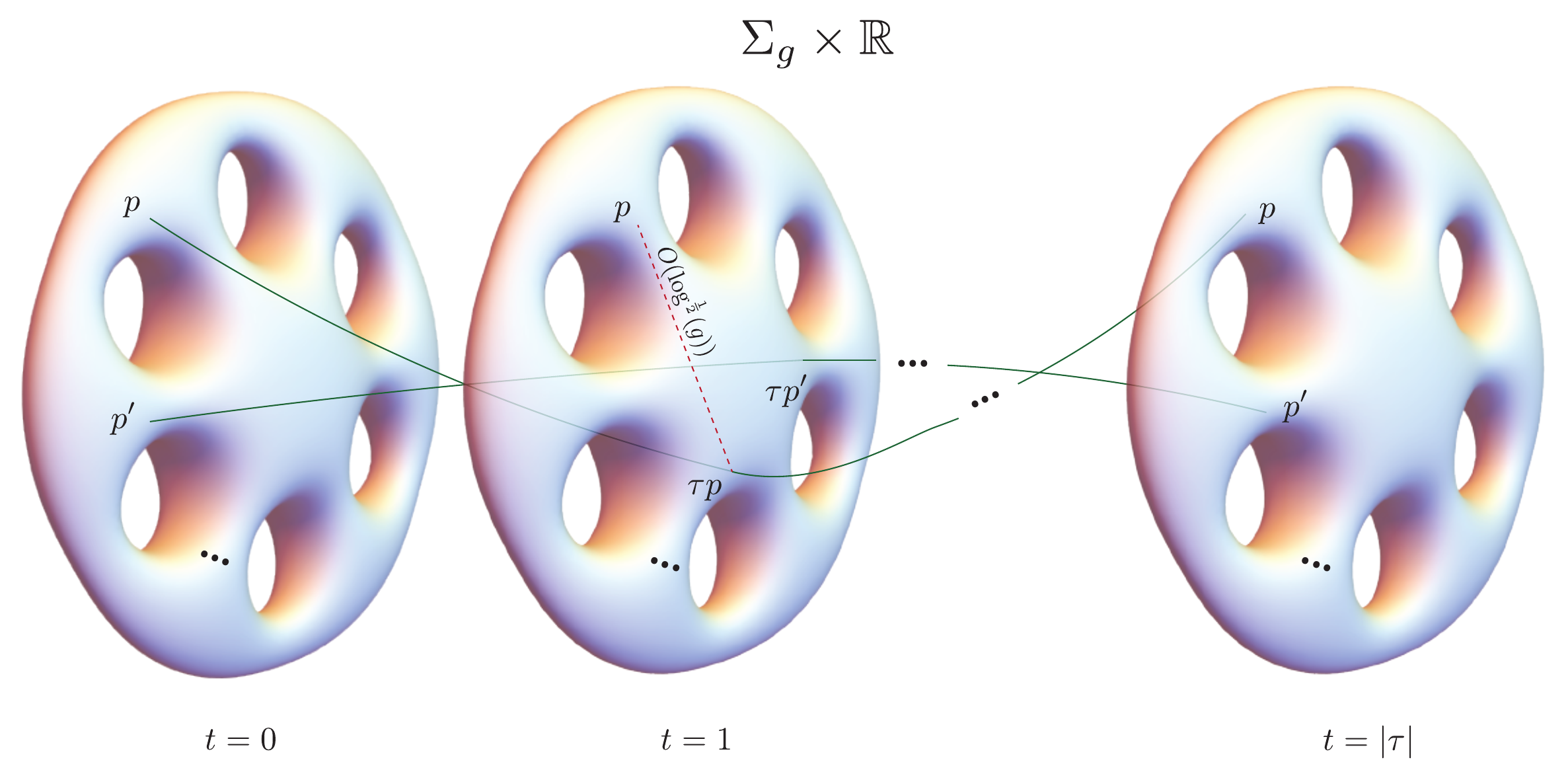}
  \caption{Illustration of the covering space $\Sigma_g \times \RR$ of the twisted manifold $\M^3_g$.  The twist $\tau$ maps a point $p$ on $\Sigma_g$ to $\tau p$ which is $O(\log^\frac{1}{2}g)$ away from its original position. It takes $|\tau|=O(\log^\frac{1}{2}g)$ steps for the point $p$ to come back to its original position.  These two conditions, as well as the 1-systole of $\Sigma_g$, make sure there is no non-contractible cycle with length smaller than $O(\log^\frac{1}{2}g)=O(\log^\frac{1}{2}V)$, which hence determines the 1-sytole of this manifold. }
  \label{fig:twisted_manifold}
\end{figure*}

Now we consider the Betti number scaling with the volume of the twisted manifold $\M^3_g$, as well as its systoles.  We state the following theorem: 
\begin{theorem}\label{theorem:3-manifold}
There exists a family of 3-manifolds $\M^3$ such that the 1st Betti number scales with the volume as $b_1(\M^3)= \Omega (V/\log^{\frac{1}{2}}(V))$  while the $\ZZ_2$ 1-systoles and 2-systoles grow with the volume as $sys_1(\M^3; \ZZ_2)=\Omega(\log^{\frac{1}{2}}(V))$ and $sys_2(\M^3; \ZZ_2)=\Omega(\log(V))$.
\end{theorem}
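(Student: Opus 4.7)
The plan is to establish the three asserted bounds separately. Since the base $S^1$ has unit length and the fiber $\Sigma_g$ has area $4\pi(g-1)$ by Gauss--Bonnet, the volume satisfies $V=\Theta(g)$, so bounds in $g$ and in $V$ are interchangeable. The only analytic inputs will be the three properties of $\tau$ recalled above: $R(\Sigma_g)=\Theta(\log g)$, $|\tau|=\Omega(\log^{1/2}(g))$, and the injectivity radius $R({_gS})=\Omega(\log^{1/2}(g))$ of the quotient ${_gS}=\Sigma_g/\langle \tau \rangle$.

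For the Betti number I will invoke the Wang exact sequence of the mapping torus $\Sigma_g\hookrightarrow \M^3_g\twoheadrightarrow S^1$ with monodromy $\tau$,
\[
H_1(\Sigma_g;\ZZ_2)\xrightarrow{1-\tau_*}H_1(\Sigma_g;\ZZ_2)\to H_1(\M^3_g;\ZZ_2)\to H_0(\Sigma_g;\ZZ_2)\xrightarrow{1-\tau_*}H_0(\Sigma_g;\ZZ_2),
\]
in which the rightmost map vanishes. Rank-nullity on a finite-dimensional $\mathbb{F}_2$-vector space gives $b_1(\M^3_g;\ZZ_2)\ge \dim_{\mathbb F_2}\ker(1-\tau_*|_{H_1(\Sigma_g;\ZZ_2)})$, i.e.\ the dimension of the $\tau$-invariant subspace. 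Working rationally via the transfer for the cyclic covering $\pi:\Sigma_g\to {_gS}$ and passing to $\ZZ_2$-coefficients by universal coefficients (or, equivalently, using the Hochschild--Serre $5$-term sequence of the cyclic covering with an $O(1)$ correction) this invariant dimension is at least $b_1({_gS};\mathbb Q)$. Applying Gauss--Bonnet to ${_gS}$, whose area equals $\mathrm{Area}(\Sigma_g)/|\tau|=\Theta(g/\log^{1/2}(g))$, gives $b_1({_gS};\mathbb Q)=\Theta(g/\log^{1/2}(g))$ and therefore $b_1(\M^3_g;\ZZ_2)=\Omega(V/\log^{1/2}(V))$.

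For the 1-systole, I will take an arbitrary non-contractible loop $\gamma$ in $\M^3_g$ and lift it to the cover $\Sigma_g\times\RR$ as a path from $(x_0,0)$ to $(\tau^k x_0,k)$, where $k$ is its winding number around the base, and perform a case analysis. (i) If $k=0$, then $\gamma$ lives in one fiber and is non-trivial there (because the fiber inclusion is $\pi_1$-injective from the Serre long exact sequence), so its length is $\ge sys_1(\Sigma_g)=\Omega(\log g)$ by Lemma~\ref{lem:1-systole}. (ii) If $\tau^k\ne \mathrm{id}$, then the injectivity-radius property of $_gS$ forces $d_{\Sigma_g}(x_0,\tau^k x_0)\ge 2R({_gS})=\Omega(\log^{1/2}(g))$, bounding the horizontal part of the length. (iii) If $\tau^k=\mathrm{id}$ with $k\ne 0$, then $|k|\ge |\tau|=\Omega(\log^{1/2}(g))$, bounding the vertical part. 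In every case the length is $\Omega(\log^{1/2}(V))$.

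The main obstacle is the 2-systole. Given a non-bounding $2$-cycle $\Sigma$, perturbed to be transverse to $\pi:\M^3_g\to S^1$, its slice $\Sigma_t:=\Sigma\cap \pi^{-1}(t)$ is a closed $1$-chain in the fiber whose $H_1(\Sigma_g;\ZZ_2)$-class is the image of $[\Sigma]$ under the Wang boundary map, hence constant in generic $t$. The Wang sequence at $H_2$ gives
\[
0\to \ZZ_2\to H_2(\M^3_g;\ZZ_2)\to H_1(\Sigma_g;\ZZ_2)^{\tau}\to 0,
\]
with the left factor the fiber class $[\Sigma_g]$. When $[\Sigma]$ maps non-trivially to the invariant quotient, $[\Sigma_t]\ne 0$ for almost every $t$, so $\mathcal H^1(\Sigma_t)\ge sys_1(\Sigma_g)=\Omega(\log g)$, and the coarea inequality gives $\mathrm{Area}(\Sigma)\ge \int_{S^1}\mathcal H^1(\Sigma_t)\,dt=\Omega(\log g)$. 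The hard case, which I expect to be the real technical obstacle, is when $[\Sigma]$ is the fiber class: here I plan to use that $\Sigma_g$ is totally geodesic in $\M^3_g$ (because $\tau$ acts by isometry) and run a calibration argument, with calibrating form the globally well-defined fiber area 2-form of the product structure, to conclude $\mathrm{Area}(\Sigma)\ge \mathrm{Area}(\Sigma_g)=\Theta(g)$; alternatively, one invokes the spectral gap $\lambda_1(\Sigma_g)\ge c_1$ and a Cheeger-type isoperimetric inequality on the fiber to exclude sweep-outs of $\Sigma$ by uniformly short contractible slices, again yielding $\mathrm{Area}(\Sigma)=\Omega(\log g)$. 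Combining both cases gives $sys_2(\M^3_g;\ZZ_2)=\Omega(\log V)$.
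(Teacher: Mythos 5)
Your treatment of the volume, the Betti number, and the 1-systole follows the paper's route in all essentials: the paper likewise computes $b_1(\M^3_g;\ZZ_2)=\dim I(\tau_*)+1$ from the mapping-torus (Wang/Mayer--Vietoris) sequence, identifies $I(\tau_*)$ with $H_1(\Sigma_g/\langle\tau\rangle;\ZZ_2)$ via the covering $\Sigma_g\to{}_gS$, and applies Riemann--Hurwitz; and its 1-systole argument is exactly your lift-and-case-split on the winding number $k$ (indeed your explicit handling of $k=0$ via $\pi_1$-injectivity of the fiber is slightly more careful than the paper's contradiction argument, which leaves that case implicit). Your transfer-plus-universal-coefficients route to $\dim I(\tau_*)\ge b_1({}_gS)$ is a legitimate variant of the paper's explicit loop-lifting, provided you note that $H_1(\Sigma_g;\ZZ)^{\tau}$ is a saturated sublattice so that its mod-2 reduction stays linearly independent.

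The genuine gap is the fiber-class case of the 2-systole, which you correctly flag as the obstacle but do not close. Your calibration sketch has two problems: ``totally geodesic'' is irrelevant to calibration, and, more seriously, the pairing $\int_{\Sigma}\omega$ with the fiber area form is not defined for a $\ZZ_2$-cycle $\Sigma$ (the class need not be represented by an oriented surface, and the homological invariance of the calibration integral is an integral/real-coefficient statement), so this route does not apply as stated. Your second sketch names the right ingredients ($\lambda_1(\Sigma_g)\ge c_1$ and an isoperimetric inequality) but omits the step that actually makes them bite. The paper's argument is: by coarea choose $t_0$ with $\ell:=|z\cap(\Sigma_g\times\{t_0\})|\le 2\,\mathrm{Area}(z)$; since this slice is null-homologous in the fiber, Buser's inequality (using $\lambda_1\ge c_1$) produces a capping 2-chain $F\subset\Sigma_g\times\{t_0\}$ with $\mathrm{Area}(F)\lesssim \ell$; cutting $z$ along $t_0$ and gluing in two copies of $F$ yields a $\ZZ_2$-cycle that lifts to $\Sigma_g\times[0,1]$ and still represents the fiber class there, whence its area is at least $\mathrm{Area}(\Sigma_g)=\Theta(g)$ (the 1-Lipschitz projection to $\Sigma_g$ has mod-2 degree one), giving $\mathrm{Area}(z)\gtrsim g$. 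Without the cap-cut-and-lift step there is no mechanism forcing a large area: a priori a fiber-class representative could try to be a ``thin'' surface, and it is precisely the lift to the cyclic cover that rules this out. So the fiber-class bound needs to be completed along these lines before the $sys_2=\Omega(\log V)$ claim is established.
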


\begin{proof}

See Appendix~\ref{sec:proof}.

\end{proof}

Note that the combination of the lower bound on both the 1-systole and 2-systole gives rise to the lower bound of the code distance, i.e., 
\be
d \sim min(sys_1(\M_g; \ZZ_2), sys_2(\M_g; \ZZ_2)) = \Omega(\log^{\frac{1}{2}}(n)).
\ee

\subsubsection{Triple intersection and logical gates}

Let $\alpha_2^{(i)} \subset \M^3_g$, for $1\leq i \leq 3$, be three oriented surfaces representing three distinct homology classes in $H_2(\M^3_g; \ZZ_2)$. Assume that $\alpha_2^{(1)}$ represents the fiber class, which equals $\Sigma_g$, of the fibration $\M^3_g \to S^1$. Let $\alpha_1^{(2)}$ and $\alpha_1^{(3)}$  be the Poincaré dual cycles of $\alpha_2^{(2)}$ and $\alpha_2^{(3)}$ respectively in $H_1(\M_g^3; \ZZ_2)$. Let $v_1$ and $v_2$ be the image of  ${\alpha}_1^{(2)}$ and  ${\alpha}_1^{(3)}$ respectively, in $I(\tau_*)$ under the isomorphism Eq.~\eqref{isomorphism}. In Sec.~\ref{sec:generic_3-manifold} and Eq.~\eqref{intersectionpointsmappingtorus} the following equality will be  shown 
\begin{align}
\label{intersectionpointsfreedman}
    |\cap_{1\leq i \leq 3} \alpha_2^{(i)}| = | v_1 \cap v_2 | \in \ZZ_2
\end{align}
where the left hand side is the $\ZZ_2$ triple intersection number of the 2-cycles and $|\cdot \cap \cdot |$ on the right hand side is the $\ZZ_2$ intersection of two 1-cycles on $\Sigma_g$ [see Eq.~\eqref{eq: intersectionsurface} for precise definition].

More precisely, let $|\cdot \cap \cdot |_R$ be the restriction of the intersection pairing on $H_1(\Sigma_g, \ZZ_2)$ to the subspace $I(\tau_*)$ and let $| \cdot \cap \cdot |_{\tau}$ be the intersection pairing on $H_1(\Sigma_g/\langle\tau\rangle, \ZZ_2)$. Since $\tau$ is an isometry, the isomorphism Eq.~\eqref{isomorphism} extends to the following commutative diagram
\be\label{eq:commuting_diagram}
\begin{tikzcd}
 \vline \,\cdot\cap \cdot |_R : I(\tau_*) \times I(\tau_*)  \arrow{r}{} \arrow[swap]{d}{} & \ZZ_2 \arrow{d}{Id} \\%
\vline \,\cdot\cap \cdot |_\tau:H_1(\Sigma_g/\tau; \ZZ_2) \times H_1(\Sigma_g/\tau; \ZZ_2) \arrow{r}{}& \ZZ_2
\end{tikzcd}
\ee

Now let $v_1$ and $v_2$ be in $H_1(\Sigma_g/\tau, \ZZ_2)$ such that 
$| v_1 \cap v_2 |_{\tau} =1$ and let $\hat{v}_1$ and $\hat{v}_2$ be their extension in $H_2(\M^3_g; \ZZ_2)$. Let $\alpha_2^{(1)}$ in $H_2(\M^3_g; \ZZ_2)$ be fiber class of the fibration $\M^3_g \to S^1$. Then 
\begin{align}
    |\alpha_2^{(1)} \cap \hat{v}_1 \cap \hat{v}_2| = \  | v_1 \cap v_2 |_{\tau} =1
\end{align}
It follows that there exist $genus(\Sigma_g/\langle\tau\rangle)=\frac{g-1}{|\tau|} -1 $  many pairs $A_1$ and $A_2$ in $H_2(\M^3_g; \ZZ_2)$ such that 
\begin{align}
    |\alpha_2^{(1)} \cap A_1 \cap A_2 | =1 
\end{align}
where $\alpha_2^{(1)}$ is the fiber class.

When we use the fattening procedure to make a 4-colorable celluation of the 3-manifold $\M^3_g$, we obtain the logical non-Clifford gates and prallelizable logical Clifford gates in the corresponding \textit{fibre-bundle color code}, or more specifically a collective logical CCZ gate and parallelizable logical CZ gates via transversal $T$ and $S$ gates respectively.  Interestingly, the corresponding interaction hypergraph encoding the structure of the logical gates is isomorphic to that of the quasi-hyperbolic code shown in Fig.~\ref{fig:interaction_hyper-graph}(a) up to a reduction in size by $O(1/\log^\frac{1}{2}(g))=O(1/\log^\frac{1}{2}(k))$.

We hence reach the following theorem:
\begin{theorem}\label{theorem:non-Clifford}
There exists a family of fibre-bundle color code with encoding rate $O(1/\log^{\frac{1}{2}}(n))$ and distance lower bound  $d$$=$$\Omega(\log^{\frac{1}{2}}(n))$,  which supports fault-tolerant logical non-Clifford gates and parallelizable logical Clifford gates.
\end{theorem}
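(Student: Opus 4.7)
The plan is to assemble Theorem~\ref{theorem:non-Clifford} from three ingredients already in place: the geometric estimates of Theorem~\ref{theorem:3-manifold}, the fattening procedure that converts any 3-manifold triangulation into a 4-colorable color-code lattice, and the topological logical-gate formulas Eqs.~\eqref{eq:CCZ_formula} and \eqref{eq:CZ_formula}. Since all three ingredients are proved or stated earlier, the content of Theorem~\ref{theorem:non-Clifford} is essentially a packaging statement; my task is to verify that the parameter counts and non-triviality claims survive the assembly.

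First I would take the mapping-torus family $\M^3_g = (\Sigma_g \times [0,1])/((x,0)\sim(\tau x,1))$ from Theorem~\ref{theorem:3-manifold}, fix a triangulation of bounded local geometry so that the number of simplices is comparable to $\mathrm{vol}(\M^3_g)$, and apply the Bombin-Delgado fattening of Sec.~\ref{sec:color_and_toric} to produce the fibre-bundle color code on a 4-valent, 4-colorable lattice $\L_c$. The number of physical qubits then satisfies $n = \Theta(V)$. Because a single 3D color code is constant-depth equivalent to three copies of a 3D toric code, $k = 3\,b_1(\M^3_g;\ZZ_2)$, and the bound $b_1 = \Omega(V/\log^{1/2}V)$ of Theorem~\ref{theorem:3-manifold} gives $k/n = \Theta(1/\log^{1/2}n)$. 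The $Z$- and $X$-distances equal $sys_1(\M^3_g;\ZZ_2)$ and $sys_2(\M^3_g;\ZZ_2)$, so the systole lower bounds $\Omega(\log^{1/2}V)$ and $\Omega(\log V)$ yield $d = \Omega(\log^{1/2}n)$.

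Next, to certify a non-trivial logical non-Clifford gate, I would invoke Eq.~\eqref{eq:CCZ_formula}: the transversal $T$ gate $\widetilde T$ on $\L_c$ implements $\lo{\mathrm{CCZ}}$ on every triple of logical qubits whose logical-$X$ membranes have $\ZZ_2$ triple intersection number $1$. By Eq.~\eqref{intersectionpointsfreedman} and the commutative diagram~\eqref{eq:commuting_diagram}, taking $\alpha_2^{(1)}$ to be the fibre class $\Sigma_g$ together with lifts $\hat v_1, \hat v_2 \in H_2(\M^3_g;\ZZ_2)$ of any pair $v_1,v_2 \in H_1(\Sigma_g/\langle\tau\rangle;\ZZ_2)$ with $|v_1\cap v_2|_\tau = 1$ gives $|\alpha_2^{(1)}\cap \hat v_1 \cap \hat v_2| = 1$. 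This produces at least $\mathrm{genus}(\Sigma_g/\langle\tau\rangle) = (g-1)/|\tau| - 1$ non-trivial triples, so $\widetilde T$ acts as a genuine collective logical CCZ rather than as logical identity. Transversality on a CSS code automatically makes the gate fault-tolerant.

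Finally, for the parallelizable Clifford gates I would apply Eq.~\eqref{eq:CZ_formula}: the transversal $S$ gate supported on any representative of a class $\alpha_2 \in H_2(\M^3_g;\ZZ_2)$ implements the product of logical $\lo{\mathrm{CZ}}$ gates indexed by the same triple-intersection data and, being diagonal, commutes with every other such operator, so different $\alpha_2$ give mutually parallelizable, individually addressable logical Cliffords; their count is $2^{b_2} = 2^k$, matching Sec.~\ref{sec:parallelizable}. The main obstacle I anticipate is bookkeeping at the triangulation step: one must choose the triangulation (and the subsequent fattening) so that the ratios $n/V$, $d/sys_i$, and the simplex count per unit volume are all $\Theta(1)$, so that the geometric bounds of Theorem~\ref{theorem:3-manifold} translate cleanly into the claimed code parameters and so that the transversal-$T$ and submanifold-supported transversal-$S$ operators remain bounded-depth local circuits on $\L_c$; this is standard bounded-geometry input but should be recorded explicitly.
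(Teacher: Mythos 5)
Your proposal is correct and follows essentially the same route as the paper: the code parameters are read off from Theorem~\ref{theorem:3-manifold} via the fattening procedure, the non-trivial collective CCZ is certified by the $(g-1)/|\tau|-1$ triple intersections of the fibre class with lifts of intersecting pairs on $\Sigma_g/\langle\tau\rangle$, and the addressable CZ gates come from the membrane-supported transversal $S$ gates of Eq.~\eqref{eq:CZ_formula}. The paper likewise treats the theorem as an assembly of these already-established ingredients rather than giving a separate argument.
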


\subsection{Generic 3D hyperbolic codes via the mapping torus construction and constant-rate 3D hyperbolic   codes}\label{sec:generic_3-manifold}

In this section, we discuss homological quantum LDPC codes based on generic hyperbolic 3-manifolds obtained via the mapping torus construction, which can be considered as a more general twisted product construction with the twist being the general mapping class instead of just isometry. We also develop the theory of evaluating and bounding the Betti number and triple intersection of such generic hyperbolic 3-manifolds.  In the end of this section, we provide a specific 3D hyperbolic color code construction based on mapping tori of a pseudo-Anosov element in the Torelli subgroup.  Such family of hyperbolic 3-manifolds have a 1st Betti number proportional to the volume with the corresponding code having constant rate, resolving the issue of Betti number reduction due to the twist in the homological fibre-bundle code based on the Freedman-Meyer-Luo 3-manifolds discussed in Sec.~\ref{sec:fibre_bundle}.   Meanwhile, such 3-manifolds still have non-trivial triple intersection, which guarantees the existence of fault-tolerant non-Clifford gate in the 3D hyperbolic color code defined on such manifolds.   Although we cannot provide a proof for the lower bound of the systole and hence code distance for technical reasons, we expect that the systole (distance) growing with the volume (total number of qubits) due to the twist in a similar manner to the homological fibre-bundle code. 

Let $\Sigma_g$ be a closed, connected, and oriented surface of genus $g>0$. Let $\Gamma(\Sigma_g)$ be the group of isotopy classes of orientation preserving diffeomorphisms of $\Sigma_g$, explicitly, 
\begin{align}
    \Gamma(\Sigma_g) = Diff^+(\Sigma_g)/ Diff^+_0(\Sigma_g) 
\end{align}
where $Diff^+(\Sigma_g)$ is the group of orientation preserving diffeomorphims of $\Sigma_g$ and $Diff^+_0(\Sigma_g) \subset Diff^+(\Sigma_g)$ is the connected component of the identity with respect to the compact open topology on $Diff^+(\Sigma_g)$. $\Gamma(\Sigma_g)$ is called the \textit{mapping class group}, with its elements called mapping classes. Let $f \in \Gamma(\Sigma_g)$, and define the 3 dimensional manifold 
\begin{align}
    \mathcal{M}(f) := \Sigma_g \times [0, 1] / \sim_{f} , 
\end{align}
where the equivalence $\sim_{f}$ is $(x, 0) \sim (f(x), 1)$. This manifold fibers over the circle with each fiber diffeomorphic to $\Sigma_g$, i.e.  
\begin{align}
    \pi: M(f) \to S^1, \quad \quad \pi^{-1}(x)\cong \Sigma_g \,\,\text{for all} \, x \in S^1 
\end{align}
and with monodromy equal to $f$. See figure \eqref{fig:mappingtorus} for illustration. Stallings' fibration theorem, see Ref.~\cite{Stallings}, implies that a necessary and sufficient condition for a  compact irreducible three manifold $\mathcal{M}$ to fiber over the circle is that there exist a map $\pi_1(\mathcal{M}) \to \ZZ$ with kernel which is finitely generated and not $\ZZ_2$. Notice that if $Rank H_1(\mathcal{M}; \ZZ)\geq 1$ then the abelianizaton of the fundamental group composed with the projection 
\begin{align}
    \pi_1(\mathcal{M}) \to H_1(\mathcal{M};  \ZZ) \to \ZZ
\end{align}
gives the desired map in the Stallings' theorem. Hence, there exists a surface $\Sigma_g$ and a mapping class $f$ in $\Gamma(\Sigma_g)$ such that 
\begin{align}
    \mathcal{M} \cong \mathcal{M}(f)
\end{align}
On the other hand, it is entirely possible that $\mathcal{M}$ fibers over the circle in infinitely many different ways. Ian Agol's breakthrough proof of Thurston's virtual fibering conjecture, see \cite{Agol}, implies that any compact, closed, and oriented hyperbolic three manifold $\mathcal{M}$ has a finite degree cover 
\begin{align}
\mathcal{M}(f) \to \mathcal{M}
\end{align}
such that $\mathcal{M}(f)$ is a pseudo-Anosov mapping torus of a mapping class $f$ in $\Gamma_g$ for some $g$.

\begin{figure}[hbt]
  \includegraphics[width=0.6\columnwidth]{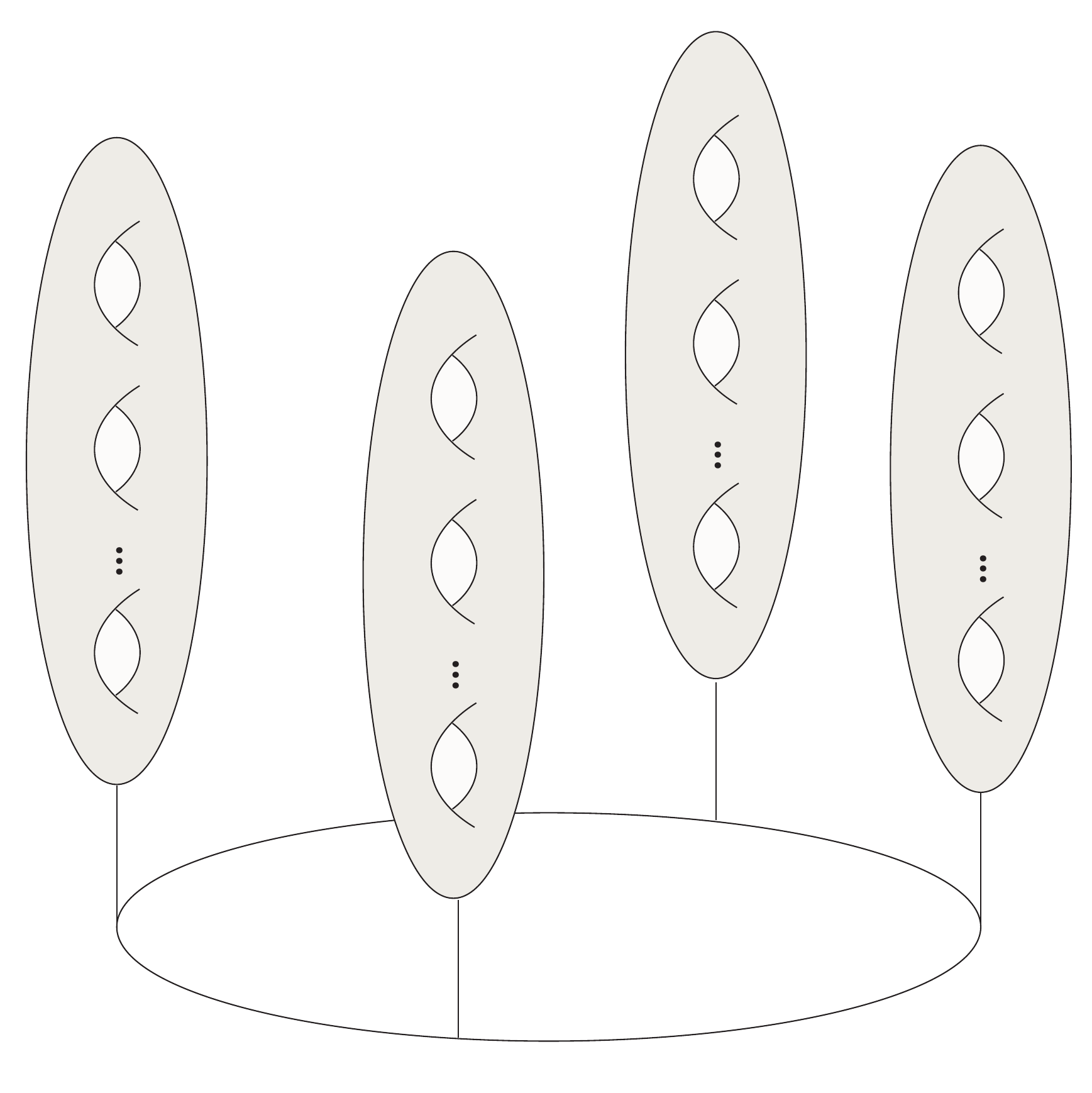}
  \caption{Mapping torus of a genus-$g$ surface.}
  \label{fig:mappingtorus}
\end{figure}

The action of $f$ on $\Sigma_g$ induces an action on the lattice $H_1(\Sigma_g; \ZZ)\cong \ZZ^{2g}$. Let $\hat{f}$ denote this induced action on $H_1(\Sigma_g; \ZZ)$; $\hat{f}$ preserves the intersection pairing on $H_1(\Sigma_g; \ZZ)$ and belongs to the group $Sp(2g, \ZZ)$. In fact, this procedure of inducing action on the homology gives a short exact sequence
\begin{align}
    1\to T_g \to \Gamma(\Sigma_g) \to Sp(2g, \ZZ) \to 1
\end{align}
where the kernel $T_g$ is called the Torelli group. 

For $g>1$, one has the Nielsen-Thurston  classification of elements of $\Gamma(\Sigma_g)$, which generalizes the classification of elements of $\Gamma(\Sigma_1)=SL(2, \ZZ)$. This classification shows that every mapping class is either  finite order, reducible, or a pseudo-Anosov element. A mapping class  $f$  is finite    order  if there exists an integer $N>1$ such that $f^N$ is isotopic to the identity. A     mapping    class $f$ is reducible     if  it preserves a finite union of disjoint simple closed curves in $\Sigma_g$. A mapping class is pseudo-Anosov if it is neither finite order nor reducible. See \cite{FarbMargalit}  for a complete discussion. Thurston's hyperbolization theorem states that a mapping torus $\mathcal{M}(f)$ is hyperbolic if and only if $g>1$ and $f$ is a pseudo-Anosov element. See theorem 13.4 and 13.5 of \cite{FarbMargalit} for geometric structures on mapping tori $\mathcal{M}(f)$ for $f$ reducible, finite order, and for the case when $g=1$.

\begin{figure}[hbt]
  \includegraphics[width=1\columnwidth]{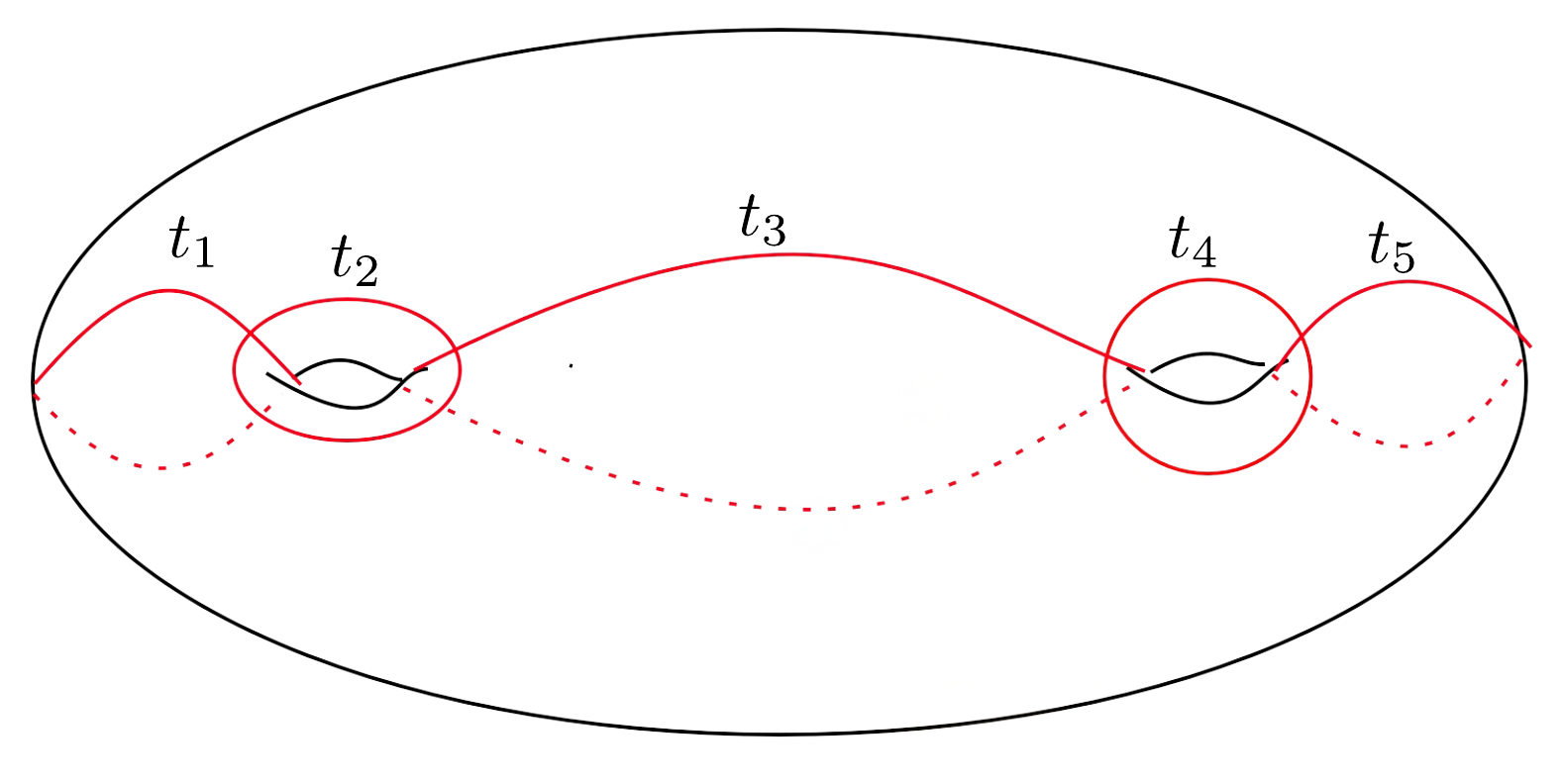}
  \caption{Humphries' generators of the genus three mapping class group. }
  \label{fig:humphriesgenerators}
\end{figure}

For a given $f$ in $\Gamma(\Sigma_g)$, the topology of $\mathcal{M}(f)$ can be computed as follows: decomposing the base of the fibration $\mathcal{M}(f) \to S^1$ into two open sets induces a decomposition of $\mathcal{M}(f)$. Applying the Mayer-Vietoris to this decompostion gives the following long exact sequence 
\begin{align}
    \dots \to H_1(\Sigma_g ; \ZZ) \xrightarrow{\hat{f}-Id}  H_1(\Sigma_g; \ZZ) \to H_1(\mathcal{M}(f); \ZZ) \to \ZZ \to 0 
\end{align}
which yeilds 
\begin{align}
\label{h1}
    H_1(\mathcal{M}(f); \ZZ) = \ZZ \oplus Coker(\hat{f} - Id)
\end{align}
where the infinite cyclic factor $\ZZ$ is generated by the class of the base of the fibration. 

There exist two matrices $P$ and $Q$ in $GL(2g, \ZZ)$ such that 
\begin{align}
    P.(\hat{f} - Id).Q = \begin{bmatrix}
    A(1) &  & &&&&\\
    & A(2) & &&&&\\
    & &\ddots & &&&\\
    & & & A(N)&&&\\
    &&&&0&&\\
    &&&&&\ddots &&\\
    &&&&&&&0
  \end{bmatrix}
\end{align}
where for all $1\leq i \leq N \leq 2g$ the diagonal entries $A(i)$ are  positive integers such that for all $i>1$, $A(i-1)$ divides $A(i)$. The diagonal matrix on the RHS is called the Smith normal form of $(\hat{f} -Id)$ and will be denoted as $SNF(\hat{f} -Id)$. The $SNF(\hat{f} -Id)$ does not depend on the pair $P,Q$ and is an invariant of the conjugacy class of $\hat{f}$ in $Sp(2g, \ZZ)$. In particular, 
\begin{align}
\label{coker}
    Coker(\hat{f} - Id) = \ZZ^{2g-N} \oplus \ZZ/A(1) \oplus \cdots \oplus \ZZ/A(N)
\end{align}
The diagonal entries $A(i)$ of $SNF(\hat{f} -Id)$ can be computed from the the entries of the matrix $(\hat{f} -Id)$ as follows: Let $D(i)$ be the greatest common divisor of all the determinants of $i \times i$ minors of $(\hat{f} -
Id)$, and let $D(0):=0$,  then 
\begin{align}
\label{algo}
    A(i)= \frac{D(i)}{D(i-1)}          
\end{align}
The fastest known algorithm for computing these integers $A(i)$ has runtime complexity 
\begin{align}
O(\norm{\hat{f}-Id}\log \norm{\hat{f}-Id} 16g^4\log 2g) 
\end{align}
The lattice $H_1(\mathcal{M}(f); \ZZ)$ modulo torsion is the lattice generated by the set of the following vectors 
\begin{align}
\label{invariant}
    \{v \in H_1(\Sigma_g; \ZZ) | \hat{f}(v) = v\} 
\end{align}
direct sum with an infinite cyclic factor $\ZZ$ coming from the homology class of the base of the fibration $M(f) \to S^1$. In particular 
\begin{align}
    1 \leq Rank (H_1(\mathcal{M}(f); \ZZ)) \leq 2g+1 
\end{align}
with the maximal 
\begin{align}
    Rank (H_1(\mathcal{M}(f); \ZZ)) = 2g+1 
\end{align}
if and only if $f$ belongs to the Torelli group $T_g$.

For any prime $p$  we can consider $H_1(\Sigma_g; \ZZ_p)$. Any mapping class $f$ in $\Gamma(\Sigma_g)$ also acts on $H_1(\Sigma_g; \ZZ_p$): The construction of the mapping torus and calculation of its homology $H_1(\mathcal{M}(f); \ZZ_p)$ goes through verbatim since the Smith normal form exists  over any principle ideal domain. Henceforth we shall restrict to $\ZZ_2$.

Let $\alpha_2^{(1)}$, $\alpha_2^{(2)}$, and $\alpha_2^{(3)}$ be three oriented surfaces contained in the 3-manifold $\mathcal{M}(f)$ which are representitives of three distinct classes in $H_2(\M(f); \ZZ_2)$. One can form transversal intersections of the three surfaces and calculate their $\ZZ_2$ intersection number. This count of $\ZZ_2$ intersection numbers defines a skew-symmetric 3-form on $H_2(\mathcal{M}(f); \ZZ_2)$ with values in $\ZZ_2$. This 3-form is the Poincaré dual of the cup-product on $H^1(M(f); \ZZ_2)$, i.e. if $\alpha^1_{(1)}$, $\alpha^1_{(2)}$, and $\alpha^1_{(3)}$ are Poicaré duals in $H^1(\M(f); \ZZ_2)$ then 
\begin{align}
\label{eq: algebraicintersection} 
    | \alpha_2^{(1)} \cap \alpha_2^{(2)} \cap \alpha_2^{(3)} |:=\int_{\M(f)} \alpha^1_{(1)} \cup \alpha^1_{(2)}\cup \alpha^1_{(3)}   \in \ZZ_2 
\end{align}

    Simplest example is $g=1$ and $f= ID$. In this case $M(f)=T^3=\RR^3/\ZZ^3$, $H_1(T^3; \ZZ_2)=\ZZ_2^3$. It is easy to see that the $\ZZ_2$ intersection number of the three standard cycles of $T^3$ is one [see Fig.~\ref{fig:three-manifold_examples}(a,b)]. In the case $g>1$ and $f = ID$ one has that $\mathcal{M}(f) = \Sigma_g \times S^1$ and the Künneth formula implies 
    \begin{align}
        H_1(\Sigma_g \times S^1; \ZZ_2) = H_1(\Sigma_g; \ZZ_2) \oplus H_1(S^1; \ZZ_2) 
    \end{align}
    Let $\alpha^{(1)}_2$ be the class in $H_2(\mathcal{M}(f); \ZZ_2)$ which is the Poincaré dual of the class of $S^1$ in $H_1(\mathcal{M}(f); \ZZ_2)$ and let $\alpha_2^{(2)}$ and $\alpha_2^{(3)}$ be Poincaré duals in $H_2(\mathcal{M}(f); \ZZ_2)$ of two arbitrary classes $\alpha_1^{(1)}$ and $\alpha_1^{(2)}$ in $H_1(\Sigma_g; \ZZ_2)$, then 
    \begin{align}
        |\alpha_2^{(1)} \cap \alpha_2^{(2)} \cap \alpha_2^{(3)}|= | \alpha_1^{(1)}\cap \alpha_1^{(2)} | \in \ZZ_2 
    \end{align}
where the right hand side is the intersection form  
\begin{align}
\label{eq: intersectionsurface}
|\cdot \cap \cdot | :H_1(\Sigma_g, \ZZ_2) \times H_1(\Sigma_g, \ZZ_2)\to \ZZ_2   
\end{align}
defined for $v_1$, $v_2$ in $H_1(\Sigma_g, \ZZ_2)$ by 
\begin{align}
| v_1 \cap v_2 | = \int_{\Sigma_g} v^*_1\cup v^*_2
\end{align}
where $v^*_1$ and $v^*_2$ are the Poincaré duals in $H^1(\Sigma_g, \ZZ_2)$ of $v_1$ and $v_2$.

Notice that for Poincaré duals $\alpha_2^{(i)}$, where $1\leq i \leq 3$, in $H_2(\mathcal{M}(f); \ZZ_2)$   of any three classes $\alpha_1^{(i)}$ in $H_1(\Sigma_g; \ZZ_2)$ one has that 
\begin{align}
\label{intnumber}
    |\alpha_2^{(1)} \cap \alpha_2^{(2)} \cap \alpha_2^{(3)}| = 0
\end{align}
See Fig.~\ref{fig:three-manifold_examples}(c,d).

In the case of arbitrary $g>0$ and $f$ in $\Gamma(\Sigma_g)$, it follows from \eqref{h1}, and Poincaré duality, that $H_2(\mathcal{M}(f); \ZZ_2)$ is 
\begin{align}
\label{intersectionpointsmappingtorus} 
Coker(\hat{f} - Id) = H_1(\Sigma_g; \ZZ_2)/(\hat{f}-Id)H_1(\Sigma_g; \ZZ_2) 
\end{align}
direct sum with $\ZZ_2$ generated by the class of the base of the fibration.  In particular, the intersection form on $H_1(\Sigma_g; \ZZ_2)$ restricts to an intersection form on  $Coker(\hat{f} - Id) $. Let $\alpha_2^{(1)}$ be the Poincaré dual in $H_2(\mathcal{M}(f); \ZZ_2)$ of the homology class generated by the base of the fibration, and let  $\alpha_2^{(2)}$ and $\alpha_2^{(3)}$ be Poincaré duals in $H_2(\mathcal{M}(f); \ZZ_2)$ of two arbitrary cycles $\alpha_1^{(1)}$ and $ \alpha_1^{(2)} $ in $Coker(\hat{f} - Id)$. Then we have
\begin{align}
     |\alpha_2^{(1)} \cap \alpha_2^{(2)} \cap \alpha_2^{(3)}| = |\alpha_1^{(1)}\cap \alpha_1^{(2)}| \in \ZZ_2 
\end{align}
In this case, if $\alpha_2^{(1)}, \alpha_2^{(2)}$, and $ \alpha_2^{(3)}$ are Poincaré duals in $H_2(\mathcal{M}(f);  \ZZ_2)$  of three arbitrary classes in $Coker(\hat{f} - Id)$ then it could well be that 
\begin{align}
    |\alpha_2^{(1)} \cap \alpha_2^{(2)} \cap \alpha_2^{(3)}| \neq 0
\end{align}
unlike in the case of Eq.~\eqref{intnumber}

We present a few examples: Figure \ref{fig:humphriesgenerators}
shows the set of Humphries generators for the mapping class group specialized to the case of genus two. The Dehn twist around each simple closed curve $t_i$ acts on $H_1(\Sigma_2; \ZZ)$ through the following symplectic matrices
\begin{align}
    T_1=\begin{pmatrix} 1	&0	&1	&0\\
0&	1&	0&	0\\
0&	0&	1&	0\\
0&	0&	0&	1
  \end{pmatrix}, 
    T_2=\begin{pmatrix} 1&	0&	0&	0\\
0&	1&	0&	0\\
-1&	0&	1&	0\\
0&	0&	0&	1
  \end{pmatrix},
    T_3=\begin{pmatrix} 1&	0&	1&	1\\
0&	1&	1&	1\\
0&	0&	1&	0\\
0&	0&	0&	1
\end{pmatrix} 
    \end{align}
    \begin{align}
    T_4=\begin{pmatrix} 1&	0&	0&	0\\
0&	1&	0&	0\\
0&	0&	1&	0\\
0&	-1&	0&	1
 \end{pmatrix},
    T_5=\begin{pmatrix} 1&	0&	0&	0\\ 
0&	1&	0&	1\\
0&	0&	1&	0\\
0&	0&	0&	1
  \end{pmatrix}
\end{align}
A generic word in $T_i$ does not preserve any homology class of the surface, hence the first/second homology group of the resulting mapping torus has rank one and the 3-form is trivial.

In the case we take some positive power $a$ of a Dehn twist around a single $t_i$, for example $t_5$, then the Smith normal form of $T^a_5 - Id$ is 
\begin{align}
    \begin{pmatrix}
        a	&0	&0	&0\\
0&	0&	0&	0\\
0&	0&	0&	0\\
0&	0&	0&	0
    \end{pmatrix}
\end{align}
It follows that 
\begin{align}
    H_2(\mathcal{M}(t_5); \ZZ) = \ZZ^4 \oplus \ZZ/a
\end{align}
The mapping class $t_5$ preserves three homology cycles, labelled $\alpha_1^{(1)}, \alpha_1^{(2)}$, and $\alpha_1^{(3)}$, in $\Sigma_2$, see Fig.~\ref{fig:threecycles}.
\begin{figure}[t]
  \includegraphics[width=0.8\columnwidth]{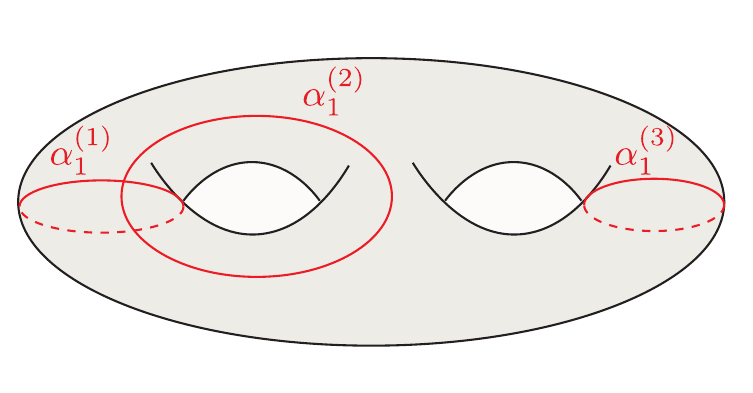}
  \caption{A choice of three cycles such that two of them have intersection number one and the other two pairs have intersection number zero. }
  \label{fig:threecycles}
\end{figure}

\begin{figure}[t]
  \includegraphics[width=1\columnwidth]{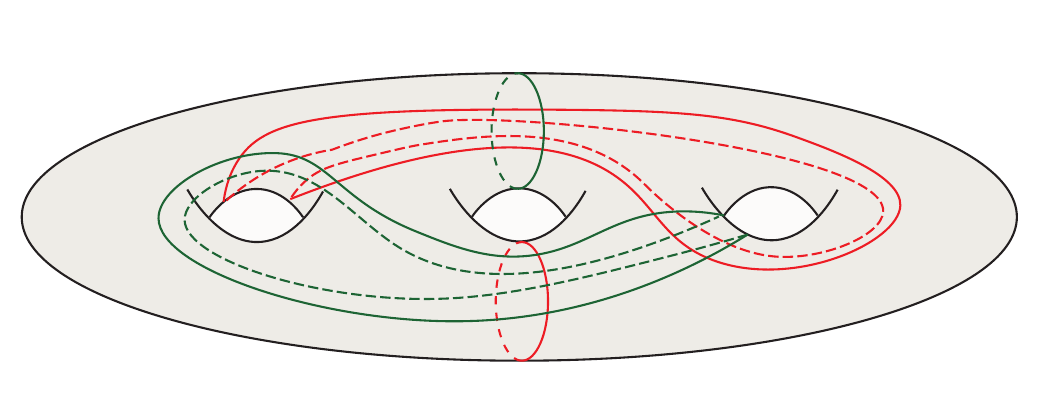}
  \caption{Two multicurves in red and green which generate pseudo-Anosov in the Torelli group of genus three using Thurston's construction. }
  \label{fig:pseudo-anosov}
\end{figure}

\begin{figure}[hbt]
  \includegraphics[width=0.8\columnwidth]{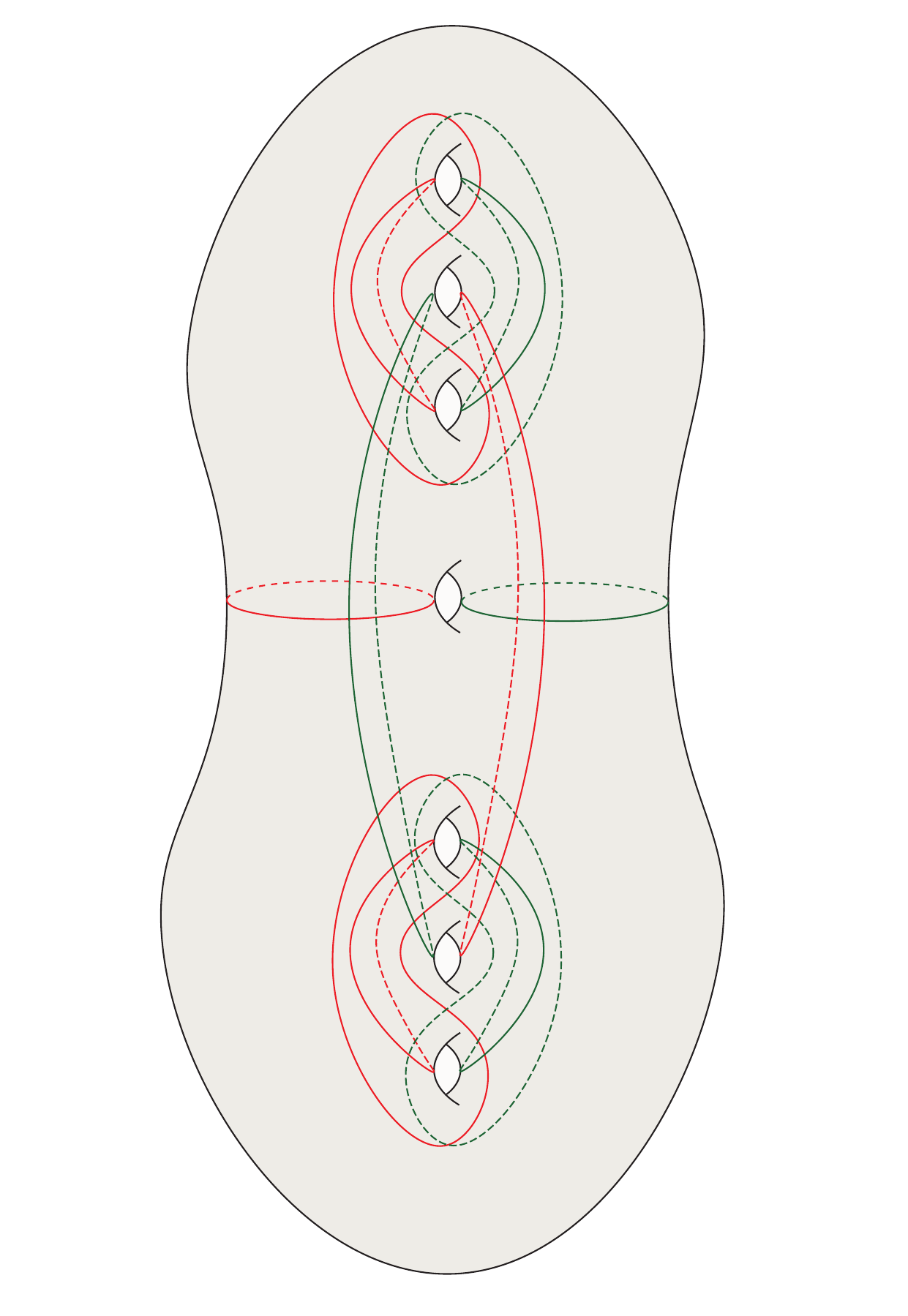}
  \caption{Two multicurves in red and green which generate pseudo-Anosov in the Torelli group of genus seven using Thurston's construction. }
  \label{example2}
\end{figure}

These three homology cycles extend to give three of the classes in $H_2(\mathcal{M}(t_5); \ZZ)$, which we denote by $\alpha_2^{(1)}, \alpha_2^{(2)}$, and $\alpha_2^{(3)}$, and the fourth class comes from the base of the fibration which we denote as $\alpha_2^{(0)}$. Let $\overline{\alpha}_2^{(1)}, \overline{\alpha}_2^{(2)}, \overline{\alpha}_2^{(3)}$, and $\overline{\alpha}_2^{(0)}$ be the images of these cycles in $H_2(\M(f); \ZZ_2)$ under the universal coeffecient theorem.  It is easy to see that 
\begin{align}
    |\overline{\alpha}_2^{(0)} \cap \overline{\alpha}_2^{(1)}\cap \overline{\alpha}_2^{(2)}| = 1 
\end{align}
and 
\begin{align}
  |\overline{\alpha}_2^{(0)}\cap \overline{\alpha}_2^{(1)}\cap \overline{\alpha}_2^{(3)} |  &=  |\overline{\alpha}_2^{(0)}\cap \overline{\alpha}_2^{(2)}\cap \overline{\alpha}_2^{(3)} | \\&=|\overline{\alpha}_2^{(1)}\cap \overline{\alpha}_2^{(2)}\cap \overline{\alpha}_2^{(3)} | \\
  &= 0 
\end{align}
It should be pointed out that since $T_5$ is a reducible mapping class $M(T_5)$ is not a hyperbolic manifold.

To construct a hyperbolic example with non-trivial triple intersection numbers, we shall use a mapping class from the Torelli subgroup. Torelli subgroup can be generated by performing Dehn twists around simple closed curves in $\Sigma_g$ which are null-homologous. Thurston gave a general construction of a family of pseudo-Anosov elements by performing powers of Dehn twists simultaneously along the components of a pair of filling multicurves, indeed, if all the components of the multicurves are null-homologous then one obtains a  pseudo-Anosov in the Torelli group\footnote{Jakob Nielsen had conjectured that there would be no pseudo-Anosov elements in the Torelli group. Thurston's construction was a counter example to this conjecture}. We recall Thurston's construction very briefly and then present an example: 

A multi curve $A$ on a surface $\Sigma_g$ is a collection of simple closed curves $\{a_1, \dots, a_n\}$ such that they do not intersect each other pairwise. A pair of multicurves $A$ and $B$ is called filling if the complement of the union of the two multicurves is a disjoint union of simply connected domains. Given two multicurves $A=\{a_1, \dots, a_n\}$ and $B=\{b_1, \dots, b_m\}$  such that they are filling, form the $n \times m $ matrix $N_{i,j} = I\{a_i, b_j\}$, where $I\{a_i, b_j\}$ denotes the geometric intersection number between the curve $a_i$ and $b_j$. The square matrix $N.N^T$ is a Perron-Frobenious matrix, denote by $\nu$ its Perron-Frobenious eigenvalue which is a positive real number. Let $T_A$, respectively $T_B$, denote the simultaneous Dehn twists along all the components of $A$, respectively all the components of $B$. Map the two mapping classes $T_A$ and $T_B$ to $\mathbb{P}SL(2, \RR)$ by the following map 
\begin{align}
    T_A \mapsto \begin{pmatrix}
        1 & \sqrt{\nu}\\ 0 &1 
    \end{pmatrix}, \quad \quad T_B \mapsto \begin{pmatrix}  1 & 0\\ -\sqrt{\nu} & 1\end{pmatrix}
\end{align}
Any word in $T_A$ and $T_B$ such that its image in $\mathbb{P}SL(2, \RR)$ has absolute value bigger than two corresponds to a pseudo-Anosov element. Moreover, the larger of its two eigenvalues corresponds to the stretch factor of the pseudo-Anosov. 

An example of the construction above  of pseudo-Anosovs in the Torelli group is given by the following collection of multi curves on a genus three surface, see Fig.~\ref{fig:pseudo-anosov}, drawn in green and red respectively.

One calculates that 
\begin{align}
    N.N^T = \begin{pmatrix} 4 & 8\\ 0 & 4 \end{pmatrix}
\end{align}
and 
\begin{align}
    \nu = 16(3 + 2\sqrt{2}) 
\end{align}
The absolute value of the trace of the image of $T_A.T_B$ is larger than two and one easily calculates that the corresponding stretch factor is 
\begin{align}
    23+16 \sqrt{2}+4 \sqrt{65+46 \sqrt{2}} = 91.2439
\end{align}
The authors in Ref.~\cite{KojimaMcShane}, see also Ref.~\cite{BrockBromberg}, give an upper bound on the volume of the hyperbolic mapping torus in terms of the logarithm of the stretch factor of the pseudo-Anosov, their result implies 
\begin{align}
    vol(\M(T_A.T_B))\leq 3\pi(2g-2)\log(91.2439)
\end{align}

Since $T_A.T_B$ belongs to the Torelli group and preserves every class $H_1(\Sigma_3; \ZZ_2)=\ZZ_2^6$, it follows that 
\begin{align}
    H_2(\M(T_A.T_B; \ZZ_2) = \ZZ_2^{7}
\end{align}
where one factor $\ZZ_2$ comes from the base of the fibration and $\ZZ_2^6$ comes from the fact that every class in $H_1(\Sigma_3; \ZZ_2)$ extends to a class in the first/second homology group of $\mathcal{M}(T_A.T_B)$. If we denote by $\alpha_2^{(0)}$ the class in $H_2(\mathcal{M}(T_A.T_B); \ZZ_2)$ generated by the fibers of the fibration $\M(T_A.T_B) \to S^1$, and denote by $\alpha_2^{(1)}$ and $\alpha_2^{(2)}$ two arbitrary classes in $H_2(\mathcal{M}(T_A.T_B; \ZZ_2)$ which are extensions of two classes $\alpha_1^{(1)}$ and $\alpha_1^{(2)}$ in $H_1(\Sigma_3; \ZZ_2)$, then we have 
\begin{align}
\label{eq: lowerbound}
    |\alpha_2^{(0)} \cap \alpha_2^{(1)} \cap \alpha_2^{(2)}| = | \alpha_1^{(1)} \cap \alpha_1^{(2)}| \in \ZZ_2   
\end{align}
where the right hand side is the intersection form on $H_1(\Sigma_3; \ZZ_2)$. 

We end this section with a result of Ref.~\cite{AgolLeiningerMargalit} which generalizes the above construction of genus three mapping tori of a pseudo-Anosov element in the Torelli subgroup to higher genus. Their proposition 3.1 along with the result of Ref.~\cite{KojimaMcShane} implies the following theorem:
\begin{theorem}
Let genus $g=3k+1$, then for all $k\geq2$ there exists a pseudo-Anosov mapping class $f_g$ in the Torrelli subgroup $T_g$ of the mapping class group $\Gamma(\Sigma_g)$ such that 
\begin{align}
\non     &b_1(\mathcal{M}(f_g))= Rank(H_1(\mathcal{M}(f_g); \ZZ_2)) =2g+1, \\
    &vol(\mathcal{M}(f_g)) \leq 36\pi (2g+1)\log(2),
\end{align}
\end{theorem}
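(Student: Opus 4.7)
The plan is to combine two off-the-shelf inputs that are already cited in the excerpt: Proposition 3.1 of \cite{AgolLeiningerMargalit}, which exhibits pseudo-Anosov elements in the Torelli group with a controlled stretch factor for every genus $g = 3k+1$, $k \geq 2$, together with the volume inequality of \cite{KojimaMcShane}, $vol(\mathcal{M}(f)) \leq 3\pi |\chi(\Sigma_g)| \log \lambda(f)$, where $\lambda(f)$ is the stretch factor of the pseudo-Anosov monodromy $f$. First, I would invoke Proposition 3.1 of \cite{AgolLeiningerMargalit} to obtain $f_g \in T_g \subset \Gamma(\Sigma_g)$ pseudo-Anosov with $\log \lambda(f_g) \leq 12 \log 2$ (an absolute constant, independent of $g$). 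Thurston's hyperbolization theorem then ensures that $\mathcal{M}(f_g)$ is hyperbolic.

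Next I would compute $b_1(\mathcal{M}(f_g))$. Since $f_g$ lies in the Torelli subgroup, by definition it acts trivially on $H_1(\Sigma_g;\ZZ)$, so $\hat{f}_g - Id = 0$ and $Coker(\hat{f}_g - Id) = H_1(\Sigma_g; \ZZ) = \ZZ^{2g}$. Plugging into Eq.~\eqref{h1} of the excerpt yields
\begin{align}
H_1(\mathcal{M}(f_g); \ZZ) = \ZZ \oplus \ZZ^{2g} = \ZZ^{2g+1},
\end{align}
where the leading $\ZZ$ factor is generated by the base of the fibration $\mathcal{M}(f_g) \to S^1$. The analogous Mayer-Vietoris computation with $\ZZ_2$ coefficients (valid since the Smith normal form argument goes through over any PID) gives $Rank\,H_1(\mathcal{M}(f_g);\ZZ_2) = 2g+1$, establishing the first equality of the theorem.

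For the volume bound, I would apply the Kojima-McShane inequality directly:
\begin{align}
vol(\mathcal{M}(f_g)) \leq 3\pi (2g-2) \log \lambda(f_g) \leq 36\pi (2g-2) \log 2 \leq 36\pi (2g+1) \log 2,
\end{align}
where the second inequality uses the stretch factor bound from the first step and the third is trivial. This completes the estimate.

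The only substantive step is the first one: producing a family of pseudo-Anosov mapping classes inside the Torelli group whose stretch factors do not blow up as the genus grows. Since Thurston's original construction (sketched in the excerpt) produces pseudo-Anosovs whose stretch factor depends on the Perron-Frobenius eigenvalue of the intersection matrix of a pair of filling multicurves, the key engineering problem solved by \cite{AgolLeiningerMargalit} is building, for every $g = 3k+1$, a pair of null-homologous filling multicurves on $\Sigma_g$ whose intersection matrix has Perron eigenvalue bounded by a $g$-independent constant. Everything else in the theorem is then formal bookkeeping.
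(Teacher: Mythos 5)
Your proposal is correct and follows essentially the same route as the paper, which likewise obtains the theorem by combining Proposition~3.1 of Agol--Leininger--Margalit (pseudo-Anosov Torelli elements of genus $g=3k+1$ with stretch factor bounded by an absolute constant) with the Kojima--McShane volume bound $vol(\M(f))\leq 3\pi|\chi(\Sigma_g)|\log\lambda(f)$, the Betti-number claim being immediate from Eq.~\eqref{h1} since $\hat f_g=Id$ for Torelli elements. The only point to verify against the cited reference is your reverse-engineered constant $\log\lambda(f_g)\leq 12\log 2$, which is exactly what the stated bound $36\pi(2g+1)\log 2$ requires.
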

\begin{corollary}
    There exists $g$ triple intersection points in $\M(f_g)$ for all $g$. 
\end{corollary}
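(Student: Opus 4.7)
The plan is to combine the theorem's statement that $f_g$ lies in the Torelli subgroup $T_g$ with the general formula for triple intersection numbers in a mapping torus, Eq.~\eqref{intersectionpointsmappingtorus} together with \eqref{eq: lowerbound}, applied to a symplectic basis of $H_1(\Sigma_g; \ZZ_2)$. First I would note that since $f_g \in T_g$ acts trivially on $H_1(\Sigma_g; \ZZ)$, its induced map $\hat{f}_g$ reduces mod $2$ to the identity on $H_1(\Sigma_g; \ZZ_2)$, so $\hat{f}_g - Id \equiv 0$ and hence
\begin{align}
H_2(\mathcal{M}(f_g); \ZZ_2) \;\cong\; \ZZ_2 \,\oplus\, H_1(\Sigma_g;\ZZ_2) \;=\; \ZZ_2^{2g+1},
\end{align}
where the extra $\ZZ_2$ summand is generated by the Poincaré dual $\alpha_2^{(0)}$ of the base class of the fibration $\mathcal{M}(f_g) \to S^1$, and every class $\alpha_1 \in H_1(\Sigma_g;\ZZ_2)$ extends to a class $\hat{\alpha}_2 \in H_2(\mathcal{M}(f_g);\ZZ_2)$.

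Next I would pick a standard symplectic basis $\{a_1, b_1, a_2, b_2, \ldots, a_g, b_g\}$ of $H_1(\Sigma_g;\ZZ_2)$ with respect to the intersection form \eqref{eq: intersectionsurface}, so that $|a_i \cap b_j| = \delta_{ij}$ and $|a_i \cap a_j| = |b_i \cap b_j| = 0$. Such a basis always exists because the $\ZZ_2$ intersection form on a closed oriented surface is the standard symplectic form of rank $2g$. Let $\hat{a}_i, \hat{b}_i \in H_2(\mathcal{M}(f_g);\ZZ_2)$ denote the extensions of these cycles guaranteed by \eqref{intersectionpointsmappingtorus}. Then by the key identity \eqref{eq: lowerbound}, applied with $\alpha_2^{(0)}$ the fiber class,
\begin{align}
|\alpha_2^{(0)} \cap \hat{a}_i \cap \hat{b}_i | \;=\; |a_i \cap b_i| \;=\; 1 \in \ZZ_2
\end{align}
for each $i = 1, \ldots, g$. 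This produces $g$ distinct triples of homology classes in $H_2(\mathcal{M}(f_g); \ZZ_2)$ whose $\ZZ_2$ triple intersection number equals one, which is exactly the claim of the corollary.

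I do not expect a serious obstacle here: the content is entirely a bookkeeping consequence of the theorem plus the cup-product/intersection dictionary already established in Sec.~\ref{sec:generic_3-manifold}. The only mild subtlety is to be careful that $\hat{f}_g \in Sp(2g,\ZZ)$ being the identity in integral homology (by definition of the Torelli group) descends to the identity mod $2$, which is immediate, and that the $g$ resulting triples $(\alpha_2^{(0)}, \hat{a}_i, \hat{b}_i)$ are genuinely distinct — which follows from the linear independence of $\{a_i, b_i\}_{i=1}^g$ inside $H_1(\Sigma_g;\ZZ_2)$ and hence of $\{\hat{a}_i, \hat{b}_i\}_{i=1}^g$ inside $H_2(\mathcal{M}(f_g);\ZZ_2)$.
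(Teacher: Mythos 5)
Your proposal is correct and follows essentially the same route as the paper's proof: choose a symplectic basis $\{a_i,b_i\}$ of $H_1(\Sigma_g;\ZZ_2)$, use the Torelli property to extend each basis element to a class in $H_2(\mathcal{M}(f_g);\ZZ_2)$, and apply Eq.~\eqref{eq: lowerbound} with the fiber class to obtain $|\alpha_2^{(0)}\cap\hat{a}_i\cap\hat{b}_i|=|a_i\cap b_i|=1$ for $i=1,\dots,g$. The extra remarks you include (that Torelli over $\ZZ$ descends to the identity mod $2$, and that the $g$ triples are distinct by linear independence) only make explicit steps the paper leaves implicit.
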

\begin{proof}
    There exist a basis $\alpha_1^{(i)}$, $\beta_1^{(i)}$ for $1\leq i \leq g$ of $H_1(\Sigma_g; \ZZ_2)$ such that $|\alpha_1^{(i)} \cap \beta_1^{(j)}| = \delta_{i, j}$.  Since $f_g$ is in the Torelli group, all $\alpha_1^{(i)}$ and $\beta_1^{(i)}$  extend to distinct elements in $H_2(\M(f_g), \ZZ_2)$. Denote these extensions by $\hat{\alpha}_2^{(i)}$ and $\hat{\beta}_2^{(i)}$ and denote by $\Gamma$ a representative in $H_2(\M(f), \ZZ_2)$ of the fiber class of the fibration.  It follows from \eqref{eq: lowerbound} that 
    \begin{align}
        |\Gamma \cap \hat{\alpha}_2^{(i)} \cap \hat{\beta}_2^{(j)}| = |\alpha_1^{(i)} \cap \beta_1^{(j)}| = \delta_{i, j}
    \end{align}
\end{proof}

Quite surprisingly, the corresponding interaction hypergraph of this code is identical to that of the quasi-hyperbolic code shown in Fig.~\ref{fig:interaction_hyper-graph}(a) only up to a change of the labels of the 2-cycles.

Moreover, $\mathcal{M}(f_g)$ has the least hyperbolic volume amongst all mapping tori of genus $g$ with the property that $b_1=Rank(H_1(\mathcal{M}(f_g); \ZZ_2)) =2g+1$. These pseudo-Anosov elements are generated using, just as above, Thurston's construction relying on Dehn twists around a pair of filling multicurves. Figure \ref{example2} shows the choice of multicurves $A$ and $B$ in red and green respectively for genus five; this is taken from Ref.~\cite{AgolLeiningerMargalit}.

When using the fattening procedure to create 4-colorable cellulation of the manfiold $\mathcal{M}(f_g)$, we can obtain a 3D hyperbolic color code with the transversal $T$ and $S$ gate being the collective logical CCZ gate and parallelizable logical CZ gate respectively. We call these codes Torelli mapping class codes.  We hence reach the following theorem:

\begin{theorem}\label{theorem:non-Clifford}
There exists a family of 3D hyperbolic color code with constant encoding rate,  which supports fault-tolerant logical non-Clifford gates and parallelizable logical Clifford gates.
\end{theorem}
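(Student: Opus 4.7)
The plan is to assemble the Torelli mapping-torus 3-manifolds $\mathcal{M}(f_g)$ from the preceding theorem with (i) the fattening construction of Bombin--Delgado to get a 3D color code, (ii) the triple-intersection formulas derived in Sec.~\ref{sec:non-Clifford_triple} to produce transversal $T$ and $S$ gates, and (iii) the volume/Betti bound of Kojima--McShane to conclude a constant rate.

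First I would start from the family $\mathcal{M}(f_g)$ with $g=3k+1$, $k\ge 2$, guaranteed by Ref.~\cite{AgolLeiningerMargalit}, and take a triangulation $\mathcal L$ with volume proportional to the simplex count. Applying the fattening procedure of Sec.~\ref{sec:color_and_toric} (Fig.~\ref{fig:ColorCodeLattice}) yields a 4-valent, 4-colorable cellulation $\L_c$, hence a valid 3D color code $CC(\L_c)$ on $\mathcal{M}(f_g)$ with $n=O(\mathrm{vol}(\mathcal{M}(f_g)))$ qubits and $k=b_1(\mathcal{M}(f_g); \ZZ_2)=2g+1$ logical qubits per toric-code copy (three copies in the color code). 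Combining this with the bound $\mathrm{vol}(\mathcal{M}(f_g))\le 36\pi(2g+1)\log 2$ gives $k/n=\Theta(1)$, establishing the constant rate.

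Next I would invoke the Corollary just proved: there exist $g$ triple intersection points, realized by triples $(\Gamma,\hat\alpha_2^{(i)},\hat\beta_2^{(i)})$ with $|\Gamma\cap\hat\alpha_2^{(i)}\cap\hat\beta_2^{(j)}|=\delta_{ij}$, where $\Gamma$ is the fiber class. Plugging these non-vanishing $\ZZ_2$ triple intersection numbers into the general formula Eq.~\eqref{eq:CCZ_formula} shows that the transversal $T$ gate $\widetilde T$ implements a non-trivial collective logical CCZ on the color code $CC(\mathcal{M}(f_g))$, giving the fault-tolerant non-Clifford gate. Similarly, applying Eq.~\eqref{eq:CZ_formula} to a transversal $S$ gate $\widetilde S_{\Gamma;i,j}$ supported on the fiber 2-cycle $\Gamma$ yields $g$ simultaneous, addressable logical CZ gates $\overline{\mathrm{CZ}}((\hat\alpha_2^{(i)};i),(\hat\beta_2^{(i)};j))$; more generally the $2^{b_2}=2^{k/3}$ choices of 2-cycle give an exponentially large set of parallelizable logical CZ operators, each a 1-form symmetry in the sense of Sec.~\ref{sec:parallelizable}. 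Commutativity of diagonal gates ensures parallelizability.

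The routine steps are the volume/Betti count and the plug-in into Eqs.~\eqref{eq:CCZ_formula}--\eqref{eq:CZ_formula}; the subtle point is verifying that the three representative cycles $\Gamma$, $\hat\alpha_2^{(i)}$, $\hat\beta_2^{(j)}$ can be chosen as honest embedded surfaces with genuinely transverse intersection, so that the $\ZZ_2$ triple count on the cellulation agrees with the cup-product formula $\int_{\mathcal{M}(f_g)}\alpha^1\cup\beta^1\cup\gamma^1$ computed in the Corollary. This follows because the fiber class is represented by $\Sigma_g$ and, since $f_g\in T_g$ acts trivially on $H_1(\Sigma_g;\ZZ_2)$, each $\alpha_1^{(i)}$ extends to an embedded 2-cycle $\hat\alpha_2^{(i)}=\alpha_1^{(i)}\times S^1$ transverse to $\Sigma_g$; their triple intersection reduces to the surface intersection $|\alpha_1^{(i)}\cap\beta_1^{(j)}|=\delta_{ij}$ as in Eq.~\eqref{intersectionpointsmappingtorus}. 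The expected main obstacle is not the existence statement itself but controlling the $\ZZ_2$ systoles so as to guarantee a growing distance; since the theorem as stated only claims the existence of the gates and constant rate (not a distance bound), I would flag this as an open problem rather than attempt to close it, consistent with the paper's discussion that the systole scaling for these Torelli mapping tori is conjectured but currently unknown.
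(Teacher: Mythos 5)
Your proposal is correct and follows essentially the same route as the paper: fatten the Torelli mapping tori $\mathcal{M}(f_g)$ of Agol--Leininger--Margalit into 4-colorable cellulations, use the Kojima--McShane volume bound together with $b_1=2g+1$ for the constant rate, and feed the $g$ non-trivial triple intersections from the Corollary into Eqs.~\eqref{eq:CCZ_formula} and \eqref{eq:CZ_formula} for the transversal $T$ and $S$ gates. Your added care about representing the classes by embedded surfaces $\Sigma_g$ and $\alpha_1^{(i)}\times S^1$ with transverse intersections, and your explicit flagging of the unknown systole scaling, match the paper's (terser) treatment.
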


\begin{figure*}[hbt]
  \includegraphics[width=1.6\columnwidth]{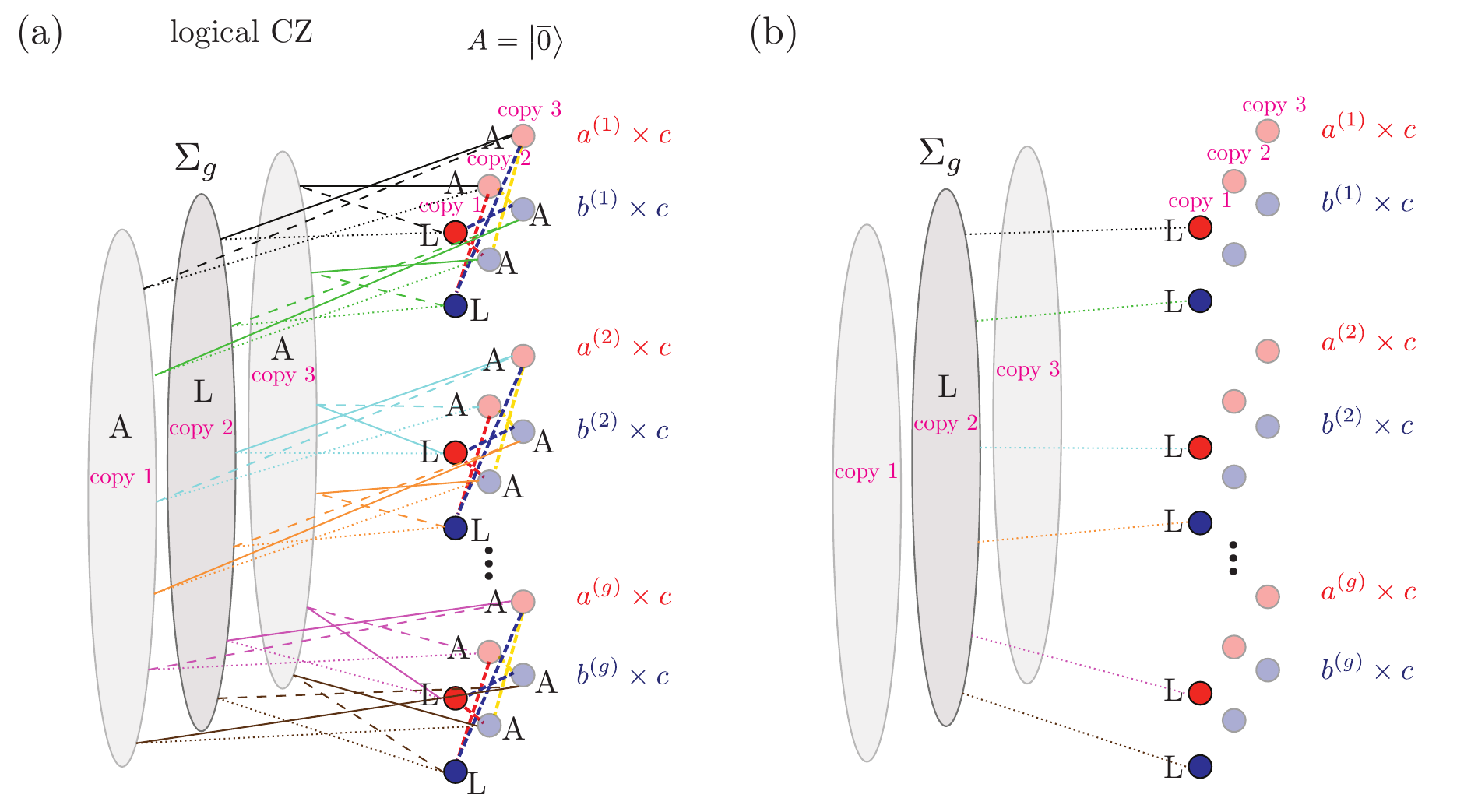}
  \caption{(a) Interaction graph for the logical CZ gates of the quasi-hyperbolic code. Colored edges with the same type (represented by both colors and line types) represent logical CZ gates that get simultaneously applied.  Choosing certain vertices (represented by circles or ellipses) as logical qubits (L) and others as ancillae (A) set in the logical state $\ket{0}$.  (b) The effective interaction graph for the logical qubits (L). Note that each colored edge has a unique type, representing individually addressable and parallelizable logical CZ gate.   }
\label{fig:application_interaction-graph}
\end{figure*}

\section{Application to parallel universal fault-tolerant quantum computation}\label{sec:application}

In this section, we consider a generic scheme for parallel universal fault-tolerant quantum computation for constant-rate or almost-constant-rate quantum LDPC codes based on 3-manifolds and potentially extendable to even more general cases such as quantum expander codes. The scheme is independent of the specific geometry of 3-manifold one chooses.  However, since the three codes discussed in Sec.~\ref{sec:code_construction}
has the same interaction hypergraph structure, we discuss our scheme using this particular structure.  For simplicity, we use the particular cycle labels of the quasi-hyperbolic code, while one should keep in mind that the scheme works the same for the other two codes.

\subsection{Generic scheme for parallelizable logical Clifford gates}\label{sec:parallel_Clifford}

\begin{figure}[hbt]
  \includegraphics[width=1\columnwidth]{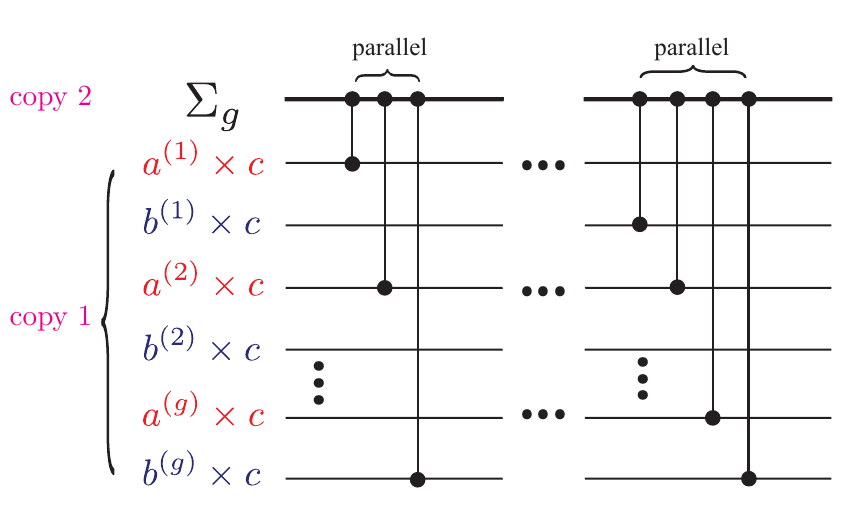}
  \caption{Logical CZ gates can be applied in parallel, but are all connected to the logical qubit supported on $\Sigma_g$ due to the star-like interaction graph. }
\label{fig:circuit_example}
\end{figure}

\begin{figure*}[hbt]
  \includegraphics[width=1.6\columnwidth]{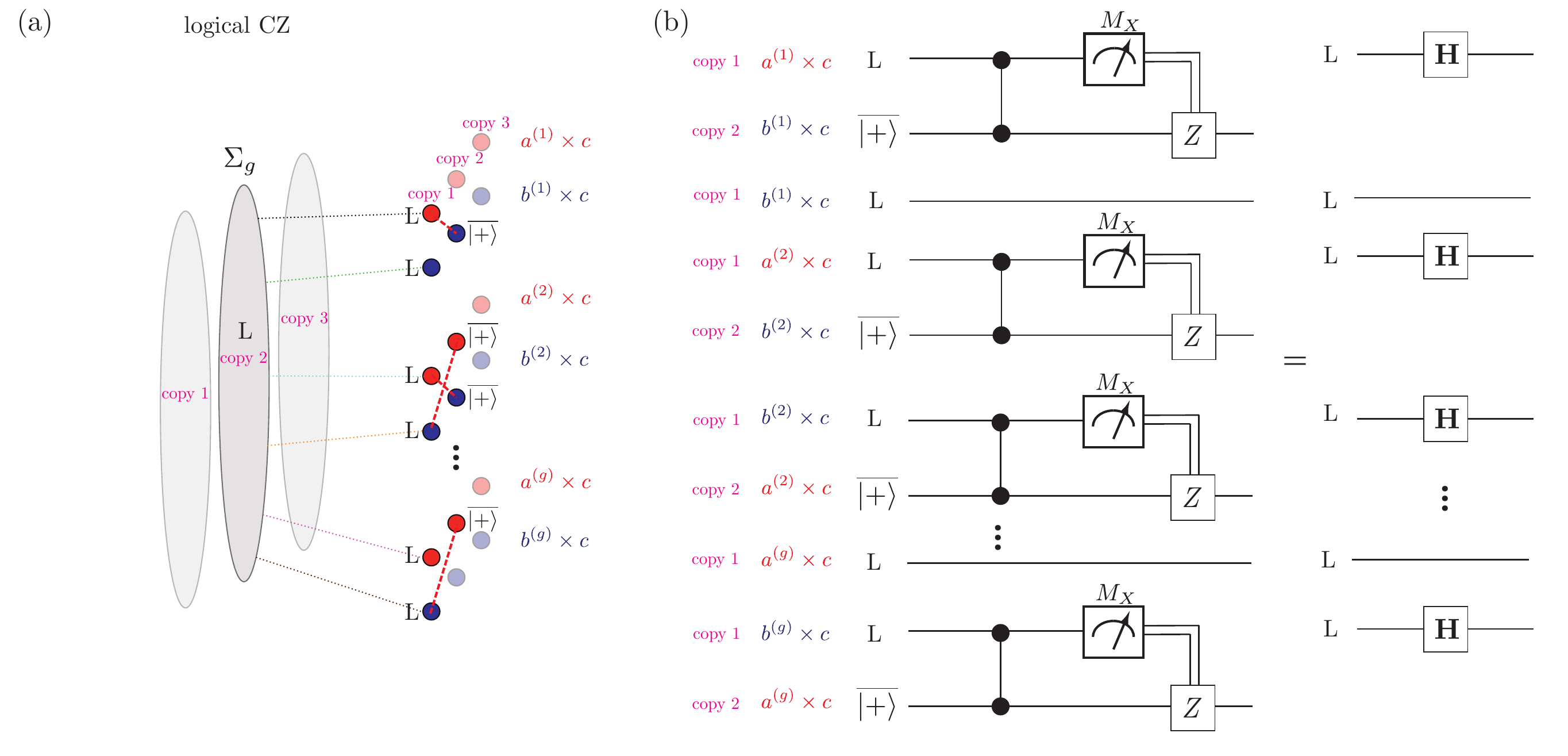}
  \caption{(a) Interaction graph corresponding to logical CZ gates, where the edges represented by red dashed lines represent parallel logical CZ gates.  One can selectively initialize ancillae in the second toric-code copy into $\lo{\ket{+}}$ state.   (b) Circuits to apply parallel logical Hadamard via parallel logical CZ gates, logical-$X$ measurement, and conditional Z correction. }
\label{fig:logical_Hadamard}
\end{figure*}

\begin{figure*}[hbt]
  \includegraphics[width=1.4\columnwidth]{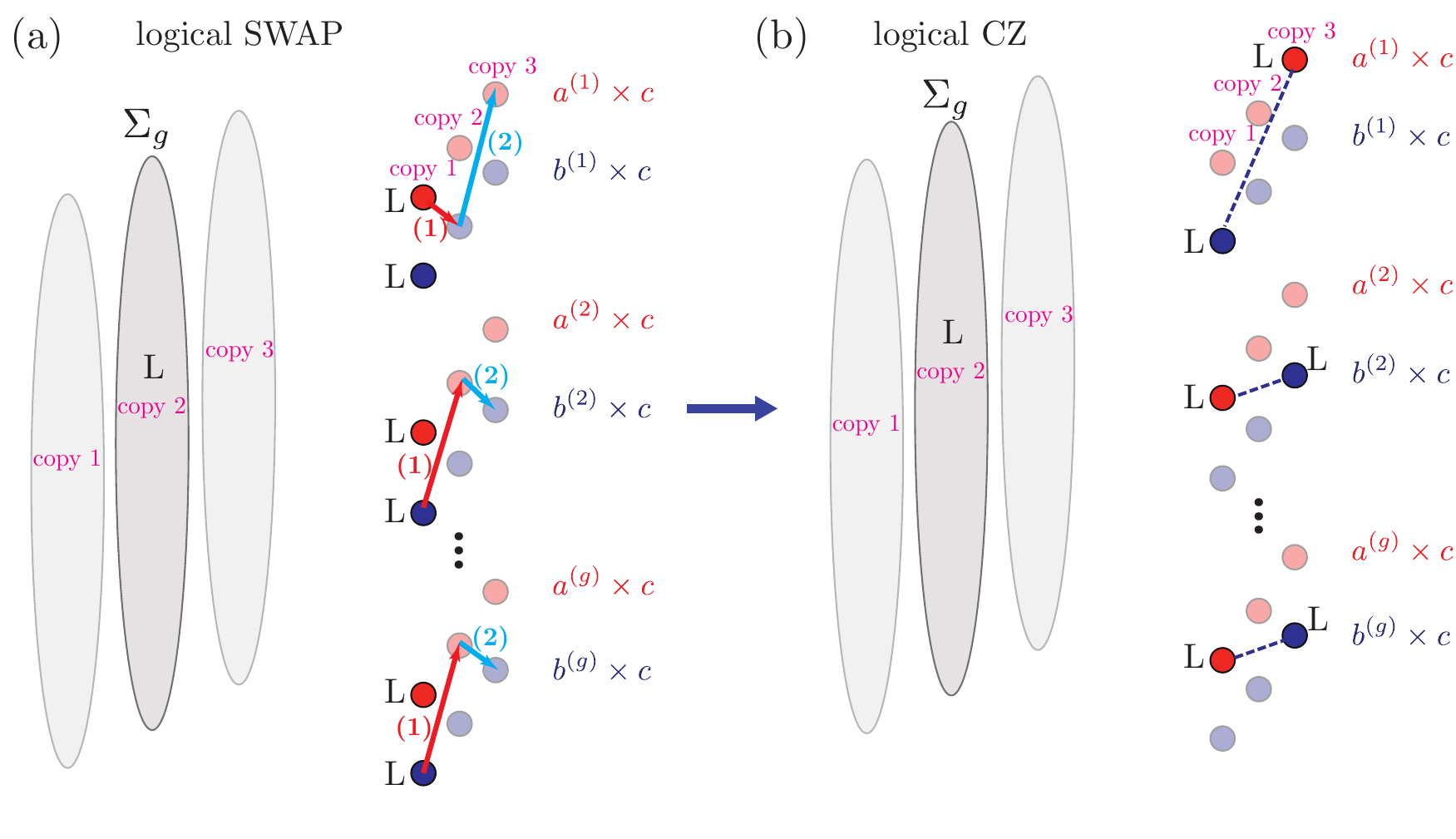}
  \caption{(a) Apply two steps of logical SWAPs to move logical qubits to copy 3 and with the same color. (b) Apply multiple logical CZ gates between logical qubits supported on $a^{(i)}\times c$ (copy 1) and $b^{(i)}\times c$ (copy 3) in parallel. }
\label{fig:parallelized_CZ}
\end{figure*}

We start with implementing parallelizable logical Clifford gates  in the homological qLDPC  code.

The logical CZ gate structure for the quasi-hyperbolic code (as well as the other two codes up to a change of cycle labels)  has been summarized in Eq.~\eqref{eq:CZ_gate_quasi_hyperbolic_n1} and Eq.~\eqref{eq:CZ_gate_quasi_hyperbolic_n2},  and is also encoded in the interaction graph in Fig.~\ref{fig:interaction_hyper-graph}(b).  However, we see that there are two colored edges  with the same type, meaning that a pair of logical CZ gates have to be applied simultanesouly and cannot be separated.    In order to resolve this issue, we set a constant fraction of the logical qubits as ancilla qubits in the logical zero state $\ket{\lo{0}}$ in order to let each logical CZ gate to be individually addressable.  

As shown in Fig.~\ref{fig:application_interaction-graph}(a), we choose a specific encoding scheme by encoding the logical information (labeled as $L$) into the logical qubit labeled as $(\Sigma_g; 2)$ (in the toric-code copy 2) while treating logical qubit labeled as $(\Sigma_g; 2)$ and $(\Sigma_g; 3)$ (in the toric-code copy 2 and 3) as ancilla (labeled as $A$).  In the mean time, we encode logical information into logical qubit labeled as $(a^{(i)}\times c; 1)$ and $(b^{(i)}\times c; 1)$ in the toric-code copy 1, while setting the rest logical qubits in toric-code copy 2 and 3 as ancilla.  Note that all the ancilla logical qubits ($A$) are fixed at the logical-0 state $\ket{\lo{0}}$. 

\begin{figure*}[hbt]
  \includegraphics[width=1.6\columnwidth]{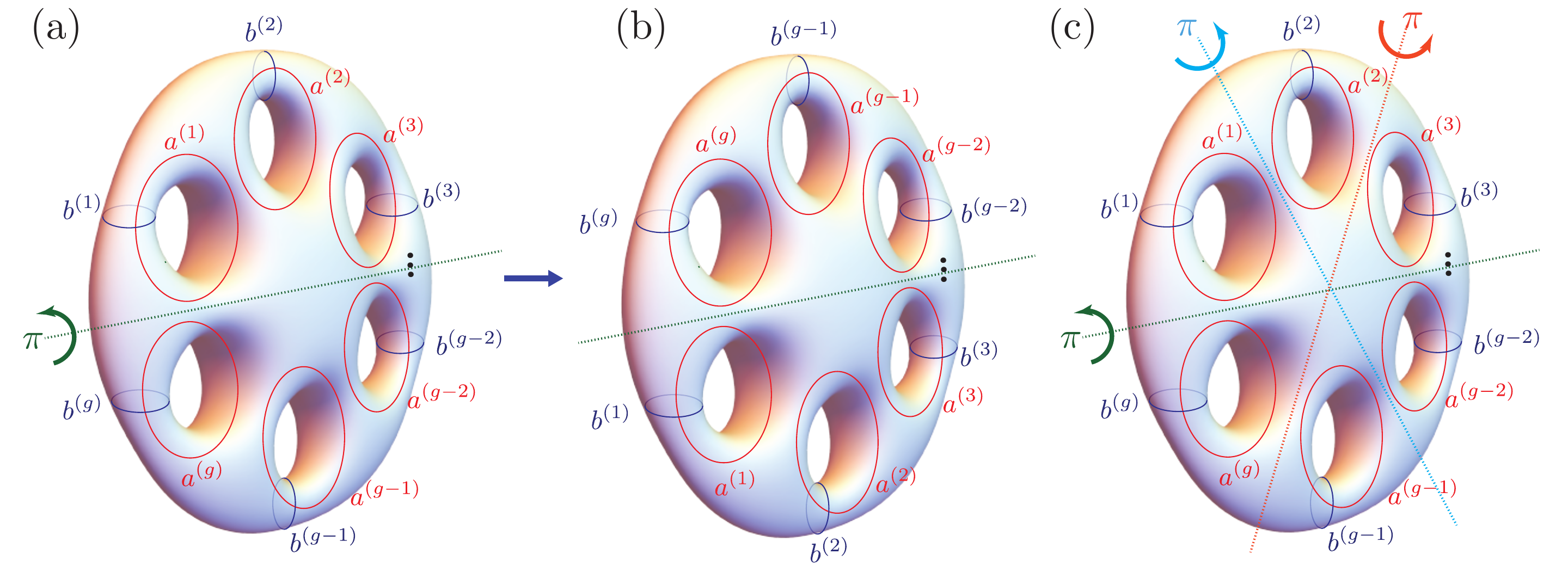}
  \caption{(a) Apply a $\pi$-rotation on each $\Sigma_g$ of the product manifold.  (b) The $\pi$-rotation exchanges homological cycles. (c) One can choose $g$ different axes to implement the $\pi$-rotation. }
\label{fig:pi-rotation}
\end{figure*}

\begin{figure*}[hbt]
  \includegraphics[width=2\columnwidth]{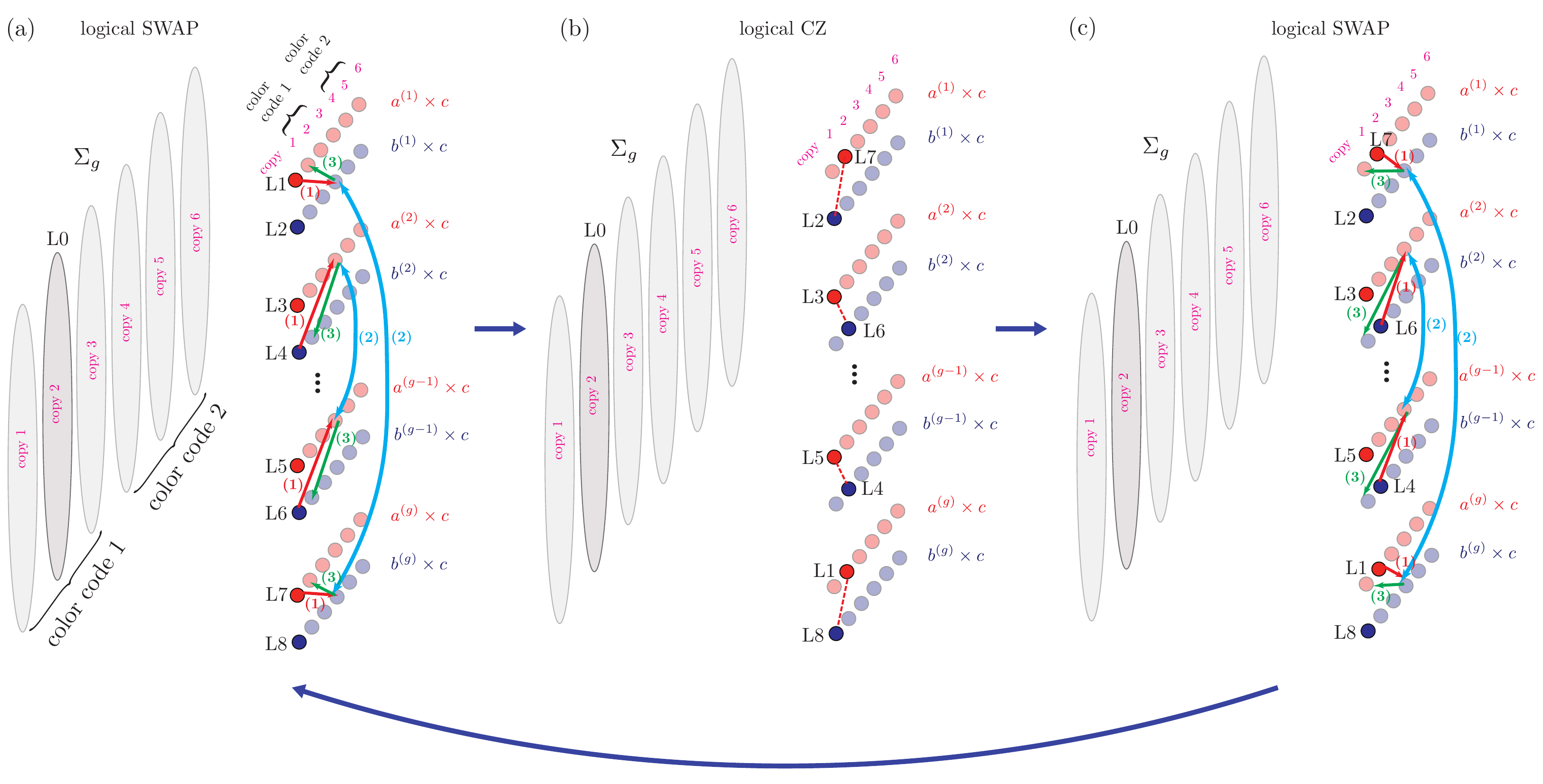}
  \caption{(a) Apply three steps of SWAPs to exchange  logical qubits on $a^{(i)}\times c$ and $a^{(g+1-i)}\times c$ in parallel.  The blue arrows represent the logical SWAP induced by the $\pi$-rotation.   (b) Apply parallel logical qubits between qubits supported on $a^{(i)}\times c$ and $b^{(i)}\times c$.  (c) Apply three steps of SWAPs to shuffle the logical qubits to their original positions.}
\label{fig:SWAP_between_blocks}
\end{figure*}

\begin{figure}[hbt]
  \includegraphics[width=1\columnwidth]{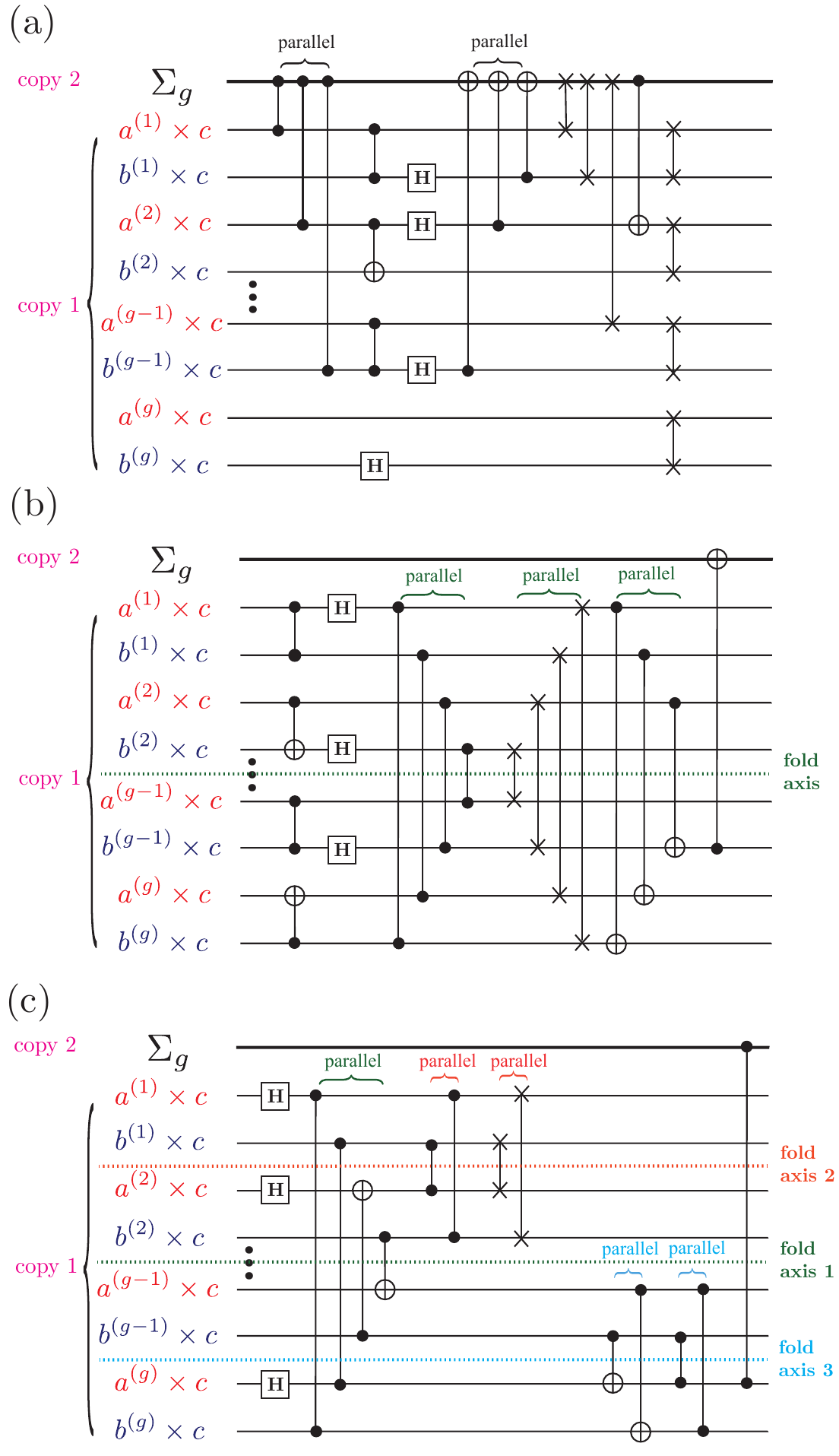}
  \caption{Parallelizability of logical circuits.  (a)  An example logical circuit without the using the isometry of the ancialla copy. Parallel logical CZ, CNOT, and SWAP gates can only be applied between logical qubits with the same $i$ label or have to involve the logical qubits supported on $\Sigma_g$.   (b) The $\pi$-rotation of the ancilla LDPC code copy effectively creates a fold axis in the logical circuit, allowing parallel gates applied between symmetric pairs.  (c) One can introduce $g$ fold axis which greatly enhances the connectivity and parallelizability of the logical circuit.   }
\label{fig:circuit_parallel}
\end{figure}

\begin{figure*}[hbt]
  \includegraphics[width=1.6\columnwidth]{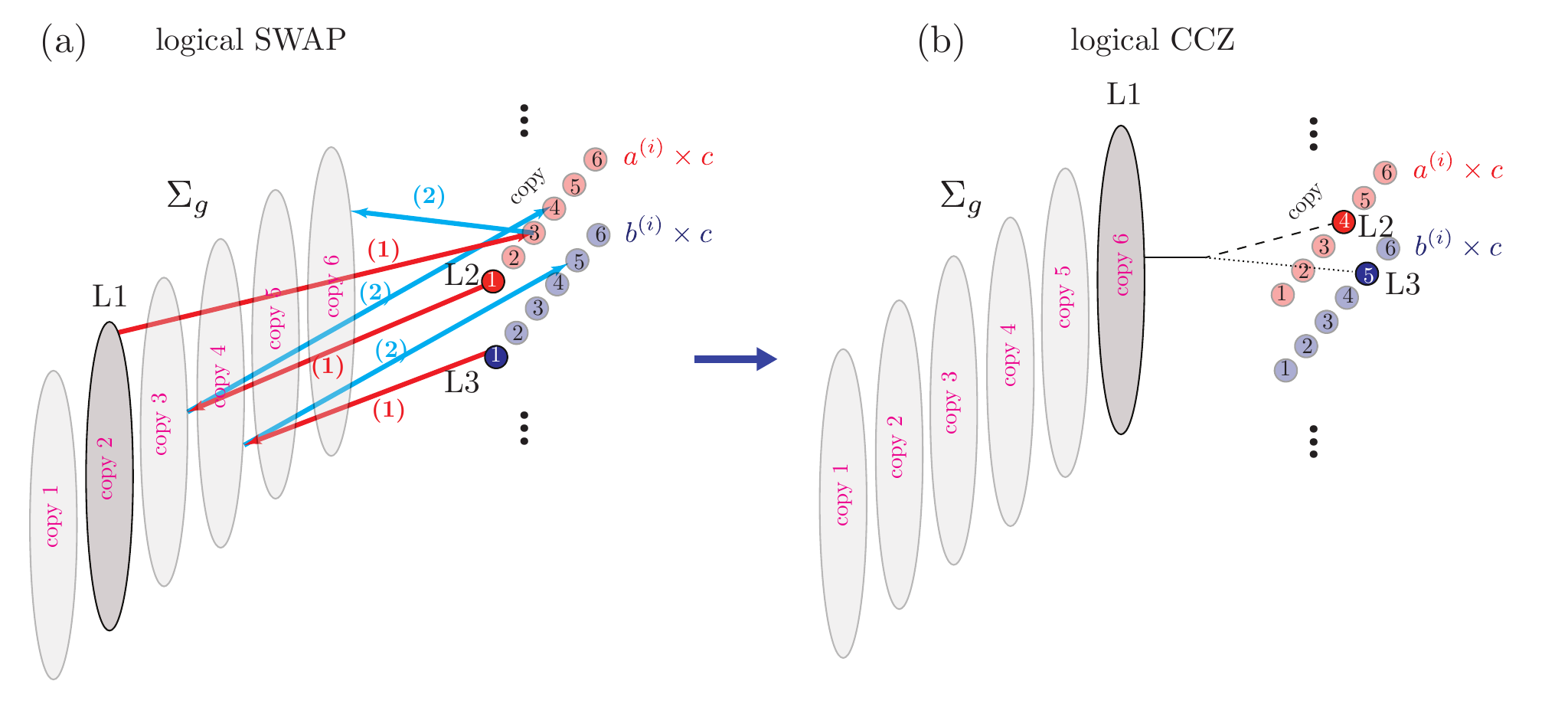}
  \caption{Protocol to implement targeted logical CCZ gate between any selected triplet. (a) The steps of logical SWAP shuffle the selected triplet of logical qubits to the ancilla LDPC color code copy (equivalent to toric-code copy 4,5 and 6).  (b) A logical CCZ gate is applied only to three selected logical qubits in toric-code copy 4, 5 and 6 since all the other logical qubits in the ancilla LDPC code copy are set in the $\lo{\ket{0}}$ state.}
\label{fig:CCZ_implement}
\end{figure*}

With this encoding scheme, we have all the logical CZ gate being individually addressable as illustrated in  Fig.~\ref{fig:interaction_hyper-graph}(b).  Note that each colored edge in this interaction graph has unique type, meaning that the logical CZ gate corresponding to each of them can be independently applied.    We show a logical circuit example in Fig.~\ref{fig:circuit_example}.  We can see that the logical CZ gates in this circuit can be independently applied in parallel. 

Since all the higher symmetry operator discussed in this paper are transversal logical gates, they them alone cannot be universal due to the Eastin-Knill theorem \cite{Eastin:2009cj}. Therefore, we have to consider some other operations such as measurements. In particular, we now consider how to generate logical Hadamard gates with the logical CZ gates implemented by 1-form symmetries.   As shown in Fig.~\ref{fig:logical_Hadamard}(a), we can reinitialize a selected set of ancilla logical qubits in toric-code copy 2 in the logical state $\lo{\ket{+}}$ (+1 eigenstate of the logical-X operator). The reinitialization can be done by measuring logical-X membrane operators of these ancillas via a lattice surgery in the $O(1)$ neighborhood of a logical-X membrane \cite{lavasani2018}.  
After this re-initialization, a collection of colored edges (red-dashed lines) representing simultaneous logical CZ gates via transversal gate $\widetilde{\text{CZ}}^{(1,2)}_{\Sigma_g}$ applied on $\Sigma_g$ and between toric-code copy 1 and 2 are turned on.  As shown in Fig.~\ref{fig:logical_Hadamard}(b), one can apply these logical CZ gates in parallel and again measure the logical-X membranes corresponding of the corresponding logical qubits.  After applying a conditional logical-Z correction via applying logical-Z strings, one effectively applies parallel logical Hadamard gates on all the selected logical qubits.

 The parallelizability of such logical-$X$ measurements and hence the logical Hadamard gates are determined the packing properties of the thickened logical-$X$ membranes in the corresponding 3-manifolds, which will be studies in more details in future works.  For the specific case of the quasi-hyperbolic code, the region of the lattice surgery are supported on a thickened torus $T^2 \times [0,1]$.  Its packing properties are determined by the packing of thickened loops $S^1 \times [0,1]$ in each fibre, i.e., the hyperbolic surface $\Sigma_g$. Since the length of each loop is $O(\log(A))$ ($A$ being the area of the surface), we expect to pack at most $O(A/log A) =O(g/\log g) = O(k/ \log(k))$ loops.  Therefore, one obtains an upper bound of the parallelizability of the logical-$X$ meausrements and Hadamard gates as $O(k/ \log(k))$.  A lower bound will be obtained in future works by detailed study of the hyperbolic geometry.

With the logical Hadamard gate, one can also implement logical CNOT.  Furthermore, three logical CNOTs can compose a logical SWAP gate which can now freely shuffle the logical qubits within and between code blocks.  First of all, one can now apply highly parallelizable logical CZ gates and CNOT gates between logical qubits supported on $a^{(i)}\times c$ and $b^{(i)}\times c$ in different toric-code copies. As shown in Fig.~\ref{fig:parallelized_CZ}(a), one can perform two steps of parallel SWAP, to shuffle several logical qubits in copy 1 into copy 2 and then copy 3.  Then in Fig.~\ref{fig:parallelized_CZ}(b), the interaction edges (blue dashed) corresponding to the collective logical CZ gates between $a^{(i)}\times c$ and $b^{(i)}\times c$  are turned on, and one can perform these CZ gates in parallel via applying a transversal gate $\widetilde{\text{CZ}}^{(1,3)}_{\Sigma_g}$ applied on $\Sigma_g$ and between toric-code copy 1 and 3.   Note that one needs to do two steps of SWAP since the collective logical CZ gates only couple red circles ($a^{(i)}\times c$) and blue circles ($a^{(i)}\times c$) in different toric-code copies. One can then apply another two steps of parallel SWAPs to move the logical qubits in copy 3 back to copy 1, i.e., their original position. Therefore, we have effectively applied parallelized logical CZ gates between  neighboring logical qubits supported on $a^{(i)}\times c$ and $b^{(i)}\times c$ in copy 1.  When conjugated these collective logical CZ with parallel logical Hadamard gate discussed above, parallel logical CNOTs can be applied.  

Nevertheless, in order to apply logical CZ or logical CNOT gates between qubits with different $i$ labels, i.e., between $a^{(i)}\times c$ and $b^{(j)}\times c$, or  $a^{(i)}\times c$ and $a^{(j)}\times c$, or $b^{(i)}\times c$ and $b^{(j)}\times c$ ($i \neq j$),  one needs to SWAP one of them with the global quanum bus supported on $\Sigma_g$ in toric-code copy 2 using the parallelizable logical CZ gates between them as previously illustrated in Fig.~\ref{fig:circuit_example}.  However, since there is only $O(1)$ such global quantum bus, the logical SWAP process can only be done sequentially.  

In order to resolve this issue, one can use an additional LDPC color code block (quasi-hyperbolic color code block in this example) to greatly enhance the parallelizability of logical CZ and CNOT gates.  We note the method we present here is quite generic and work for all the LDPC codes described in this paper.    
In order to apply logical CZ across the code block, one can apply the disentangling unitary $V$ in Eq.~\eqref{eq:disentangler} to disentangle each LDPC color code into three copies of LDPC toric codes:
\be
V^{\otimes 2}[CC(\L^{(1)}_c)\otimes CC(\L^{(2)}_c)\otimes \SS]{V^{\otimes 2}}^{\dag} = \Motimes_{i=1}^6 TC(\L_i),
\ee
which leads to 6 copies of toric codes in total.   Now if we want to apply transversal CZ gates between toric-code copy 1 and 4, we can first apply the inverse of the disentangling unitary $V$ to merge e.g., toric-code copies 1, 4 and 5 into a single 3D color code as 
\be
V^\dag \Motimes_{i=1,4,5} TC(\L_i) V =  CC(\L^{(1')}_c). 
\ee
We can then apply transversal S gate in this new 3D color code copy.    Alternatively, we can also directly apply transversal CZ gates between the disentangled toric-code copies. We can first use the combination of logical CZ and Hadamard to implement logical SWAP to shuffle the logical qubit to toric-code copy 4. We can now disentagle color code $CC(\L^{(1')}_c)$ into toric-code copy 1, 4 and 5, and re-merge copy 1, 2, and 3 into the  first original color code $CC(\L^{(1)}_c)$ and re-merge 4, 5 and 6 into the second original color code $CC(\L^{(2)}_c)$. We can then apply some isometry $\tau$ to the second color code copy $CC(\L^{(2)}_c)$ equivalent to tori-code copy 4, 5 and 6,  for example using the isometry belonging to the Fuschian group introduced in Sec.~\ref{sec:fibre_bundle}.  As a simplest but very useful example, consider the case that $g$ in the quasi-hyperbolic code is even, we can apply an involution, i.e., an order-2 isometry, implementing a $\pi$-rotation (denoted by $\tau_\pi$) of the 3-manifold, which can be understood as a $\pi$-rotation along the axis going between $a^{(1)}$ and $a^{(1+g)}$ on each horizontal hyperbolic surface $\Sigma_g$ perpendicular to the vertical cycle $c$, as illustrated in Fig.~\ref{fig:pi-rotation}(a,b). Note that the existence of such involution requires the fibre surface $\Sigma_g$ being very symmetric. One can choose $\Sigma_g$ to be the canonical hyperbolic 4g-gon with the $i^\text{th}$ and $(i+2)^\text{th}$ edge being identified (see Ref.~\cite{Lavasani2019universal}).  We then have the following map to the homology basis:  
\be
\tau_\pi : a^{(i)} \rightarrow  a^{(g+1-i)},  b^{(i)} \rightarrow  b^{(g+1-i)}, c \rightarrow c.
\ee
This $\pi$-rotation can be done with two steps of long-range SWAPs through ancilla qubits, which needs long-range connection between the qubits paired by the $\pi$-rotation.  Alternatively, since the isometry can be considered as a hyperbolic translation, one can just implement such translation by locally SWAP all the data qubits with by one lattice site (going through the same number of ancilla qubits) per time step. At most $O(\log (n))$ time steps is needed to implement any isometry since it is proportional to the injectivity radius of the hyperbolic surface $\Sigma_g$. After applying this isometry, we can use disentangling and re-merging process like before to apply logical CZ and Hadamrd to implement logical SWAP between toric-code copy 4 and 1 to shuffle the logical qubits back to toric-code copy 1.   Through the above SWAP $\rightarrow$ isometry $\rightarrow$ SWAP process, we have effectively shuffled the logical qubits to their symmetric partners connected by the isometry.  For example, in the case of $\pi$-rotation, for arbitrary set of $i$ and $j$ one can perform the following logical SWAP in toric-code copy 1 in parallel:
\be\label{eq:shuffle_qubits}
(a^{(i)}\times c, 1) \rightarrow (a^{(g+1-i)}\times c, 1);  \ (b^{(j)}\times c, 1) \rightarrow (b^{(g+1-j)}\times c, 1).
\ee
We can then apply logical CZ and CNOT gates between logical qubits on $a^{(g+1-i)}\times c$ and $b^{(g+1-i)}\times c$ in toric-code copy 1  for a selected set of $i$ in parallel.   After that, we can apply the inverse of the logical SWAP in Eq.~\eqref{eq:shuffle_qubits} to shuffle the logical qubits back to their original places,  as illustrated in Fig.~\ref{fig:SWAP_between_blocks}.  This effectively applies logical CZ and CNOT gates between logical qubits 
on $a^{(i)}\times c$ and $b^{(g+1-i)}\times c$ in toric-code copy 1 for a selected set of $i$ in parallel.  One can also understand this as folding the circuit diagram along the axis between $i=g/2$ and $i=g/2+1$ and apply logical gates along those logical qubits which meet each other, as illustrated in Fig.~\ref{fig:circuit_parallel}(b). 

In order to further enhance the connectivity of the logical circuit, one can apply in total $g$ such $\pi$-rotation along the axis going between $a^{(p)}$ and $a^{(p+1)  }$ (mod $g$) with $p=1,2, \cdots, g$  on $\Sigma_g$, as illustrated in Fig.~\ref{fig:pi-rotation}(c). This allows one to apply logical CZ and CNOT gates between logical qubits on $a^{(i+p)}\times c$ and $b^{(g+p+1-i)}\times c$ (mod $g$) in toric-code copy 1 for a selected set of $i$ in parallel.  It also allows one to apply logical SWAP between $a^{(i+p)}\times c$ and $a^{(g+p+1-i)}\times c$ (mod $g$), and  between $b^{(i+p)}\times c$ and $b^{(g+p+1-i)}\times c$ (mod $g$).  
We can also understand this as choosing $g$ possible folding axes  on the circuit diagram between any $i$ and $i+1$ and apply logical gates along those logical qubits aligned with each other,  as illustrated in Fig.~\ref{fig:circuit_parallel}(c).   Through these $g$ isometries, we have greatly enhanced the connectivity and parallelizability of the logical circuit.

\subsection{Generic scheme for universal logical gate set}

Here we consider obtaining a universal logical gate set.   The universal gate set we consider is Toffoli + Hadamard, i.e.,  a Toffoli gate on three arbitrarily chosen logical qubits and Hadamard gates on each logical qubit.  This gate set is also equivalent to CCZ + Hadamard, i.e.,  CCZ on three arbitrarily chosen logical qubits and Hadamard on each logcial qubit, since the combination of a CCZ gate and Hadmard gates can generate a Toffoli gate.

Since we have already demonstrated how to generate parallelizable logical Hadamard gates on each logical qubit as shown in Sec.~\ref{sec:parallel_Clifford} and illustrated in Fig.~\ref{fig:logical_Hadamard}, here we only need to show how to obtain a targeted logical CCZ on three logical qubits.  In fact, we can get targeted logical CCZ gates acting on any desired triplet of logical qubits.   

In order to apply logical CCZ only on three target logical qubits, we should avoid directly applying traversal T which leads to a collective logical CCZ gate acting on logical qubits.   We can use a second ancilla LDPC color code copy (as a specific example the quasi-hyperbolic color code) and setting all the logical qubits in the ancilla copy at state $\ket{\lo{0}}$, as illustrated in Fig.~\ref{fig:CCZ_implement}(a). One can then consider the 2 copies of LDPC color codes as 6 copies of LDPC toric codes. We start with three targeted logical qubits labeled as L1, L2 and L3 as shown in Fig.~\ref{fig:CCZ_implement}(a). The logical qubit L1 is supported on $\Sigma_g$ in toric-code copy 2. The logical qubits L2 and L3 are supported on $a^{(i)}\times c$ and $b^{(i)}\times c$ in toric-code copy 1 respectively. We can use the disentangling and re-merging processes introduced in Sec.~\ref{sec:parallel_Clifford} to apply two steps of parallel logical SWAPs (via the combination of logical CZ and logical Hadmard) to shuffle L1 to toric-code copy 3 and then to copy 6, and shuffle L2 (L3) to copy 3 (copy 4) and then to copy 4 (copy 5), as illustrated in Fig.~\ref{fig:CCZ_implement}(a).  

We can then apply transversal T gates on the second LDPC color code, which yields a logical CCZ between logical qubit L1, L2 and L3, supported on $\Sigma_g$, $a^{(i)}\times c$ and $b^{(i)}\times c$ respectively, as shown by the interaction hypergraph in Fig.~\ref{fig:CCZ_implement}(b).   We hence get the CCZ + Hadamard universal gate set.   As pointed out in Sec.~\ref{sec:parallel_Clifford}, we can arbitrarily SWAP the logical qubits around through the logical qubits supported on $\Sigma_g$.  Therefore, we are able to perform targeted logical CCZ gate on any desired triplet of logical qubits with individual addressability.

\subsection{Enhancing parallelizability of quasi-hyperbolic codes with thickened Dehn twists}\label{sec:Dehn_twists}

\begin{figure*}[hbt]
  \includegraphics[width=2\columnwidth]{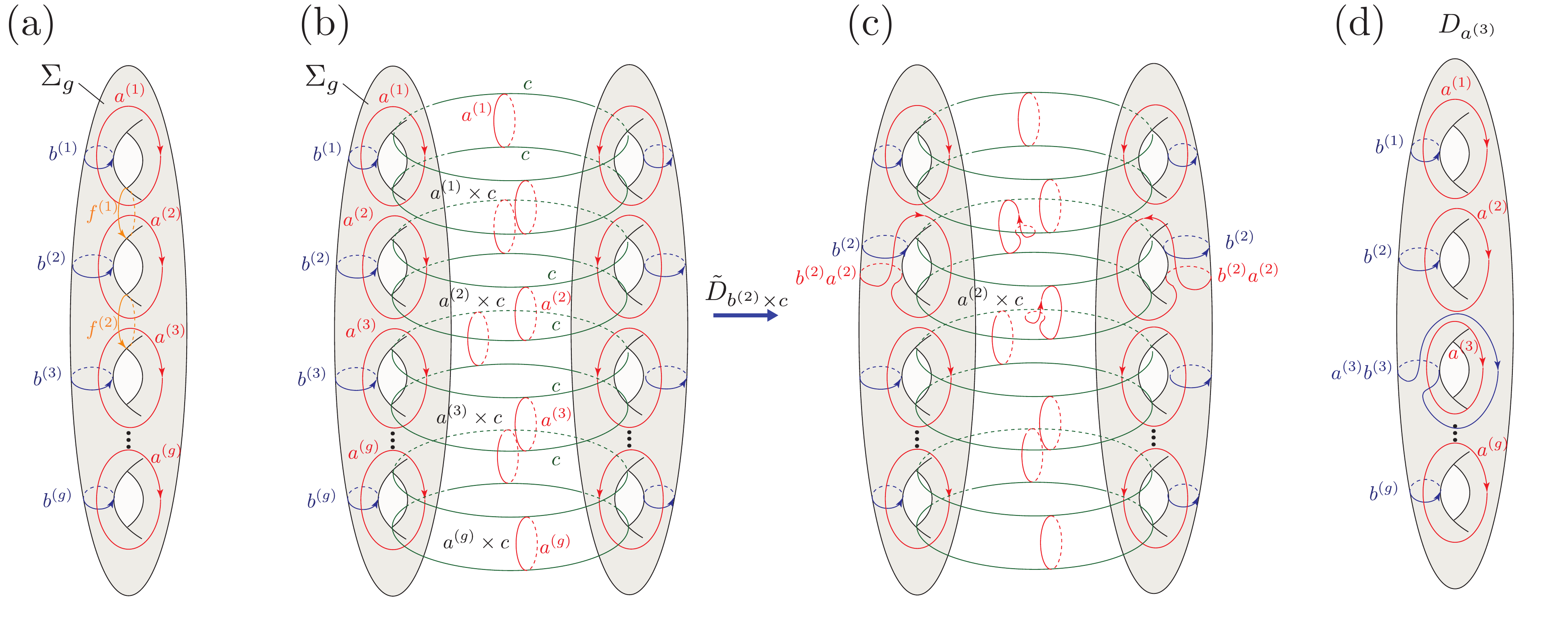}
  \caption{(a) The $3g-1$ Dehn twist generators for the mapping class group on a genus-$g$ surface $\Sigma_g$. (b,c) The action of the thickened Dehn twist along the 2-cycle $b^{(2)}\times c$  on $\Sigma_g \times S^1$. (d) The action of the Dehn twist along $a^{(3)}$  on $\Sigma_g$.}
\label{fig:Dehn_twists}
\end{figure*}

\begin{figure*}[hbt]  \includegraphics[width=1.4\columnwidth]{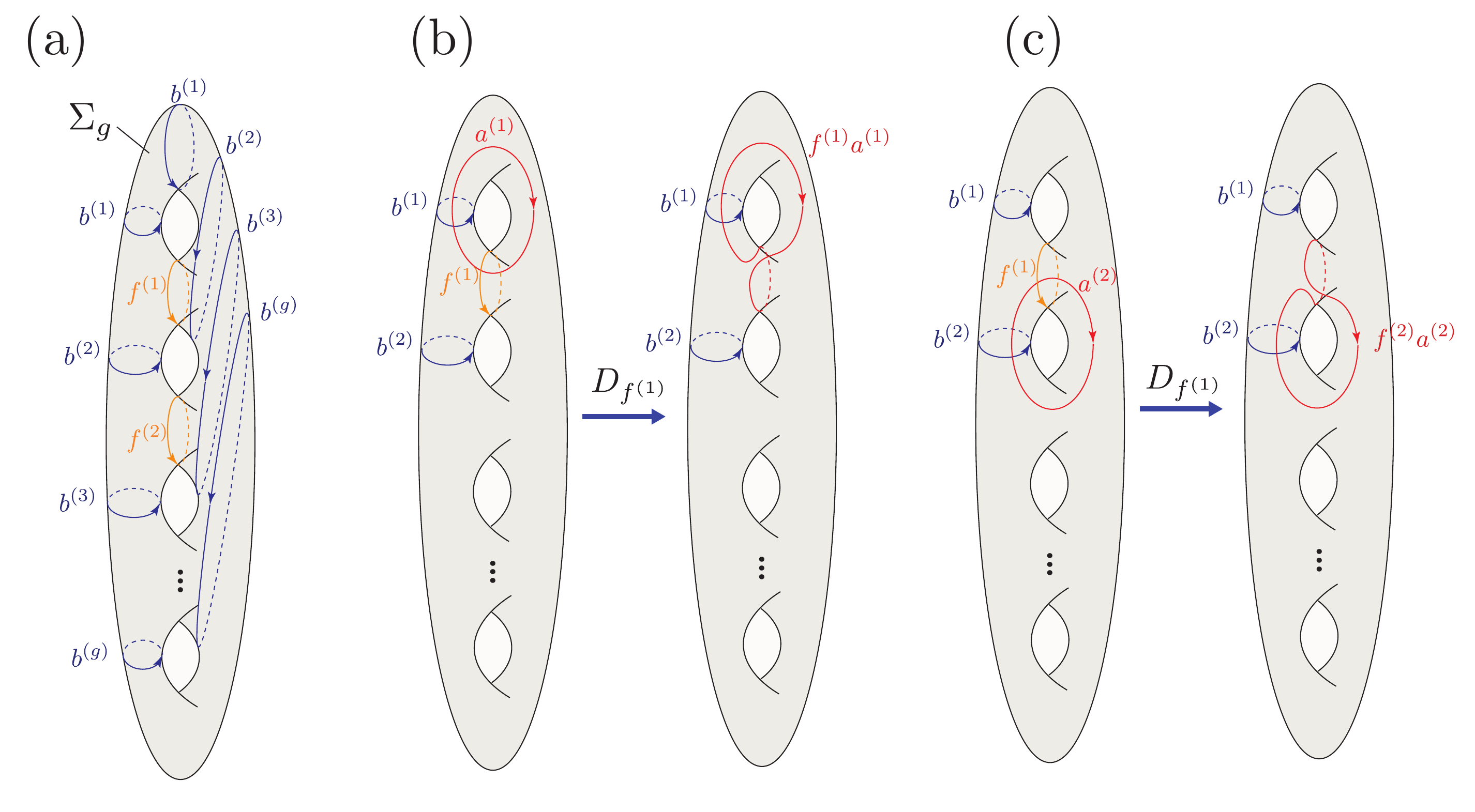}
  \caption{(a) Illustration of the telation between the cycles: $f^{(i)} = (b^{(i)})^{-1}b^{(i+1)}$. (b,c) The action of the Dehn twist along $f^{(1)}$.  }
\label{fig:Dehn_twists_floop}
\end{figure*}

\begin{figure*}[hbt]
\includegraphics[width=1.2\columnwidth]{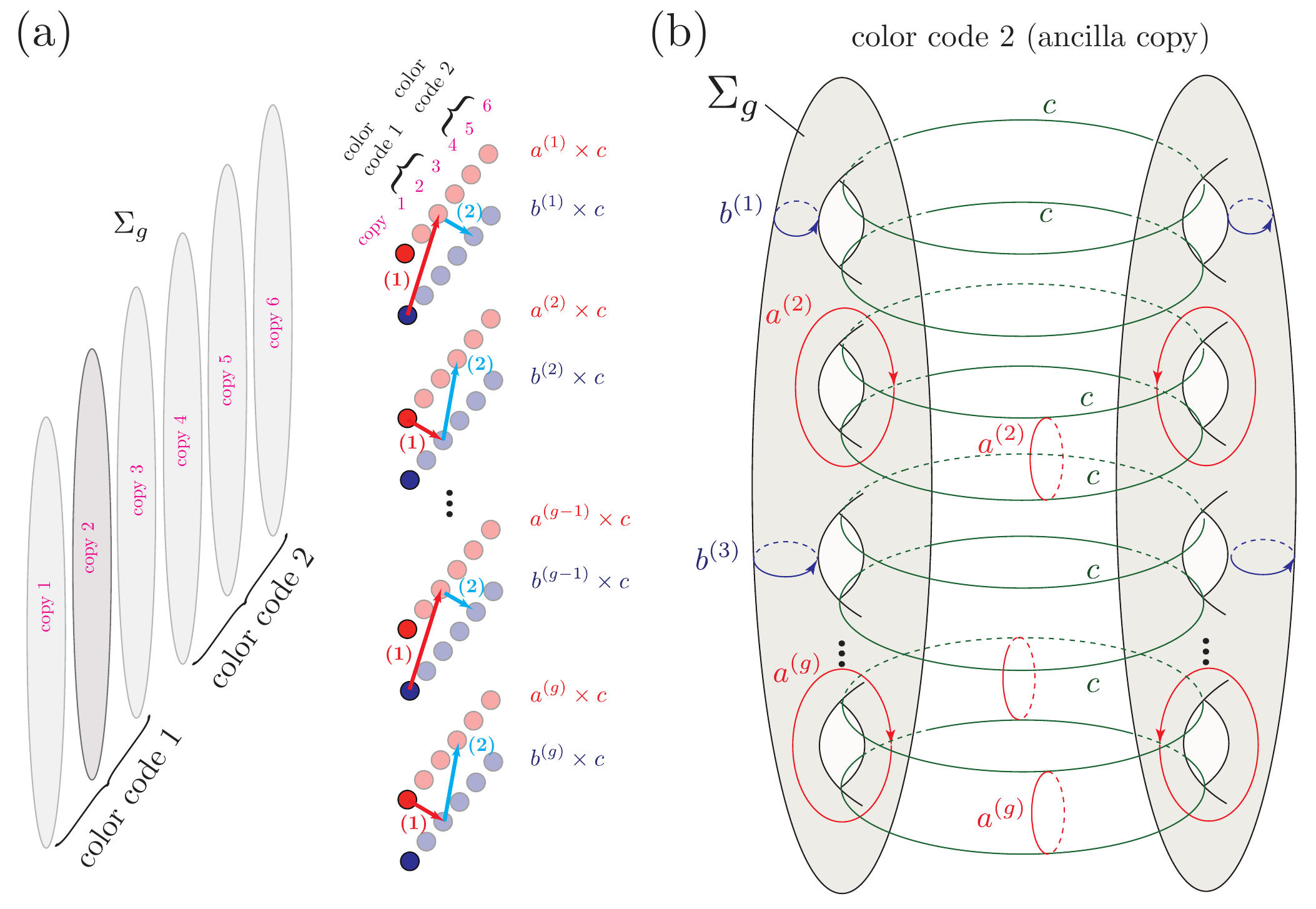}
  \caption{Strategy to perform logical CNOTs between neighboring tori with label $i$ and $i+1$ respectively using thickened Dehn twists.  (a) One first perform two steps of logical SWAPs to shuffle the targeted logical qubits $(a^{(i)}\times c; 1)$ and $(b^{(i+1)}\times c; 1)$ (for even $i$) to the ancilla color code (toric-code copy 4). (b) In the ancilla color code copy, the logical qubits are now encoded sparsely in an alternating pattern, with the corresponding cycles (1-cycles which are extended to 2-cycles by taking a product with the cycle $c$) being highlighted. All the remaining logical qubits are set in the anicilla state $\ket{\lo{0}}$.}
\label{fig:thickened_Dehn_twist_circuit}
\end{figure*}

In Sec.~\ref{sec:parallel_Clifford}, we have shown a method to enhance parallelizability for generic homological LDPC color codes using an ancilla copy and its isometry.    However, this method is slightly more sophisticated due to the use of additional ancilla and SWAPs to implement isometry, and the use of logical-$X$ measurement to implement logical CNOT and SWAP.  

In this section, we provide a specific but simple and  powerful method for the class of quasi-hyperbolic codes, where we use thickened Dehn twists to implement parallel logical CNOT between all logical qubits in the same toric-code copy.  

The thickened Dehn twist $\tilde{D}_{\alpha_1 \times S^1}$ acting along a 2-cycle $\alpha_1 \times S^1$ of the product 3-manifold $\Sigma_g \times S^1$ can be defined as applying Dehn twist $D_{\alpha_1}$ along the cycle $\alpha_1$ on each hyperbolic surface $\Sigma_g$ perpendicular to the base cycle $S^1$. We note that Dehn twists can be implemented by $O(d)$ rounds of local re-triangulation using Thurston's earthquake \cite{Koenig:2010do} or via a constant-depth circuit consisting of local re-triangulation and long-range SWAPs \cite{Lavasani2019universal}.  Both schemes are fault-tolerant.  The thickened the Dehn twists can be directly implemented by straightforwardly  extending either of these schemes to each layer (fibre) of the product manifold and are hence also fault-tolerant.  

Now we investigate what logical gates these thickened Dehn twists correspond to.  In Fig.~\ref{fig:Dehn_twists}(a), we show the standard $3g-1$ Dehn twists which form a generating set of the  mapping class group on genus-$g$ surface   MCG($\Sigma_g$) \cite{FarbMargalit}. These Dehn twists are acted three types of loops: the first two types are $a^{(i)}$ (red) and $b^{(i)}$ (blue) ($i=1,2,\cdots, g$) introduced before, while the third type are $g-1$ loops connecting neighboring genera, denoted by $f^{(i)}$ (yellow) ($i=1,2,\cdots, g-1$).   Note that when consider the mapping classes, all these loops are elements of the 1st-homotopy group instead of 1st homology group (Abelianization of the 1st-homotopy group), and hence have directions which are indicated by arrows. 

We first consider the thickened Dehn twist $\tilde{D}_{b^{(i)}\times c}$ applied along the 2-cycle $\tilde{D}_{b^{(i)}\times c}$, as illustrated in Fig.~\ref{fig:Dehn_twists}(b,c). The Dehn twist $D_{b^{(i)}}$  along the 1-cycle $b^{(i)}$ induces the following map:
\be
D_{b^{(i)}}: a^{(i)} \rightarrow  b^{(i)} a^{(i)},
\ee
and acts trivially on any other loop along the standard 1st-homology basis of $\Sigma_g$ introduced in Sec.~\ref{sec:quasi-hyperbolic}, i.e., $\{a^{(i)},b^{(i)} |  i=1,2, \dots, g\}.$  As illustrated in Fig.~\ref{fig:Dehn_twists}(c), $b^{(i)} a^{(i)}$ is a twisted loop that goes through both cycle $b^{(i)}$ and $a^{(i)}$.   Now the thickened Dehn twist $\tilde{D}_{b^{(i)}\times c}$ induces the following map:
\begin{align}
\non \tilde{D}_{b^{(i)}\times c}:& a^{(i)} \times c \rightarrow  b^{(i)} a^{(i)} \times c = (b^{(i)} \times c)(a^{(i)} \times c), \\ 
& a^{(i)} \rightarrow  b^{(i)} a^{(i)}  
\end{align}
and acts trivially on any other loops and membranes along the 1st- and 2nd-homology basis of $\Sigma_g \times S^1$ introduced in Sec.~\ref{sec:quasi-hyperbolic}, i.e.,  $B_1=\{a^{(i)},b^{(i)},c | \  i=1,2, \dots, g\}$ and  $B_2=\{b^{(i)}\times c,a^{(i)} \times c, \Sigma_g | \  i=1,2, \dots, g\}$.  As illustrated in Fig.~\ref{fig:Dehn_twists}(c), the thickened Dehn twist does a twist on all the loop $a^{i}$ into $b^{(i)}a^{(i)}$ in the 3-manifold. 
This thickened Dehn twist induces the following map on the logical operators in toric-code copy $1$ (for simplicity with set the logical qubits in the other 2 copies at $\ket{\lo{0}}$):
\begin{align}\label{eq:Dehn_twsit_b}
\non \tilde{D}_{b^{(i)}\times c}:&  \lo{X}_{a^{(i)} \times c}^{(1)} \rightarrow  \lo{X}_{a^{(i)} \times c}^{(1)} \lo{X}_{b^{(i)} \times c}^{(1)} \\
&\lo{Z}_{a^{(i)} }^{(1)} \rightarrow  \lo{Z}_{a^{(i)} }^{(1)} \lo{Z}_{b^{(i)} }^{(1)},
\end{align}
where we have used the superscript to label the toric-code copy. 
Note that the Poincaré dual cycles of $a^{(i)}\times c$ and $b^{(i)}\times c$ are $b^{(i)}$ and $a^{(i)}$ respectively, meaning that  $\lo{X}_{a^{(i)} \times c}^{(1)}$ ($\lo{X}_{b^{(i)} \times c}^{(1)}$) and $\lo{Z}_{b^{(i)} }^{(1)}$ ($\lo{Z}_{a^{(i)} }^{(1)}$) are logical operators of the same logical qubit. In order to clarify the logical gate map, we also introduce the logical operator notation with an argument using the logical qubit label which are related to the notations with the support of the operator as: $\lo{X}(a^{(i)} \times c)\equiv \lo{X}_{a^{(i)} \times c} $, $\lo{Z}(a^{(i)} \times c)\equiv \lo{Z}_{b^{(i)}} $, $\lo{X}(b^{(i)} \times c)\equiv \lo{X}_{b^{(i)} \times c} $, $\lo{Z}(b^{(i)} \times c)\equiv \lo{Z}_{a^{(i)}} $. Note that here we use the logical X-membrane (2-cycle) label to label the qubits.  We can then re-wrtie Eq.\eqref{eq:Dehn_twsit_b} with the alternative notations as:
\begin{align}\label{eq:Dehn_twsit_b_rewrite}
\non \tilde{D}_{b^{(i)}\times c}:&  \lo{X}^{(1)}({a^{(i)} \times c}) \rightarrow \lo{X}^{(1)}({a^{(i)} \times c}) \lo{X}^{(1)}({b^{(i)} \times c}) \\
&\lo{Z}^{(1)}(b^{(i)}\times c) \rightarrow  \lo{Z}^{(1)}(b^{(i)}\times c) \lo{Z}^{(1)}(a^{(i)}\times c).
\end{align}
Therefore, the map in Eq.~\eqref{eq:Dehn_twsit_b_rewrite} corresponds to the following logical CNOT gate:
\be\label{eq:CNOT_Db}
\tilde{D}_{b^{(i)}\times c} = \lo{\text{CNOT}}( (a^{(i)}\times c; 1),( b^{(i)}\times c; 1)).
\ee

Similarly, the Dehn twist ${D}_{a^{(i)}}$  along the 1-cycle $a^{(i)}$ induces the following non-trivial map:
\be
D_{a^{(i)}}: b^{(i)} \rightarrow  a^{(i)} b^{(i)},
\ee
and acts trivially on the loop along other homology basis, as illustrated in Fig.~\ref{fig:Dehn_twists}(d).  The thickened Dehn twist $\tilde{D}_{a^{(i)}\times c}$ hence induces the following non-trivial map on the membranes and loops:
\begin{align}
\non \tilde{D}_{a^{(i)}\times c}:& b^{(i)} \times c \rightarrow  a^{(i)} b^{(i)} \times c = (a^{(i)} \times c)(b^{(i)} \times c), \\ 
& b^{(i)} \rightarrow  a^{(i)} b^{(i)},  
\end{align}
and the following non-trivial map on the logical operators:
\begin{align}\label{eq:Dehn_twsit_a}
\non \tilde{D}_{a^{(i)}\times c}:&  \lo{X}_{b^{(i)} \times c}^{(1)} \rightarrow  \lo{X}_{b^{(i)} \times c}^{(1)} \lo{X}_{a^{(i)} \times c}^{(1)} \\
&\lo{Z}_{b^{(i)} }^{(1)} \rightarrow  \lo{Z}_{b^{(i)} }^{(1)} \lo{Z}_{a^{(i)} }^{(1)}.
\end{align}
We can again rewrite the above expression using the logical qubit label as:
\begin{align}\label{eq:Dehn_twsit_a_rewrite}
\non \tilde{D}_{a^{(i)}\times c}:&  \lo{X}^{(1)}({b^{(i)} \times c}) \rightarrow \lo{X}^{(1)}({b^{(i)} \times c}) \lo{X}^{(1)}({a^{(i)} \times c}) \\
&\lo{Z}^{(1)}(a^{(i)}\times c) \rightarrow  \lo{Z}^{(1)}(a^{(i)}\times c) \lo{Z}^{(1)}(b^{(i)}\times c).
\end{align}
Therefore we know the thickened Dehn twist corresponds to the following logical CNOT:
\be
\tilde{D}_{a^{(i)}\times c} = \lo{\text{CNOT}}( (b^{(i)}\times c; 1),( a^{(i)}\times c; 1)).
\ee

Now we investigate what logical gates the thickened Dehn twists $\tilde{D}_{f^{(i)}\times c}$ correspond to.   We first see that all the $g-1$ loops $f^{(i)}$ can be expressed by loops $b^{(j)}$ as
\be
f^{(i)} = (b^{(i)})^{-1}b^{(i+1)}, \qquad i=1,2,\cdots, g-1,
\ee
since loop $f^{(i)}$ is essentially the subtraction of loop $b^{(i)}$ from loop $b^{(i+1)}$ as illustrated in Fig.~\ref{fig:Dehn_twists_floop}(a).   Now we consider the Dehn twist $D_{f^{(i)}}$ along the loop $f^{(i)}$, which implements the following non-trivial map on the neighboring loop $a^{(i)}$ and $a^{(i+1)}$:
\begin{align}
\non D_{f^{(i)}}:& a^{(i)} \rightarrow  f^{(i)} a^{(i)}, \\
            & a^{(i+1)} \rightarrow  f^{(i)} a^{(i+1)},
\end{align}
as illustrated in Fig.~\ref{fig:Dehn_twists_floop}(b) and (c)  respectively.
It twists the loop $a^{(i)}$ to go through the $f^{(i)}$ cycle and hence becomes $f^{(i)} a^{(i)}$,  as illustrated in Fig.~\ref{fig:Dehn_twists_floop}(b).  The thickened Dehn twists hence implements the following map on membrane $a^{i} \times c$ and loop $a^{i}$:
\begin{align}
\non \tilde{D}_{f^{(i)}\times c}:& a^{(i)} \times c \rightarrow  f^{(i)} a^{(i)} \times c = (b^{(i)})^{-1}b^{(i+1)}a^{(i)} \times c, \\ 
\non & a^{(i)} \rightarrow  f^{(i)} a^{(i)}=(b^{(i)})^{-1}b^{(i+1)}a^{(i)}, \\
\non & a^{(i+1)} \times c \rightarrow  f^{(i)} a^{(i+1)} \times c  \\
\non &\qquad \qquad = (b^{(i)})^{-1}b^{(i+1)}a^{(i+1)} \times c, \\ 
 & a^{(i+1)} \rightarrow  f^{(i)} a^{(i+1)}=(b^{(i)})^{-1}b^{(i+1)}a^{(i+1)},
\end{align}
which leads to the following maps on the corresponding logical operators:
\begin{align}
\non \tilde{D}_{f^{(i)}\times c}:&  \lo{X}_{a^{(i)} \times c}^{(1)} \rightarrow  \lo{X}_{b^{(i)} \times c}^{(1)}  \lo{X}_{b^{(i+1)} \times c}^{(1)} \lo{X}_{a^{(i)} \times c}^{(1)}, \\
\non  &\lo{Z}_{a^{(i)} }^{(1)} \rightarrow  \lo{Z}_{b^{(i)} }^{(1)} \lo{Z}_{b^{(i+1)} }^{(1)} \lo{Z}_{a^{(i)} }^{(1)}, \\
\non  &  \lo{X}_{a^{(i+1)} \times c}^{(1)} \rightarrow  \lo{X}_{b^{(i)} \times c}^{(1)}  \lo{X}_{b^{(i+1)} \times c}^{(1)} \lo{X}_{a^{(i+1)} \times c}^{(1)}, \\
  &\lo{Z}_{a^{(i+1)} }^{(1)} \rightarrow  \lo{Z}_{b^{(i)} }^{(1)} \lo{Z}_{b^{(i+1)} }^{(1)} \lo{Z}_{a^{(i+1)} }^{(1)}.
\end{align}
From the above map, we can infer that the thickened Dehn twist $\tilde{D}_{f^{(i)}\times c}$ corresponds to the following logical gate:
\begin{align}
\non \tilde{D}_{f^{(i)}\times c} =& \lo{\text{CNOT}}( (a^{(i)}\times c; 1),( b^{(i)}\times c; 1)) \\
\non  & \cdot \lo{\text{CNOT}}( (a^{(i+1)}\times c; 1),( b^{(i+1)}\times c; 1)) \\
\non &\cdot \lo{\text{CNOT}}( (a^{(i)}\times c; 1),( b^{(i+1)}\times c; 1)) \\
\non &\cdot \lo{\text{CNOT}}( (a^{(i+1)}\times c; 1),( b^{(i)}\times c; 1)) 
\end{align}
By using Eq.~\eqref{eq:CNOT_Db} and the fact that $\tilde{D}_{b^{(i)}\times c}^2=1$ (when acting on the $\ZZ_2$-homology),  we reach the following sequence of thickened Dehn twists to implement a logical CNOT:
\begin{align}\label{eq:CNOT_between_genus}
\non &\tilde{D}_{b^{(i+1)}\times c }  \tilde{D}_{b^{(i)}\times c} \tilde{D}_{f^{(i)}\times c} \\
\non =& \lo{\text{CNOT}}( (a^{(i)}\times c; 1),( b^{(i+1)}\times c; 1)) \\
& \cdot \lo{\text{CNOT}}( (a^{(i+1)}\times c; 1),( b^{(i)}\times c; 1))
\end{align}
The Dehn twists in the above sequence are applied from right to left, consistent with the usual operators in quantum mechanics. 
Equation \eqref{eq:CNOT_between_genus} represents a pair of logical CNOT between neighboring tori corresponding to label $i$ and $i+1$ in the product 3-manifold $\Sigma_g \times S^1$, which were hard to generate using logical CZ gates and Hadamard without extra isometry of an ancilla LDPC color code copy discussed in Sec.~\ref{sec:parallel_Clifford}.   

In order to obtain a single independent logical CNOT between the neighboring tori, we can choose a sparse encoding, i.e., set logical qubits labeled by $(a^{(i+1)}\times c; 1)$ and $( b^{(i)}\times c; 1)$ at state $\ket{\lo{0}}$ for even $i$ as ancilla logical qubits. In other words, we only store logical information in lobical qubits labeled by  $(a^{(i)}\times c; 1)$ and $(b^{(i+1)}\times c; 1) $ for even $i$, as illustrated in Fig.~\ref{fig:thickened_Dehn_twist_circuit}(b).  This gives rise to 
\be
\tilde{D}_{b^{(i+1)}\times c }  \tilde{D}_{b^{(i)}\times c} \tilde{D}_{f^{(i)}\times c} = \lo{\text{CNOT}}( (a^{(i)}\times c; 1),( b^{(i+1)}\times c; 1))
\ee
for even $i$.

Now we can use the following strategy to apply parallel logical CNOTs:  we can apply logical CNOTs between the two types of tori with the same $i$ label, i.e.,  $\lo{\text{CNOT}}( (a^{(i)}\times c; 1),( b^{(i)}\times c; 1))$ and $\lo{\text{CNOT}}( (b^{(i)}\times c; 1),( a^{(i)}\times c; 1))$ via $\tilde{D}_{b^{(i)}\times c } $ and $\tilde{D}_{a^{(i)}\times c }$  respectively.  Now if we want to apply parallel logical CNOTs between neighboring tori with label $i$ and $i+1$ respectively, we can split them into two groups and perform them in two separate rounds.   In order to perform parallel $\lo{\text{CNOT}}( (a^{(i)}\times c; 1),( b^{(i+1)}\times c; 1))$ (for even $i$), we can SWAP all the targeted logical qubits $(a^{(i)}\times c; 1)$ and $(b^{(i+1)}\times c; 1)$ (for even $i$)  into an ancilla LDPC code copy at the same homological cycles with all the other logical qubits in the ancilla code copy set at the $\ket{\lo{0}}$ state and then perform the sequence of Dehn twists $\tilde{D}_{b^{(i+1)}\times c }  \tilde{D}_{b^{(i)}\times c} \tilde{D}_{f^{(i)}\times c}$ on the ancilla LDPC code. as illustrated in as illustrated in Fig.~\ref{fig:thickened_Dehn_twist_circuit}(a,b).  Afterwards, we SWAP these logical qubits back to the original code copy.  Next, in order to perform parallel $\lo{\text{CNOT}}( (a^{(i+1)}\times c; 1),( b^{(i)}\times c; 1))$ (for even $i$), we can SWAP all the targeted logical qubits $(a^{(i+1)}\times c; 1)$ and $(b^{(i)}\times c; 1)$  into the ancilla LDPC code copy and then perform the sequence of Dehn twists $\tilde{D}_{b^{(i+1)}\times c }  \tilde{D}_{b^{(i)}\times c} \tilde{D}_{f^{(i)}\times c}$ on the ancilla LDPC code.  The protocol is complete after performing SWAP of logical qubits back to the original code copy.

We note that the thickened Dehn twists can be performed in parallel, and with all the Dehn twists discussed above we hence can obtain parallel logical CNOTs and SWAPs between all the neighboring logical qubits in the logical quantum circuit.    We further note that one can also perform thickened Dehn twists along very non-local membrane beyond those related to the (3$g$-1) MCG generator, and in this case logical CNOTs and SWAPs can be performed on logical qubits separated far apart in the logical circuit diagram.

\subsection{Parallelizability and connectivity of the interaction hypergraph}\label{sec:connectivity}

We have seen in the previous sections that for the quasi-hyperbolic code, we are only able to apply the logical CCZ gate sequentially even with the use of extra ancilla LDPC code copy.  Similarly, although we have shown parallel logical Clifford gates, their implementation is sophisticated in the case of the quasi-hyperbolic code.   We point out that this is due to the restrictive structure of the interaction hypergraph and graph of the quasi-hyperbolic code.

As shown in Fig.~\ref{fig:interaction_hyper-graph}(a), the interaction graph of the quasi-hyperbolic code has a star-like structure such that all the hyperedges in the interaction graph  connects to the three logical qubits supported on the $\Sigma_g$ surface.   This means even with the ancilla LDPC code copy, all the logical CCZ gates have to invovle a logical qubit supported on $\Sigma_g$, which significantly limits the parallelizability of the non-Clifford gates.   Similarly, the interaction graph of the logical CZ gate in Fig.~\ref{fig:interaction_hyper-graph}(b) and Fig.~\ref{fig:application_interaction-graph}(b) is also star-like, since except those collective logical CZ gate, all the individually addressable logical CZ gates are connected to the logical qubit on $\Sigma_g$.

We can summarize that there are two generic issues in the interaction hypergraph and interaction graph in this case.  First, there are some vertices with very high degree, and in fact with degree $O(k)$, where $k$ is the total number of logical qubits.  This means that one can only individually address $O(1)$ groups of logical CCZ gates.  Second, there are $O(k)$ low-degree vertices in the interaction graph only connected to $O(1)$ high-degree vertices, this significantly limits the connectivity and parallelizability of the logical circuit.   For example, these $O(k)$ low-degree vertices can only interact with each other through the mediation of the $O(1)$ high-degree vertices.   

Nevertheless, these issues are due to the very special product construction of the quasi-hyperbolic code, which leads to a very special triple intersection structure.  For codes defined on generic 3-manifolds or hyperbolic 3-manifolds, such as the generic mapping-torus construction of 3-manifolds discussed in Sec.~\ref{sec:generic_3-manifold}
, we expect each 2D submanifold (surface) behaves similar to each other, meaning that they have triple intersection with similar number of other submanifolds. This suggest that each vertex in the interaction hypergraph and graph will have similar degree.  Therefore, the special star-like structure in the interaction hypergraph and graph of quasi-hyperbolic code will not show up, since no such special vertex should exist.   Moreover,  we expect there is a large family of generic 3-manifold such that there corresponding interaction hypergraph and graph is sparse, i.e., each vertex only has a constant degree (in dependent of the size of the hypergraph/graph), such as the example shown in Fig.~\ref{fig:base_hypergraph}(b). The sparcity of the interaction hypergraph ensures that constant number of logical CZ gate can be individually addressed. Moreover, with such sparse interaction hypergraph and additional ancilla LDPC code copy, one can perform parallel and addressable logical CCZ gates. We further conjecture that these manifolds can also have Betti number propotional to the volume (the corresponding codes have constant degree) and 1-systole and 2-systole growing with the volume (the corresponding codes have growing distance).

\section{Discussion and outlook}\label{sec:outlook}
In this work, we have shown deep connection between seemingly distant fields.  The essence of individually addressable and parllelizable logical gates in homological qLDPC is deeply rooted in the higher-form symmetries of  the underlying TQFT.   Similarly, the triple intersection structure exists in qLDPC codes, TQFT, as well as the topology and geometry of manifolds. It also determines the complexity of the injected hypergrpah magic state.   Therefore, the deeper understanding of one subject may greatly improve the others.

{
Although we focus on the construction of LDPC color codes supported on 3-manifolds which have transversal $T$ and $S$ gates as the logical gates, one might worry that the color code stabilizers have relatively large weight and the decoding for 3D color code also may be more sophisticated than the 3D toric codes \cite{chamberland2020triangular}.  An alternative approach will be still using 3 copies of homological LDPC codes supported on the same 3-manifolds and the corresponding constant-depth circuit corresponding to cup products.   In this way, the homological LDPC code will still have relatively low stabilizer weight corrresponding to the triangulations of the 3-manifold and one can use the usual 3D toric-code decoder or the adapted version for hyperbolic spaces.  
}

As for other future directions, although we have found homological qLDPC codes on 3-manifold with constant rate, we need to further bound its distance.  It is known that even to obtain the  lower bound on the systole of 2D hyperbolic geometry is generically hard outside the realm of arithmetic manifold.  For hyperbolic codes with regular tiling, numerical studies give concrete estimate how the systole/distance scales with the system size \cite{Breuckmann:2017hy}. Therefore, it is possible that systematic numerical study of the systole of such family of 3-manifolds will shed light on the understanding of its scaling  properties and verify our conjecture.   

For the families of almost-constant rate qLDPC codes we found in this paper, such as the quasi-hyperbolic code, the $O(\log(n))$ overhead may not significantly reduce its value in practice.  Although we have focused on directly implementing non-Clifford gates in this paper, these types of codes may be especially useful for efficient magic state injection.   For example,  if one is able to find an almost-constant rate qLDPC codes with sparse interaction hypergraph which are hence able to perform addressable logical CCZ gates in parallel, then it is possible to produce $O(k \sim n / \log n)$ magic states in constant time, while in this case the $O(\log n)$ factor reduction has negligible effect comparing to the constant-rate case even though the distance is only $d=O(\log(n))$.  The low-distance feature makes these codes suitable for fast and efficient injection of intermediate-fidelity magic state to greatly improve the magic-state distillation protocol.          

{ Finally, we point out that although this paper focuses on homological codes defined on manifolds, the cup-product operation and triple intersection structure,  as well as the higher-form symmetries can also be extended to expander-based quantum codes, as shown in the follow-up papers \cite{zhu2025topological, Hsin2024:classifying}.  Therefore, the methods developed in this paper are expected to have a broad application to fault-tolerant quantum computing with qLDPC codes.   
}

\vspace{0.2in}

\noindent{\it Acknowledgements} --- We thank Sergey Bravyi, Jay Gambetta, Tomas Jochym-O'Connor and Ted Yoder for valuable discussion on this work. G.Z. thanks Po-Shen Hsin and Ryohei Kobayashi for the discussion and collaboration on the operator-valued cochain formalism and cup products. G.Z. appreciate Maissam Barkeshli and Ali Lavasani for former collaboration on 2D hyperbolic codes. A.C. and G.Z. are supported by the U.S. Department of Energy, Office of Science, National Quantum Information Science Research Centers, Co-design Center for Quantum Advantage (C2QA) under contract number DE-SC0012704.

\begin{appendix}
\section{Back-engineering of 3-manifolds from the interaction hypergraph: enhancing parallelizability of logical gates}\label{sec:back-engineering}

Instead of searching for 3-manifolds with desired triple intersection structure, we can also try to back-engineer 3-manifolds from a given  triple intersection structure or equivalently the corresponding base interaction hypergraph (or synonymously the intersection hypergraph). In particular, it is desirable to obtain a sparse interaction hypergraph and graph as pointed out in Sec.~\ref{sec:connectivity}. There exists such a general recipe by Sullivan \cite{Sullivan}. 

Let $V = \ZZ^m$ be a lattice of rank $m$ for some positive integer $m$ and let
\begin{align}
    \mu : \wedge^3V \to \ZZ
\end{align}
be a skew-symmetric three form on $V$. In Ref.~\cite{Sullivan}, Sullivan proves that for any $(V, \mu)$ there exists a three manifold $\M^3$ such that $H_2(\M^3, \ZZ)=V$ and the algebraic triple intersection form on  $H_2(\M^3, \ZZ)=V$;
\begin{align}
    | \cdot \cap \cdot \cap \cdot |_\ZZ : \wedge^3 H_2(\M^3, \ZZ) \to \ZZ
\end{align}
coincides with $\mu$. Note that the  discussion in this section is more general than the earlier sections where we only consider the $\ZZ_2$ algebaric triple intersection. Instead, we also consider the general algebraic triple intersection with any integer value in $\ZZ$. Here in analogy with Eq.~\eqref{eq: algebraicintersection} the algebraic triple intersection evaluated on three classes $\alpha^{(i)}_2 \in H_2(\M^3, \ZZ)$ for $1\leq i \leq 3$ is defined as 
\begin{align}
    | \alpha^{(1)}_2 \cap \alpha^{(2)}_2 \cap \alpha^{(3)}_2 |_\ZZ : = \int_{M^3}  \alpha_{(1)}^1 \cup \alpha_{(2)}^1 \cup \alpha_{(3)}^1 \in \ZZ, 
\end{align}
where $\alpha_{(i)}^1$ is the Poincaré dual in $H^1(\M^3, \ZZ)$. 
The construction of the three manifold $\M^3$ goes as follows: Let $\mathsf{H}_m$ be the handlebody of genus $m$, \,i.e. a solid surface of genus $m$. Explicitly, one takes $m$ copies of the solid cylinder, or a handle, which is $D^2 \times [0, 1]$ and glues the two ends of the handles to the boundary of the solid ball which is $D^3$. See Fig.~\ref{fig: handlebody}. 

\begin{figure}[hbt]
  \includegraphics[width=1\columnwidth]{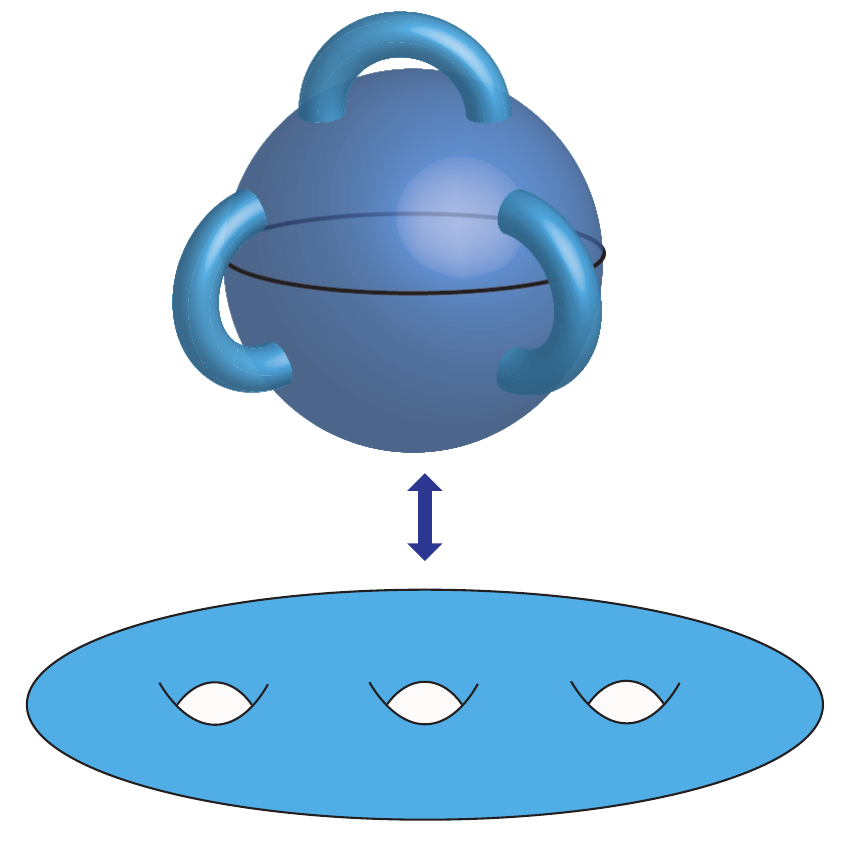}
  \caption{Two equivalent representations of the genus three handlebody.  }
\label{fig: handlebody}
\end{figure}

The boundary of the handlebody $\mathsf{H}_m$ is a surface of genus $m$,  
\begin{align}
    \partial \mathsf{H}_m = \Sigma_m
\end{align}
Given a mapping class $f$ in $\Gamma(\Sigma_m)$, two copies of $\mathsf{H}_m$ can glued along their boundaries using the mapping class
\begin{align}
    f: \partial \mathsf{H}_m \to \partial \mathsf{H}_m
\end{align}
The result is a closed, connected three manifold which will be denoted as $\mathsf{H}(f)$.

Examples: The genus zero handlebody is the three dimensional disk with the boundary  
\begin{align}
    \partial \mathsf{H}_0 = S^2 
\end{align}
Gluing two copies of $\mathsf{H}_0$ along their boundaries using the identity diffeomorphism yeilds the three sphere 
\begin{align}
    \mathsf{H}(Id) = S^3
\end{align}
The genus one handle body is 
\begin{align}
    \mathsf{H}_1 = D^2 \times S^1
\end{align}
Gluing two copies of $\mathsf{H}_1$ along their boundaries using the identity diffeomorphism identifies the boundaries of the two $D^2$ together yeilding a $S^2$ and the two circles are identified to give a circle, the resulting three manifold is  
\begin{align}
    \mathsf{H}(Id) = S^2 \times S^1 
\end{align}
Let $f$ be the mapping class of the two torus represented by the matrix 
\begin{align}
    f= \begin{pmatrix} 0 & -1 \\ 1 & 0 \end{pmatrix} 
\end{align}
Since $f$ maps the standard $\alpha$ and $\beta$ cycles of the two torus to $\beta$ and $\alpha$ respectively, gluing two copies of $\mathsf{H}_1$ along their boundaries using $f$ identifies the boundary of the $D^2$ of the first $\mathsf{H}_1$ with the $S^1$ of the second $\mathsf{H}_1$ and vice versa, this yields the three sphere, 
\begin{align}
    \mathsf{H}(f) = S^3.
\end{align}

The inclusion of the boundary $\partial \mathsf{H}_m=\Sigma_m$ in $\mathsf{H}_m$ induces the following map on the homology, 
\begin{align}
    H_1(\partial \mathsf{H}_m, \ZZ) \to H_1(\mathsf{H}_m, \ZZ)
\end{align}
Let $K(m)$ be the kernel of this inclusion map and let $G(m)$ be the subgroup of the mapping class group $\Gamma(\Sigma_m)$ such that $G(m)$ acts trivially on $K(m)$. The kernel $K(m)$ is generated by those homology classes in $\partial \mathsf{H}_m$ which bound a disk in the three manifold $\mathsf{H}_m$.  In figure \ref{fig: kernel} 
the cycles that bound a disk and lie in $K(m)$ are shown in red. The rank of the kernel $K(m)$ is $m$. If $f$ belongs to $G(m)$ then
\begin{align}
    H_2(\mathsf{H}(f), \ZZ) = K(m) =\ZZ^m
\end{align}
\begin{figure}[t]  \includegraphics[width=1\columnwidth]{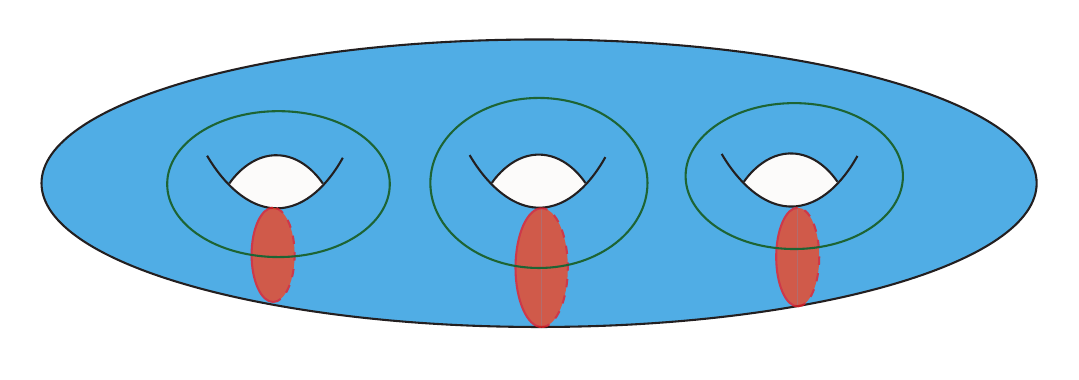}
  \caption{The red loops on the boundary surface bound a disk in $\mathsf{H}_3$ and hence lie in the kernel $K(3)$.  }
\label{fig: kernel}
\end{figure}
We now identify the intersection form on $\mathsf{H}(f)$. Let $m=3$, one calculates that any $f$ that belongs to $G(m)$ induces an action on $H_1(\Sigma_m, \ZZ)$ given by the matrix
\begin{align}
\label{sigma}
 \sigma:=   \begin{pmatrix} 1 &0&0&0&1&1  \\ 0 &1&0&1&0&1\\ 0 &0&1&1&1&0\\* *&*&*&1&0&0\\**&*&*&0&1&0\\* *&*&*&0&0&1\end{pmatrix}
\end{align}
Assuming that $f$ belongs to $G(3)$ we get that $H_2(\mathsf{H}(f), \ZZ)=\ZZ^3$, and in particular,
\begin{align}
    \mathsf{H}(f) = T^3=\RR^3/\ZZ^3 = S^1 \times S^1 \times S^1
\end{align}
Moreover, the intersection number of the three classes in $H_2(T^3, \ZZ)$ is precisely one. In figure \ref{fig: torusheegaarddecomposition} a genus three surface is shown in $T^3$ such that the complement of this surface in $T^3$ is a disjoint union of two $\mathsf{H}_3$.
\begin{figure}[hbt]
 \includegraphics[width=0.7\columnwidth]{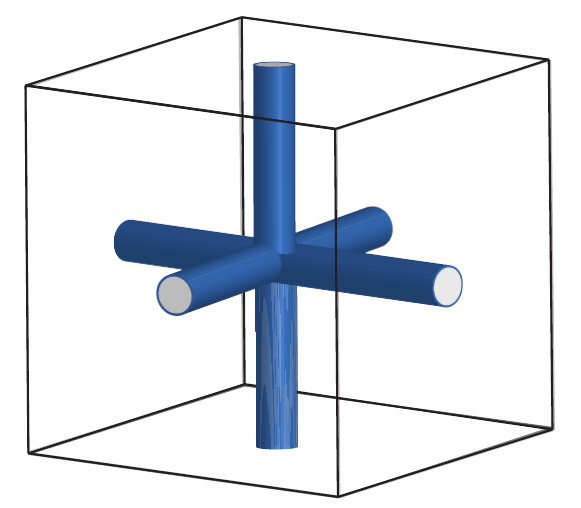}
  \caption{$T^3$ represented as a cube with opposite sides identified. The surface in blue is contained in $T^3$ and has genus three.  }
\label{fig: torusheegaarddecomposition}
\end{figure}

Consider $\mathsf{H}_m$ as $m$ handles attached to the three sphere, with the handles $h_i$, $h_j$, and $h_k$, for $0<i<j<k<m+1$, attached to the northern hemisphere and the rest of the $m-3$ handles attached to the southern hemisphere. Let $\sigma_{i, j, k}$ be the mapping class
\begin{align}
    \sigma_{i, j, k} : \partial \mathsf{H}_m \to \partial \mathsf{H}_m
\end{align}
such that it acts as $\sigma$ from \eqref{sigma} on its restriction to the northern hemisphere and extends by identity to the southern hemisphere. It is clear that $\sigma_{i, j, k}$ belongs to the sub group $G(m)$ hence $H_2(\mathsf{H}(\sigma_{i, j, k}), \ZZ) =\ZZ^m$. Moreover, $\M^3(\sigma_{i,j,k})$ is a connected sum of $T^3$ with $m-3$ copies of $S^2 \times S^1$. In particular, there is only one tuple of three distinct surfaces in $\M^3(\sigma_{i,j,k})$, namely the three two cycles of $T^3$, such that they have non-trivial intersection and the number of their intersections is exactly one.

Let
\begin{align}
    \tau = \Pi_{i<j<k} (\sigma_{i,j,k})^{a_{i,j,k}}.
\end{align}
The mapping class $\tau$ also belongs to the subgroup $G(m)$ as it is a product of elements from the subgroup $G(m)$. Hence, $H_2(\mathsf{H}(\tau), \ZZ)=\ZZ^m$, and there is a natural labelling on the cycles in  $H_2(\mathsf{H}(\tau), \ZZ)=\ZZ^m$ coming from the labelling on the handles of $\mathsf{H}_m$. Let $\Gamma_i$ for $1\leq i \leq m$ denote the cycles in $H_2(\mathsf{H}(\tau), \ZZ)$. In Ref.~\cite{Sullivan} it is proved that
\begin{align}
    |\Gamma_i \cap \Gamma_j \cap \Gamma_k |_{\ZZ}:= \int_{\mathsf{H}(\tau)} \Gamma^i\cup \Gamma^j \cup \Gamma^k = a_{i,j,k}
\end{align}
where $\Gamma^i$ is the Poincaré dual in $H^1(\mathsf{H}(\tau), \ZZ)$ of the cycle $\Gamma_i$. 
   If we let $\overline{\Gamma}_i, \overline{\Gamma}_j$ and $\overline{\Gamma}_k$ be the image of the cycles in $H_2(\mathsf{H}(\tau); \ZZ_2)$ via universal coeffecient theorem, then we get that 
   \begin{align}
       |\overline{\Gamma}_i \cap \overline{\Gamma}_j \cap \overline{\Gamma}_k| = a_{i, j, k}\,\,mod\,2,
   \end{align}
where the left hand side is as defined in Eq.~\eqref{eq: algebraicintersection}.  
We can encode the triple intersection numbers of cycles in $H_2(\mathsf{H}(\tau), \ZZ_2)$ into the base interaction hypergraph introduced in Sec.~\ref{sec:interaction_hypergraph}. 

\begin{figure}[t]
 \includegraphics[width=1\columnwidth]{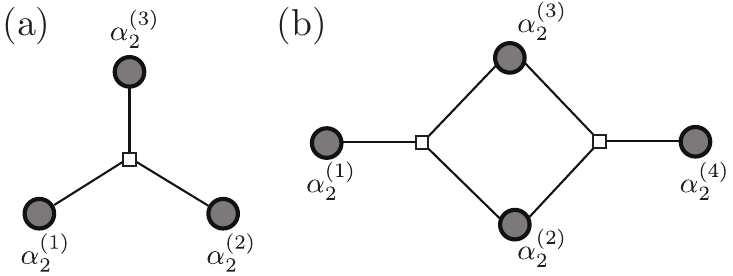}
  \caption{ (a) shows $T^3$, the grey vertices are the three 2-cycles and the tri-junction corresponds to the intersection point of the three cycles. (b) shows a manifold with four two cycles represented by the grey vertices. There are two triplets of two cycles which have algebraic intersection number one, these two triplets can be read off from the tri-junction.  }
\label{fig: intersectiongraph1}
\end{figure}
Figure \ref{fig: intersectiongraph1} (a) corresponds to the three torus, it is composed of a single prong corresponding to the single intersection of the three cycles. In Fig.~\ref{fig: intersectiongraph1} (b) we have four two cycles labelled $\alpha_2^{(i)}$ for $1\leq i \leq 4$ with 
\begin{align}
    |\alpha^{(1)}_2 \cap \alpha^{(2)}_2 \cap \alpha^{(3)}_2| =  |\alpha^{(2)}_2 \cap \alpha^{(3)}_2 \cap \alpha^{(4)}_2| = 1
\end{align}
and all other triplets of 2 cycles have intersection number zero. The manifold $\mathsf{H}(\tau)$ which corresponds to Fig.~\ref{fig: intersectiongraph1} is constructured by gluing two copies to $\mathsf{H}_4$, handlebody of genus four, by using the mapping class 
\begin{align}
    \tau= \sigma_{1, 2, 3}\cdot \sigma_{2, 3, 4}. 
\end{align}
If we impose the condition that for each two cycle $\alpha_2^{(i)}$, there exist exactly two triplets such that 
\begin{align}
    |\alpha^{(i)}_2 \cap \alpha^{(j)}_2 \cap \alpha^{(k)}_2| =  |\alpha^{(i)}_2 \cap \alpha^{(l)}_2 \cap \alpha^{(m)}_2| = 1
\end{align}
then the corresponding hypergraph is a sparse tree. 

\begin{figure}[hbt] \includegraphics[width=1\columnwidth]{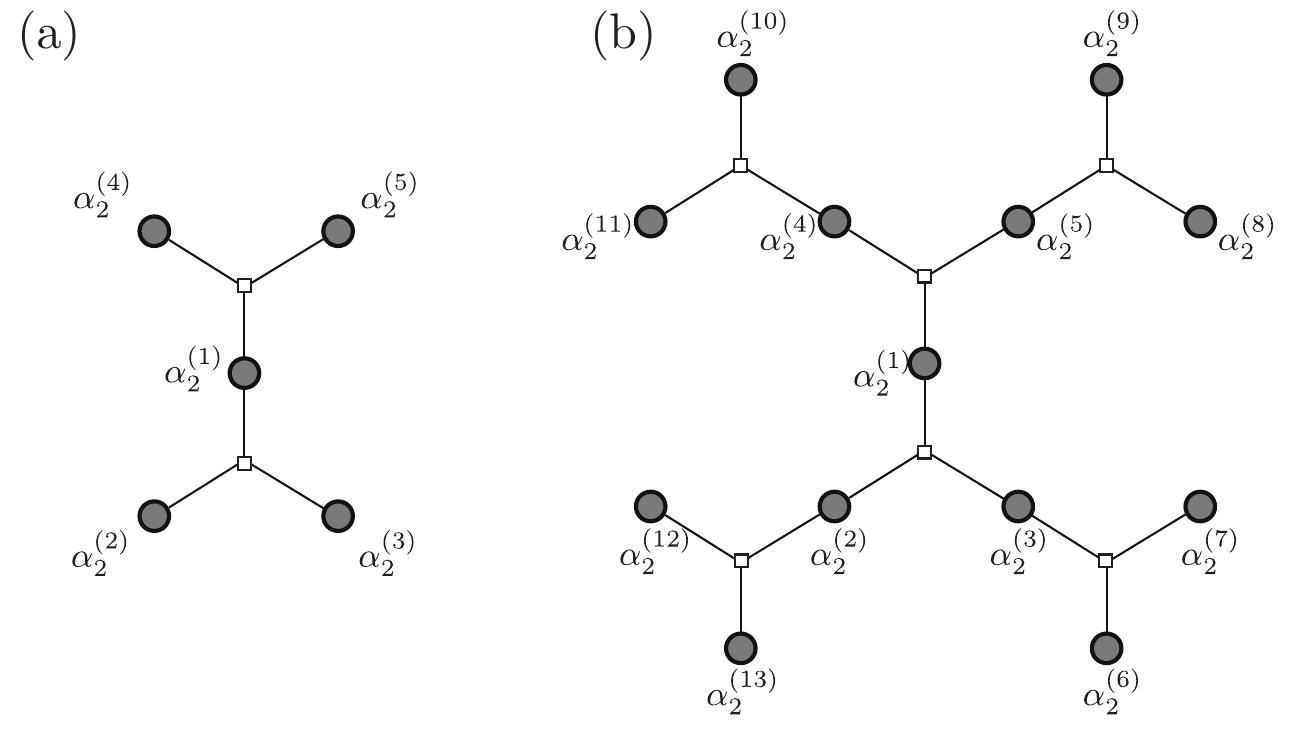}
  \caption{Sparse trees corresponding to triple intersection structure of three manifolds $\mathsf{H}(\tau)$. In (a) the manifold is obtained from gluing two handlebodies of genus five, and in (b) the manifold is obtained from gluing two handlebodies of genus thirteen. The corresponding mapping classes used to glue can found in Eq.~\eqref{eq: mappingclass1} and Eq.~\eqref{eq: mappingclass2}.  }
\label{fig: intersectiongraph2}
\end{figure}

Figure \ref{fig: intersectiongraph2} (a) shows a tree composed of two prongs, it consists of five two cycles $\alpha_2^{(i)}$ for $1 \leq i \leq 5$ with intersection numbers
\begin{align}
    |\alpha^{(1)}_2 \cap \alpha^{(2)}_2 \cap \alpha^{(3)}_2| =  |\alpha^{(1)}_2 \cap \alpha^{(4)}_2 \cap \alpha^{(5)}_2| = 1
\end{align}
The corresponding $\mathsf{H}(\tau)$ is constructed by gluing two copies of $\mathsf{H}_5$, the handle body of genus five, by using the mapping class 
\begin{align}
\label{eq: mappingclass1}
    \tau= \sigma_{1, 2, 3}\cdot \sigma_{1, 4, 5}. 
\end{align}
In Fig.~\ref{fig: intersectiongraph2} (b) a sparse tree corresponding to intersection structure of thirteen two cycles $\alpha_2^i$ for $ 1 \leq i \leq 13$ is shown. It is constructed by gluing two copies of $\mathsf{H}_{13}$, the handlebody of genus thirteen, by using the mapping class 
\begin{align}
\label{eq: mappingclass2}
    \tau= \sigma_{1, 2, 3}\cdot \sigma_{1, 4, 5}\cdot \sigma_{4, 11, 12}\cdot \sigma_{5, 9, 8}\cdot \sigma_{2, 12, 13} \cdot \sigma_{3, 6, 7}.
\end{align}

Nevertheless, while such a recipe gives the right topology which determines the desired logical gate structure, we need to further optimize the geometry such that the Betti number scales linearly or almost linearly with the volume and hence the encoding rate approaches constant, and at the mean time keep the systole  (distance) growing with the volume (number of qubits).  This remains a challenging task and we will further explore it in future works. 

\section{Proof of Theorem \ref{theorem:3-manifold}}\label{sec:proof}
Here, we provide the proof of Theorem \ref{theorem:3-manifold} in Sec.~\ref{sec:fibre_bundle}.
\begin{proof}
The volume of the twisted manifold $\M^3_g$ scales as
\begin{align}\label{eq:volume_vs_g}
\non V\equiv & vol(\M^3_g) \equiv vol(\Sigma_g \times [0,1])/\big((x,0) \sim (\tau x, 1)\big)   \\
=& vol(\Sigma_g \times [0,1]) = area(\Sigma_g) =4\pi (g-1) = O(g),
\end{align}
where the first equality in the second line uses the fact that the twist $\tau$ does not change the volume at all since it only changes the identification/gluing along the cut.  The second equality in the second line is due to the fact that the base circle $S^1$ has unit length, while  the third equality is due to the Gauss-Bonnet theorem Eq.~\eqref{eq:Gauss2}.

The homology $H_1(\M_g^3; \ZZ_2)$ can be computed from the action of $\tau$ on $H_1(\Sigma_g; \ZZ_2)$ which we denote by $\tau_*$. Let 
\begin{align}
    I(\tau_*):= \{v \in H_1(\Sigma_g; \ZZ_2) \, |\, \tau_*(v)=v\}
\end{align}
be the subspace space of homology classes which are invariant under the action of $\tau_*$. Notice that $I(\tau_*)$ corresponds to the eigen-subspace of eigenvalue one.  
As will be discussed later in details, Equation \eqref{invariant} in section \ref{sec:generic_3-manifold}  shows  that
\begin{align}
\label{eq:homology_twisted_manifold}
 H_1(\M_g^3; \ZZ_2) = \ZZ_2^{Rank\big(I(\tau_*)\big) +1 }
\end{align}
for any mapping class $\tau_*$. Since $\tau$ is a finite order element it has its associated finite covering of degree equal to the order $|\tau|$, 
\begin{align}
\label{covering}
 \pi:   \Sigma_g  \to  \Sigma_g/\langle\tau\rangle =: {_gS} 
\end{align}
which is unramified. The theory of covering spaces gives the following short exact sequence of groups; 
\begin{align}
   1\to \pi_1(\Sigma_g) \to \pi_1(\Sigma_g/\langle\tau\rangle ) \xrightarrow{q} \langle\tau\rangle \to 1
\end{align}
If $\alpha\subset \Sigma_g/\langle\tau\rangle$ is a simple closed geodesic such that $q(\alpha)$ is the identity element of the group $\langle \tau \rangle $, then the preimage $\pi^*(\alpha)$ is in $\pi_1(\Sigma_g)$, in particular it is a disjoint union of closed geodesics in $\Sigma_g$ with the number of components equal to the order $|\tau|$ and each component has length equal to the length of $\alpha$. If $q(\alpha)$ is not conjugate to the identity, then one still has that $\pi^*(\alpha)$ is a disjoint union of closed geodesics but the number of components depends on the image $q(\alpha)$ in $\tau$. The number of components is always less than or equal to the order $|\tau|$ and it always divides
the order $|\tau|$. See Ref.~\cite{Sarnak} section 3.10 and proposition 3.11 for details. 

Let $\alpha \subset \Sigma/\langle\tau\rangle $ be a simple closed loop which represents some non-trivial homology class in $H_1(\Sigma_g/\langle\tau\rangle, \ZZ_2)$. The lift $\pi^*(\alpha) \subset \Sigma_g$ is a disjoint union of $N(\alpha)$ loops where $N(\alpha)$ divides the order $|\tau|$;
\begin{align}
    \pi^*(\alpha) =\{\tilde{\alpha}_1, \dots , \tilde{\alpha}_{N(\alpha)} \}  
\end{align}
and there exists an element $\hat{\tau}_*$ in $\langle \tau_* \rangle$ where for $i$ in $\{1, \dots, N(\alpha)\}$ we have 
\begin{align}
    \hat{\tau}_* (\tilde{\alpha}_i) = \tilde{\alpha}_{i+1}\,\,\,\text{and}\,\,\, \hat{\tau}_*(\tilde{\alpha}_{N(\alpha)} ) =\tilde{\alpha}_1
\end{align}
In particular, the linear combination 
\begin{align}
    \tilde{\alpha}_1 + \dots + \tilde{\alpha}_{N(\alpha)}=:v
\end{align}
will be invariant under $\tau_*$. On the other hand, if $v$ in $H_1(\Sigma_g; \ZZ/2)$ is invariant under the action of $\tau_*$ then its push forward $\pi_*(v)$ is a non-trivial homology class in $H_1(\Sigma_g/\langle\tau\rangle; \ZZ_2)$. 
In figure \eqref{fig:covering} we show two loops $\alpha$ and $\beta$ on $\Sigma_g/\langle\tau\rangle$ in colors red and green respectively and their lifts to $\Sigma_g$ assuming that both lifts have five components. We also show a loop $\gamma$ on $\Sigma_g/\langle\tau\rangle$ in blue, and assume its lift has three components. 
In fact, we have a non-canonical isomorphism 
\begin{align}
\label{isomorphism}
I(\tau_*) \cong H_1(\Sigma_g/\langle\tau\rangle; \ZZ_2)
\end{align}
which implies 
\begin{align}
\label{eq:rank}
    Rank\big(I(\tau_*)\big) = 2 genus(\Sigma_g/\langle\tau\rangle )
\end{align}
An easy application of Riemann-Hurwitz formula gives 
\begin{align}
\label{eq: genus}
    \text{genus}(\Sigma_g/\langle\tau\rangle)=\frac{genus(\Sigma_g)-1}{|\tau|} -1 
\end{align}
Recall that $g:=\text{genus}(\Sigma_g)$ and $|\tau|= O(\log^{\frac{1}{2}}(g))$, equations \eqref{eq: genus}, \eqref{eq:rank}, and \eqref{eq:homology_twisted_manifold} now imply 
\begin{align}
\non b_1(\M^3_g) \equiv&    Rank\big(H_1(\M^3_g; \ZZ_2)\big) = O\bigg(\frac{2(g-1)}{\log^{\frac{1}{2}}(g)}-1\bigg) \\
=&   O(g/\log^{\frac{1}{2}}(g)) = O(V/\log^{\frac{1}{2}}(V)),
\end{align}
where the last equality is obtained via Eq.~\ref{eq:volume_vs_g}.

\begin{figure}[hbt]
  \includegraphics[width=1\columnwidth]{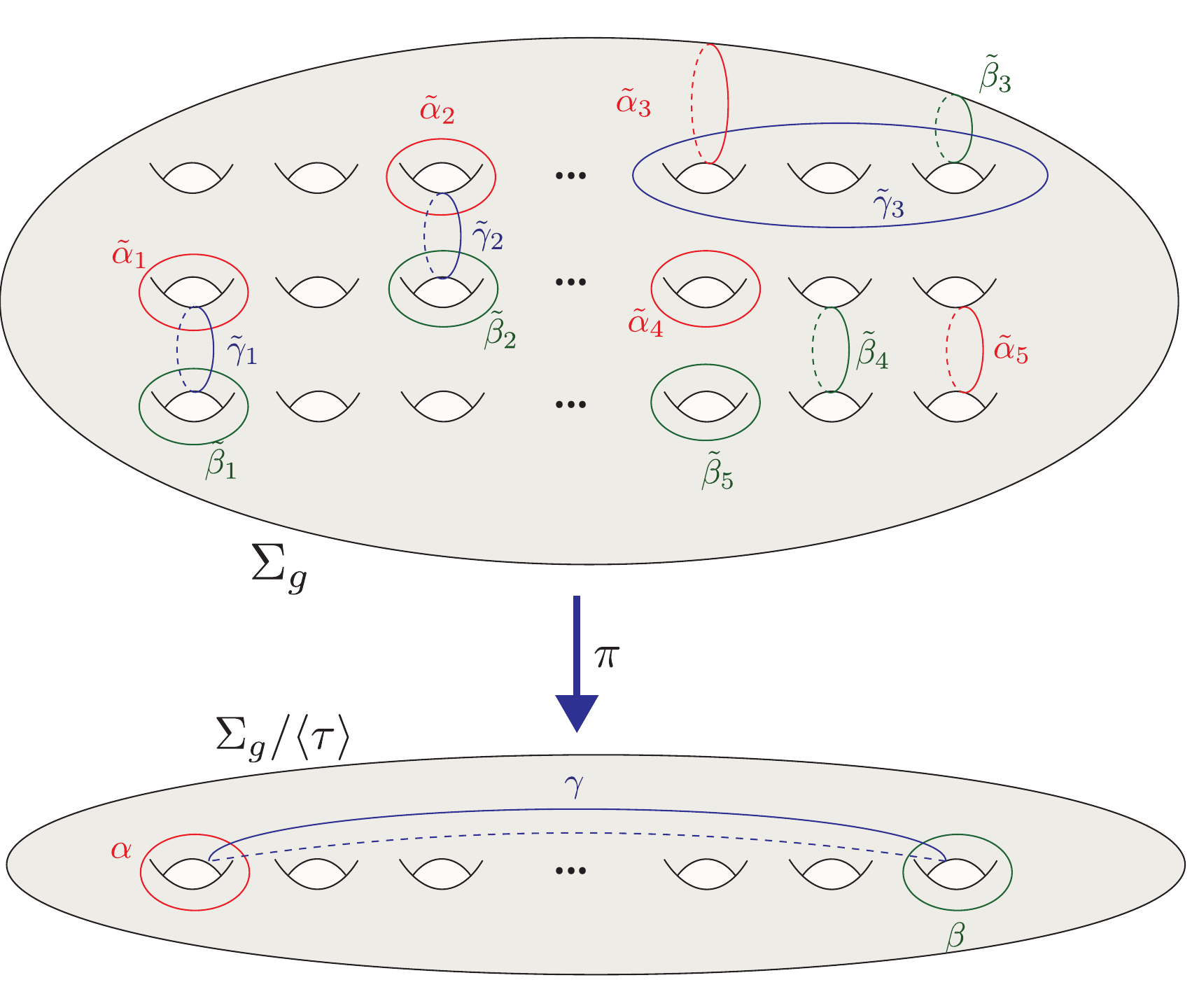}
  \caption{The covering from equation \eqref{covering} showing lift of three loops colored red, blue, and green. }
  \label{fig:covering}
\end{figure}

One can prove the lower bound of the 1-systole $sys_1(\M^3; \ZZ_2)$$=$$\Omega(\log^{\frac{1}{2}}(V))$ by contradiction.   We assume there exists a non-contractible loop $\gamma$ in $\M^3_g$ such that $length(\gamma) < O(\log^{\frac{1}{2}}(V))=O(\log^{\frac{1}{2}}(g))$. Consider lifting $\gamma$ in $\M^3_g$ to a curve $\tilde{\gamma}$ in the covering space $\Sigma_g \times \mathbb{R}$.   The lifted curve $\tilde{\gamma}$ starts at point $(p, t)$ and ends at $(\tau^m p, t+m)$ for some $m \in \ZZ$. Since the bundle projection $\pi: \M^3_g \rightarrow [0,1]/\langle 0=1 \rangle$ is length non-increasing, we have $|m| \le  length(\tilde{\gamma}) = length(\gamma) < O(\log^{\frac{1}{2}}(g)) $.   Since we have $order(\tau)= \Omega(\log^{\frac{1}{2}}(g))$ from Eq.~\eqref{eq:isometry_order_bound}, we obtain $|m| < order(\tau)$, which suggests $p \neq \tau^m p$. Note that under the covering map $\sigma:\Sigma_g \rightarrow \Sigma_g / \langle \tau \rangle =: {_gS}$,  both $p$ and $\tau^m p$ in $\Sigma_g$ are projected to the same point in $_gS$, therefore $p$ and $\tau^m p$ must differ by a non-trivial covering translation of the cover $\sigma$.   However, any non-trivial covering translation must move each point of the total space $\Sigma_g$ by no less than twice the injectivity radius of the base $_gS$ since the geodisc balls of radius smaller than the injectivity radius $R(_gS)$ must be disjoint.  According to Eq.~\eqref{eq:injectivity_radius_gS}, we have $R(_gS) \ge O(\log^\frac{1}{2} (g_k))$. Since the projection $\Sigma_g \times \mathbb{R} \rightarrow \Sigma_g$ is length nonincreasing,  we have
\be
length(\gamma) = length(\tilde{\gamma}) \ge d(p, \tau^m p) \ge O(\log^\frac{1}{2} (g_k)).  
\ee
This is contradictory to our initial assumption that $length(\gamma) < O(\log^{\frac{1}{2}}(g))$
and we conclude the proof of the lower bound of the 1-systole.

In the following, we prove the lower bound of the 2-systole.
Let $z$ represent a non-trivial 2-cycle in $H_2(\M_g; \mathbb{Z}_2)$. By the co-area formula we have

$$Area(z) \ge \int_0^{1/2} |z \cap \Sigma_g \times \{t\}| dt$$

So for some $t_0 \in [0, 1/2]$, we have $|z \cap \Sigma_g \times \{t_0\}| \le 2Area(z)$. 
If $z \cap \Sigma_g \times \{t_0\}$ is non-trivial in $H_1(\Sigma_g; \mathbb{Z}_2)$ then from $sys_1(\Sigma_g) \gtrsim log(g)$ we get that
$Area(z) \gtrsim log(g)$. Otherwise, $z \cap \Sigma_g \times \{t_0\}$ bounds a 2-cycle $F \subset \Sigma_g \times \{t_0\}$. Using Buser's inequality with $\lambda_1(\Sigma_g) \gtrsim 1$, we can find a $F$ with $Area(F) \lesssim Area(z)$. But then the cycle $z + F - F$ can be lifted to $\Sigma_g \times \mathbb{R}$ from which we can see that $Area(z)+2Area(F) \ge Area(\Sigma_g) \gtrsim g$. From the estimate on $Area(F)$ we get that $Area(z) \gtrsim g$. Thus, we see that $sys_2(\M_g) \gtrsim \log(g) =\Omega(\log V)$.
\end{proof}

\end{appendix}

\bibliography{mybib_merge.bib, Ben.bib, andrew.bib, TI, bibliography}

\end{document}